\newcommand{\conferenceorfull}[2]{#2}
\title{Decision Problems in Information Theory} 
\titlerunning{Decision Problems in Information Theory} 
\author{Mahmoud Abo Khamis}{relational\underline{AI}, Berkeley, CA, USA}{}{}{}
\author{Phokion G.\ Kolaitis}{UC Santa Cruz and IBM Research - Almaden,  CA, USA}{}{}{Research partially supported by NSF Grant IIS-1814152.}
\author{Hung Q.\ Ngo}{relational\underline{AI}, Berkeley, CA, USA}{}{}{}
\author{Dan Suciu}{University of Washington, Seattle, WA,
  USA}{}{}{Research partially supported by NSF IIS-1907997, NSF III-1703281, NSF III-1614738, and NSF AiTF-1535565}
\authorrunning{M.\,Abo Khamis, P.\,G. Kolaitis, H.\,Q. Ngo, and D. Suciu} 
\keywords{Information theory,
    decision problems,
    arithmetical hierarchy,
    entropic functions} 
\colorlet{LightViolet}{violet!40}
\colorlet{LightRed}{red!40}
\colorlet{LightOrange}{orange!40}
\colorlet{LightGreen}{green!40}
\colorlet{LightBlue}{blue!40}
\colorlet{DarkGreen}{green!50!black}
\colorlet{DarkRed}{red!70!black}
\colorlet{DarkCyan}{red!70!black}
\colorlet{DarkBlue}{blue!80!black}
{\definecolor{DarkOrange}{rgb}{1.0, 0.49, 0.0}
\definecolor{Airforceblue}{rgb}{0.36, 0.54, 0.66}


\newcommand{\defeq}{\stackrel{\mathrm{def}}{=}}

\renewcommand{\hom}{\text{\sf HOM}}







\newcommand{\calF}{\mathcal F}

\newcommand{\Z}{\mathbb Z} 
\newcommand{\N}{\mathbb N} 
\newcommand{\R}{\mathbb R} 
\newcommand{\Q}{\mathbb Q} 




\newcommand{\be}{\begin{enumerate}}
\newcommand{\ee}{\end{enumerate}}
\newcommand{\bi}{\begin{itemize}}
\newcommand{\ei}{\end{itemize}}
\newcommand{\beq}{\begin{equation}}
\newcommand{\eeq}{\end{equation}}

\newcommand{\bp}{\begin{proof}}
\newcommand{\ep}{\end{proof}}
\newcommand{\bcor}{\begin{corollary}}
\newcommand{\ecor}{\end{corollary}}
\newcommand{\bthm}{\begin{theorem}}
\newcommand{\ethm}{\end{theorem}}
\newcommand{\blmm}{\begin{lemma}}
\newcommand{\elmm}{\end{lemma}}
\newcommand{\bdefn}{\begin{definition}}
\newcommand{\edefn}{\end{definition}}
\newcommand{\bprop}{\begin{proposition}}
\newcommand{\eprop}{\end{proposition}}
\newcommand{\brmk}{\begin{remark}}
\newcommand{\ermk}{\end{remark}}
\newcommand{\bclm}{\begin{claim}}
\newcommand{\eclm}{\end{claim}}
\newcommand{\bex}{\begin{example}}
\newcommand{\eex}{\end{example}}

\newcommand{\norm}[1]{\|#1\|}

\newcommand{\suchthat}{\ | \ }

\newcommand{\vspan}{\mathop{\textnormal{span}}}

\newcommand{\rank}[1]{\mathop{\textnormal{rank}} #1}
\newcommand{\mv}[1]{\bm{#1}}

\newtheorem{fact}[theorem]{Fact}

\definecolor{Red}{RGB}{255,204,204}
\definecolor{Green}{RGB}{204,255,204}
\definecolor{Blue}{RGB}{204,204,255}

\newcommand{\set}[1]{\{#1\}}                    
\newcommand{\setof}[2]{\{{#1}\mid{#2}\}}        

\newcommand{\convex}{\text{\sf conv}}
\newcommand{\cone}{\text{\sf cone}}
\newcommand{\con}{\text{\sf cone}}
\newcommand{\cl}[1]{\overline{{#1}}}

\newcommand{\ebic}{\text{\sf EBIC}}
\newcommand{\aebic}{\text{\sf AEBIC}}

\newcommand{\iip}{\text{\sf IIP}}
\newcommand{\econdiip}{\text{\sf ECIIP}}
\newcommand{\aecondiip}{\text{\sf AECIIP}}
\newcommand{\maxiip}{\text{\sf MaxIIP}}
\newcommand{\entropic}{\Gamma_n^*}
\newcommand{\aentropic}{\overline \Gamma_n^*}

\newcommand{\eat}[1]{}

\begin{document}

\maketitle

\begin{abstract}
  Constraints on entropies are considered to be the laws of information theory.  Even though the pursuit of their discovery has been a central theme of research in information theory, the algorithmic aspects of constraints on entropies remain largely unexplored.  Here, we initiate an investigation of decision problems about constraints on entropies by placing several different such problems into levels of the arithmetical hierarchy.  We establish the following results on checking the validity over all almost-entropic functions:  first, validity of a Boolean information constraint arising from a monotone Boolean formula is co-recursively enumerable; second,  validity of ``tight'' conditional information constraints is in $\Pi^0_3$. Furthermore,  under some restrictions, validity of conditional information constraints ``with slack'' is in $\Sigma^0_2$, and  validity of  information inequality constraints involving max is Turing equivalent to validity of  information inequality constraints (with no max involved).  We also prove that the classical implication problem for conditional independence statements is co-recursively enumerable.
\end{abstract}

\section{Introduction}
\label{sec:intro}


The study of constraints on entropies is a central topic of research in information theory. In fact, more than 30 years ago, Pippenger~\cite{pippenger1986} asserted that
constraints on entropies are the ``{\em laws of information theory}''
and asked whether the {\em polymatroidal axioms} form the complete laws of information theory, i.e., whether
 every constraint on entropies can be derived from the polymatroidal axioms.
These axioms consist of the following three types of constraints:
 (1) $H(\emptyset) = 0$, (2) $H(X) \leq H(X \cup Y)$ (monotonicity),
and (3) $H(X) + H(Y) \geq H(X \cap Y) + H(X \cup Y)$ (submodularity).
It is known that the polymatroidal axioms are equivalent to Shannon's basic inequalities, that is, to the non-negativity of the entropy, conditional entropy, mutual information, and conditional mutual information~\cite{yeung2012book}.
In a celebrated result published in
1998, Zhang and Yeung~\cite{zhang1998characterization} answered Pippenger's
question negatively by finding a linear inequality that is satisfied by all entropic functions, but cannot be derived from the polymatroidal axioms.

Zhang and Yeung's result became the catalyst for the discovery of
other information laws that are not captured by the polymatroidal
axioms (e.g.,
\cite{DBLP:journals/tit/KacedR13,DBLP:conf/isit/Matus07}).  In
particular, we now know that there are more elaborate laws, such as
conditional inequalities, or inequalities expressed using $\max$,
which find equally important applications in a variety of areas. For
example,  implications between conditional independence statements
of discrete random variables can be expressed as conditional
information inequalities.  In another example, we have recently shown
that conjunctive query containment under bag semantics is at least as
hard as checking information inequalities using
$\max$~\cite{DBLP:journals/corr/abs-1906-09727}.  Despite the
extensive research on various kinds of information inequalities, to
the best of our knowledge nothing is known about the algorithmic
aspects of the associated decision problem: check whether a given
information law is valid.

In this paper, we initiate a study of algorithmic problems that arise
naturally in information theory, and establish several results.  To this
effect, we introduce a generalized form of information inequalities,
which we call {\em Boolean information constraints}, consisting of
Boolean combinations of linear information inequalities, and define
their associated decision problems. Since it is still an open problem
whether linear information inequalities, which are the simplest kind
of information laws, are decidable,  we focus on placing
these decision problems in the arithmetical hierarchy, also known as
the Kleene-Mostowski hierarchy \cite{rogers1967theory}.  The
arithmetical hierarchy has been studied by mathematical logicians since
the late 1940s; moreover, it directly influenced the
introduction and study of the polynomial-time hierarchy by Stockmeyer
\cite{stockmeyer1976polynomial}. The first level of the arithmetical hierarchy consists
of the collection $\Sigma^0_1$ of all recursively enumerable sets and
the collection $\Pi^0_1$ of the complements of all recursively
enumerable sets. The higher levels $\Sigma^0_n$ and $\Pi^0_n$,
$n\geq 2$, are defined using existential and universal quantification
over lower levels.   
%
%
We prove a number of results, including the following.

\begin{itemize}
\item[(1)] Checking the validity of a Boolean information constraint
  arising from a monotone Boolean formula (in particular, a
  max information inequality) is in $\Pi^0_1$
  (Theorem~\ref{th:co:re}).
\item[(2)] Checking the validity of a conditional information
  inequality whose antecedents are ``tight'' is in $\Pi^0_3$
  (Corollary~\ref{cor:lambda:1:tight}).  ``Tight'' inequalities are
  defined in Section~\ref{sec:beyond}, and include conditional
  independence assertions between random variables.
\item[(3)] Checking the validity of a conditional information
  inequality whose antecedents have ``slack'' and are group-balanced
  is in $\Sigma^0_2$ (Corollary~\ref{cor:lambda:1:slack}).
\item[(4)] Checking the validity of a group-balanced, max information
  inequality is Turing equivalent to checking the validity of an
  information inequality (Corollary~\ref{cor:maxiip:iip}).
\end{itemize}


While the decidability of linear information inequalities (the
simplest kind considered in this paper) remains open, a separate
important question is whether more complex Boolean information
constraints are any harder.  For example, some conditional
inequalities, or some $\max$-inequalities can be proven from a simple
linear inequality, hence they do not appear to be any harder.  However, Kaced
and Romashchenko~\cite{DBLP:journals/tit/KacedR13} proved that there
exist conditional inequalities that are {\em essentially conditional}, which means 
 that they do not follow from a linear inequality.  (We give an
example in Equation~\eqref{eq:k:s}.)  We prove here that any conditional
information inequality with slack is essentially {\em unconditioned}
(Corollary~\ref{cor:lambda:many}; see also
Equation(\ref{eq:lambda:many:consequent:slack})), and that any
$\max$-inequality also follows from a single linear inequality
(Theorem~\ref{th:lambdas:real}).

A subtle complication involving these results is whether by
``validity'' it is meant that the given Boolean information constraint
holds for the set of all entropic vectors over $n$ variables, denoted by
$\entropic$, or for its topological closure, denoted by $\aentropic$.
It is well known that these two spaces differ for all $n\geq 3$.  With
the exception of (1) above, which holds for both $\entropic$ and
$\aentropic$, our results are only for $\aentropic$.  A problem of
special interest is the implication between conditional independence
statements of discrete random variables, and this amounts to
checking the $\entropic$-validity of a tight conditional information
inequality; it is known that this problem is not finitely
axiomatizable~\cite{StudenyCINoCharacterization1990}, and its
decidability remains open.  Our result (2) above does not apply here
because it is a statement about $\aentropic$-validity.  However, we
prove  that the implication problem for conditional independence
statements is in $\Pi^0_1$ (Theorem~\ref{th:ci:in:pi1}).


\section{Background and Notations}

Throughout this paper, vectors and tuples are denoted by bold-faced
letters, and random variables are capitalized. We write
$\bm x \cdot \bm y \defeq \sum_i x_iy_i$ for the dot product of
$\bm x, \bm y \in \R^m$.  For a given set $S \subseteq \R^m$,
$S$ is {\em convex} if $\bm x, \bm y \in S$ and $\theta \in [0,1]$
implies $\theta \bm x + (1-\theta)\bm y \in S$; $S$ is called a {\em
  cone} if $\mv x \in S$ and $\theta \geq 0$ implies
$\theta \mv x \in S$; the topological closure of $S$ is denoted by
$\cl{S}$; and, finally,
$S^* \defeq \setof{\bm y}{\forall \bm x \in S, \bm x \cdot \bm y \geq
  0}$ denotes the {\em dual cone} of $S$.  It is known that $S^*$ is
always a closed, convex cone.  
We provide more background in \conferenceorfull{the full version~\cite{arxiv-version}}{Appendix~\ref{appendix:cones}}.


For a random variable $X$ with a fixed finite domain $D$ and a probability mass
function (pmf) $p$, its (binary) {\em entropy} is defined by
\begin{align}
    H(X) &\defeq  - \sum_{x \in D} p(x) \cdot \log p(x) \label{eq:h:sum:p}
\end{align}
In this paper all logarithms are in base $2$.  



Fix a joint distribution over $n$ finite random variables
$\bm V \defeq \{X_1,\dots,X_n\}$.  For each $\alpha \subseteq [n]$,
let $\bm X_\alpha$ denote the random (vector-valued) variable
$(X_i : i \in \alpha)$.  Define the set function
$h : 2^{[n]} \rightarrow \R_+$ by setting
$h(\alpha) \defeq H(\bm X_\alpha)$, for all $\alpha \subseteq [n]$.
With some abuse, we blur the distinction between the set $[n]$ and the
set of variables $\bm V = \set{X_1, \ldots, X_n}$, and write
$H(\bm X_\alpha)$, $h(\bm X_\alpha)$, or $h(\alpha)$ interchangeably.
We call the function $h$ an {\em entropic function}, and also identify
it with a vector
$\bm h \defeq  (h(\alpha))_{\alpha \subseteq [n]} \in \R^{2^n}_+$, 
which is called an {\em entropic vector}.  Note that most texts and
papers on this topic drop the component $h(\emptyset)$, which is
always 0, leading to entropic vectors in $\R^{2^n-1}$.  We prefer to
keep the $\emptyset$-coordinate to simplify notations.  The implicit
assumption $h(\emptyset)=0$ is used through the rest of the paper.

The set of entropic functions/vectors is denoted by
$\entropic \subseteq \R^{2^n}_+$.  Its topological closure, denoted
by $\aentropic$, is the set of {\em almost entropic}
vectors (or functions).  It is known~\cite{yeung2012book} that
$\entropic \subsetneq \aentropic$ for $n \geq 3$.  In general,
$\entropic$ is neither a cone nor convex, but its topological closure
$\aentropic$ is a closed convex cone~\cite{yeung2012book}.

Every entropic function $h$ satisfies the following {\em basic Shannon
  inequalities}:
\begin{align*}
    h(\bm Y \cup \bm X) \geq h(\bm X) && h(\bm X) + h(\bm Y) \geq h(\bm X \cup \bm Y) + h(\bm X \cap \bm Y)
\end{align*}
called {\em monotonicity} and {\em submodularity} respectively.  Any
inequality obtained by taking a positive linear combination of Shannon
inequalities is called a {\em Shannon-type inequality}.

Throughout this paper we will abbreviate the union $\bm X \cup \bm Y$
of two sets of variables as $\bm X \bm Y$.  The quantities
$h(\bm Y | \bm X) \defeq  h(\bm X \bm Y) - h(\bm X)$ and 
$I_h(\bm Y; \bm Z | \bm X) \defeq  h(\bm X \bm Y) + h(\bm X \bm Z) -
h(\bm X \bm Y \bm Z) - h(\bm X)$
are called the {\em conditional entropy} and the {\em conditional
  mutual information} respectively.  It can be easily checked that
$h(\bm Y | \bm X) \geq 0$ and $I_h(\bm Y; \bm Z|\bm X) \geq 0$ are
Shannon-type inequalities.

\begin{remark}
  The established notation
  $\entropic$~\cite{Yeung:2008:ITN:1457455,DBLP:journals/tit/ZhangY97,DBLP:conf/isit/Chan07}
  for the set of entropic vectors is unfortunate, because the star in
  this context does {\bf not} represent the dual cone.  We will
  continue to denote by $\entropic$ the set of entropic vectors (which
  is not a cone!), and use explicit parentheses, as in
  $(\entropic)^*$, to represent the dual cone.
\end{remark}

\section{Boolean information Constraints}

Most of this paper considers the following problem: given a Boolean
combination of information inequalities, check whether it is valid.
However in
  Section~\ref{sec:recognizability} we briefly discuss the dual problem,
namely, recognizing whether a given vector $\bm h$ is an entropic
vector (or an almost entropic vector).

A {\em Boolean function} is a function
$F : \set{0,1}^m \rightarrow \set{0,1}$.  We often denote its inputs
with variables $Z_1, \ldots, Z_m \in \set{0,1}$, and write
$F(Z_1,\ldots, Z_m)$ for the value of the Boolean function.

\subsection{Problem Definition}

A vector $\bm c \in \R^{2^n}$ defines the following (linear) {\em information inequality}:
\begin{align}
    \bm c \cdot \bm h = \sum_{\alpha \subseteq [n]} c_\alpha h(X_\alpha) & \geq 0 \label{eq:iip}.
\end{align}
The information inequality is said to be {\em valid} if it holds for
all vectors $\bm h \in \entropic$; equivalently, $\bm c$ is in the
dual cone, $\bm c \in (\entropic)^*$.  By continuity, an information
inequality holds $\forall \bm h \in \entropic$ iff it holds
$\forall \bm h \in \aentropic$. In 1986,
Pippenger~\cite{pippenger1986} defined the ``{\em laws of information
  theory}'' as the set of all information inequalities, and asked
whether all of them are Shannon-type inequalities.  This was answered
negatively by Zhang and Yeung in
1998~\cite{zhang1998characterization}.  We know today that several
applications require more elaborate laws, such as max-inequalities and
conditional inequalities. Inspired by these new laws, we define the
following generalization.

\begin{definition}
  To each Boolean function $F$ with $m$ inputs, and every $m$ vectors
  ${\bm c}_j \in \R^{2^n}, j \in [m]$, we associate the following {\em
    Boolean information constraint}:
\begin{align}
  F({\bm c}_1 \cdot \bm h\geq 0, \ldots, {\bm c}_m \cdot \bm h \geq 0).
\label{eq:entropic:gip}
\end{align}
\end{definition}

For a set $S \subseteq \R^{2^n}$, a Boolean information constraint is
said to be {\em $S$-valid} if it holds for all $\bm h \in S$.  Thus,
we will distinguish between $\entropic$-validity and
$\aentropic$-validity.
Unlike in the case of information inequalities,
these two notions of validity no longer coincide for arbitrary Boolean information constraints in general, as we
explain in what follows.

\begin{definition}
  Let $F$ be a Boolean function.  The {\em entropic Boolean
    information constraint} problem parameterized by $F$, denoted
  by $\ebic(F)$, is the following: given $m$ integer vectors
  ${\bm c}_j \in \Z^{2^n}$, where $j \in [m]$, check whether the constraint
  \eqref{eq:entropic:gip} holds for all entropic functions
  $\bm h \in \entropic$.  In the {\em almost-entropic} version,
  denoted by $\aebic(F)$, we replace $\entropic$ by $\aentropic$.
\end{definition}

The inputs ${\bm c}_j, j \in [m]$, to these problems are required to
be integer vectors in order for $\ebic(F)$ and $\aebic(F)$ to be
meaningful computational problems.  Equivalently, one can require the
inputs to be rational vectors ${\bm c}_j \in \Q^{2^n}, j \in [m]$.

Let $F$ be a Boolean function.  $F$ can be written as a conjunction of
clauses $F = C_1 \wedge C_2 \wedge \cdots$, where each clause is a
disjunction of literals.  Equivalently, a clause $C$ has this form:
\begin{align}
  & (Z_1' \wedge \cdots \wedge Z_k') \Rightarrow (Z_1 \vee \cdots \vee Z_\ell)
\label{eq:clause}
\end{align}
where $Z_1', \ldots, Z_k', Z_1, \ldots, Z_\ell$ are distinct Boolean
variables.  Checking $\ebic(F)$ is equivalent to checking $\ebic(C)$,
for each clause of $F$ (and similarly for $\aebic(F)$); therefore and
without loss of generality, we will assume in the rest of the paper
that $F$ consists of a single clause~\eqref{eq:clause} and study the
problem along these dimensions:

{\bf Conditional and Unconditional Constraints} When $k=0$ (i.e., when
the antecedent is empty), the formula $F$ is {\em monotone}, and we
call the corresponding Boolean information constraint {\em
  unconditional}.  When $k>0$, the formula $F$ is {\em non-monotone},
and we call the corresponding constraint {\em conditional}.
{\bf Simple and Max Constraints} When $k = 0$ and $\ell=1$, then we say that $F$
defines a {\em simple} inequality; when $k = 0$ and $\ell>1$, then we say that $F$
defines a {\em $\max$-inequality}.  The case when $\ell=0$ and $k>0$ is not
interesting because $F$ is not valid, since the zero-vector
$\bm h = \bm 0$ violates the constraint.

\eat{
\begin{description}
    \item[Antecedent] When $k=0$, the formula $F$ is {\em monotone}, and we call the
corresponding Boolean information constraint {\em unconditional}. When $k>0$, the formula $F$ is {\em
non-monotone}, and we call the corresponding constraint {\em conditional}.
    \item[Consequent] When $m=1$, we say that $F$ defines a {\em simple
        inequality}.  When $m>1$, we say that $F$ defines a
        {\em $\max$-inequality}.  The case when $m=0$ is not interesting because $F$
  is not valid, since the zero-vector $\bm h = \bm 0$ violates the constraint.
\end{description}
}

%
%
%
%
%


\subsection{Examples and Applications}
\label{subsec:examples}

This section presents examples and applications of Boolean Function
Information Constraints and their associated decision problems.  A
summary of the notations is in Fig.~\ref{fig:notations}.

\begin{figure}
  \centering
  \begin{tabular}{|l|c|c|l|} \hline\hline
       &\multicolumn{2}{c|}{Abbreviation} &  \\ \cline{2-3}
Problem& Entropic & Almost-       &  Simple Example       \\
       &          & entropic       &         \\ \hline\hline
Boolean information & $\ebic(F)$ & $\aebic(F)$ & $h(XY) \leq \frac{2}{3}h(XYZ)\Rightarrow$ \\
constraint  & & & $\max(h(YZ),h(XZ)) \geq \frac{2}{3}h(XYZ)$ \\ \hline
Information Inequality & \multicolumn{2}{c|}{$\iip$} & $h(XY)+h(YZ)+h(XZ)\geq 2h(XYZ)$\\ \hline
Max-Information Inequality & \multicolumn{2}{c|}{$\maxiip$} & $\max(h(XY),h(YZ),h(XZ))\geq\frac{2}{3}h(XYZ)$\\ \hline
Conditional Information  & $\econdiip$ & $\aecondiip$ & $((h(XY) \leq \frac{2}{3}h(XYZ)) \wedge (h(YZ) \leq \frac{2}{3}h(XYZ)))$\\
Inequality & & & $\Rightarrow h(XZ) \geq \frac{2}{3}h(XYZ)$ \\ \hline
Conditional Independence & CI & (no name) & $(I(X;Y)=0 \wedge I(X;Z|Y)=0) \Rightarrow I(X;Z)=0$ \\ \hline\hline
  \end{tabular}
  \caption{Notations for various Boolean Information Constraint
    Problems.}
  \label{fig:notations}
\end{figure}

\subsubsection{Information Inequalities} \label{subsubsection:ii} We
start with the simplest form of a Boolean information constraint,
namely, the linear information inequality in Eq.~\eqref{eq:iip}, which
arises from the single-variable Boolean formula $Z_1$.  We will call
the corresponding decision problem the {\em information-inequality
  problem}, denoted by $\iip$: given a vector of integers $\bm c$,
check whether Eq.~\eqref{eq:iip} is $\entropic$-valid or,
equivalently, $\aentropic$-valid.  Pippenger's question from 1986 was
essentially a question about decidability.  Shannon-type inequalities
are decidable in exponential time using linear programming methods, and software packages have been developed for this purpose \cite[Chapter 13]{yeung2012book} (it is not known, however, if there is a matching lower bound in the complexity of this problem).
Thus,  if every information inequality were a
Shannon-type inequality, then information inequalities would be decidable.  However, Zhang and
Yeung's gave the first example of a non-Shannon-type information
inequality~\cite{zhang1998characterization}.  Later, Mat{\'{u}}{\v
  s}~\cite{DBLP:conf/isit/Matus07} proved that, when $n\geq 4$
variables, there exists infinitely many inequivalent non-Shannon
entropic inequalities.  More precisely, he proved that the following
is a non-Shannon inequality, for every $k \geq 1$:
\begin{align}
I_h(C;D|A) +\frac{k+3}{2}I_h(C;D|B) + I_h(A;B) + \frac{k-1}{2}I_h(B;C|D)+\frac{1}{k}I_h(B;D|C)
\geq I_h(C;D)
\label{eq:matus}
\end{align}
This ruined any hope of proving decidability of information
inequalities by listing a finite set of axioms.  To date, the study of
non-Shannon-type inequalities is an active area of
research~\cite{MR2042797,MR1958013,MR1907510}, and the question
whether $\iip$ is decidable remains open.

Hammer et al.\ \cite{hammer2000inequalities}, showed that, up to
logarithmic precision, information inequalities are equivalent to
linear inequalities in Kolmogorov complexity (see also \cite[Theorem
3.5]{grunwald2004shannon}).

\subsubsection{Max Information Inequalities} \label{subsubsec:maxiip}
Next, we consider constraints defined by a disjunction of linear
inequalities, in other words
$(\bm c_1 \cdot \bm h \geq 0) \vee \cdots \vee (\bm c_m \cdot \bm h
\geq 0)$, where $\bm c_j \in \R^{2^n}$.  This is equivalent to:
\begin{align}
\max(\bm c_1 \cdot \bm h, \bm c_2 \cdot \bm h, \ldots, \bm c_m \cdot  \bm h) \geq 0
\end{align}
and, for that reason, we call them {\em Max information inequalities}
and denote the corresponding decision problem by $\maxiip$. As before,
$\entropic$-validity and $\aentropic$-validity coincide.

{\bf Application to Constraint Satisfaction and Database Theory}
Given two finite structures
%
%
$\bm A$ and $\bm B$, we write $\hom({\bm A}, {\bm B})$ for the set of
homomorphisms from ${\bm A}$ to ${\bm B}$.  We say that ${\bm B}$ {\em
  dominates} structure ${\bm A}$, denote by ${\bm A} \preceq {\bm B}$,
if for every finite structure ${\bm C}$, we have that
$|\hom({\bm A}, {\bm C})| \leq |\hom({\bm B}, {\bm C})|$.  The {\em
  homomorphism domination problem} asks whether
${\bm A} \preceq {\bm B}$, given ${\bm A}$ and ${\bm B}$.  In database
theory this problem is known as the {\em query containment problem
  under bag semantics}~\cite{DBLP:conf/pods/ChaudhuriV93}.  In that
setting we are given two Boolean conjunctive queries $Q_1, Q_2$, which
we interpret using bag semantics, i.e., given a database $D$, the
answer $Q_1(D)$ is the number of homomorphisms
$Q_1 \rightarrow D$~\cite{DBLP:conf/pods/KolaitisV98}.  {\em $Q_1$ is
  contained in $Q_2$ under bag semantics} if $Q_1(D) \leq Q_2(D)$ for
every database $D$.  It is open whether the homomorphism domination
problem is decidable.

Kopparty and Rossman~\cite{HDE} described a $\maxiip$ problem that yields
a sufficient condition for homomorphism domination.  In recent
work~\cite{DBLP:journals/corr/abs-1906-09727} we proved that, when
$\bm B$ is acyclic, then that condition is also necessary, and,
moreover, the domination problem for acyclic $\bm B$ is
Turing-equivalent to $\maxiip$.  Hence, any result on the complexity
of $\maxiip$ immediately carries over to the homomorphism domination
problem for acyclic $\bm B$, and vice versa.

We illustrate here Kopparty and Rossman's $\maxiip$ condition on a
simple example.  Consider the following two Boolean conjunctive
queries: $Q_1() = R(u,v)\wedge R(v,w) \wedge R(w,u)$,
$Q_2() = R(x,y)\wedge R(x,z)$; interpreted using bag semantics, $Q_1$
returns the number of triangles and $Q_2$ the number of V-shaped
subgraphs.  Kopparty and Rossman proved that $Q_1 \preceq Q_2$ follows
from the following max-inequality:

{\footnotesize
\begin{align}
    \max \bigl\{ 2h(XY) - h(X) - h(XYZ), 2h(YZ) - h(Y)-h(XYZ),
    2h(XZ) - h(Z) -  h(XYZ) \bigr\} & \geq 0\label{eq:kopparty:rossman}
\end{align}
}
%

\subsubsection{Conditional Information  Inequalities} \label{subsubsection:cond:iip} A {\em conditional
  information inequality} has the form:
\begin{align}
  (\bm c_1 \cdot \bm h \geq 0 \wedge \cdots \wedge \bm c_k  \cdot \bm h \geq 0) \Rightarrow
  \bm c_0 \cdot \bm h \geq 0
\label{eq:cond:ii}
\end{align}
%
Here we need to distinguish between $\entropic$-validity and
$\aentropic$-validity, and denote by $\econdiip$ and $\aecondiip$ the
corresponding decision problems.  Notice that, without loss of
generality, we can allow equality in the antecedent, because
$\bm c_i \cdot \bm h = 0$ is equivalent to
$\bm c_i \cdot \bm h \geq 0 \wedge - \bm c_i \cdot \bm h \geq 0$.

Suppose that there exist $\lambda_1 \geq 0, \ldots, \lambda_m \geq 0$
such that the inequality
$\bm c_0 \cdot \bm h - (\sum_i \lambda_i \bm c_i \cdot \bm h) \geq 0$
is valid; then Eq.~\eqref{eq:cond:ii} is, obviously, also valid.
Kaced and Romashchenko~\cite{DBLP:journals/tit/KacedR13} called
Eq. (\ref{eq:cond:ii}) an {\em essentially conditioned inequality} if
no such $\lambda_i$'s exist, and discovered several valid conditional
inequalities that are essentially conditioned.

{\bf Application to Conditional Independence}
Fix three set of random variables $\bm X, \bm Y, \bm Z$.  A {\em
  conditional independence} (CI) statement is a statement of the form
$\phi = (\bm Y \upmodels \bm Z \suchthat \bm X)$, and it asserts that
$\bm Y$ and $\bm Z$ are independent conditioned on $\bm X$.  A {\em CI
  implication} is a statement
$\varphi_1 \wedge \cdots \wedge \varphi_k \Rightarrow \varphi_0$,
where $\varphi_i, i \in \{0,\dots,k\}$ are CI statements.  The {\em CI
  implication
  problem} 
is: given an implication, check if it is valid for all discrete
probability distributions.  Since
$(\bm Y \upmodels \bm Z \suchthat \bm X) \Leftrightarrow I_h(\bm Y;
\bm Z | \bm X) = 0 \Leftrightarrow -I_h(\bm Y; \bm Z | \bm X) \geq 0$,
the CI implication problem is a special case of $\econdiip$.

The CI implication problem has been studied extensively in the
literature~\cite{DBLP:journals/tse/Lee87,StudenyCINoCharacterization1990,GeigerPearl1993,DBLP:journals/corr/abs-1812-09987}.
Pearl and Paz~\cite{DBLP:conf/ecai/PearlP86} gave a sound, but
incomplete, set of {\em graphoid axioms},
Studen\'y~\cite{StudenyCINoCharacterization1990} proved that no finite
axiomatization exists, while Geiger and Pearl~\cite{GeigerPearl1993}
gave a complete axiomatization for two restricted classes, called
saturated, and marginal CIs.
See~\cite{DBLP:conf/uai/WaalG04,DBLP:journals/ipl/GyssensNG14,DBLP:journals/ai/NiepertGSG13}
for some recent work on the CI implication problem.  The decidability
of the CI implication problem remains open to date.

Results in~\cite{DBLP:journals/tit/KacedR13} imply that the following
CI implication is essentially conditioned
(see~\cite{DBLP:journals/corr/abs-1812-09987}):
\begin{align}
 I_h(C;D|A) = I_h(C;D|B) = I_h(A;B) = I_h(B;C|D)=0 & \Longrightarrow I_h(C;D)=0
\label{eq:k:s}
\end{align}
While a CI implication problem is an instance of an {\em entropic}
conditional inequality, one can also consider the question whether a
CI implication statement holds for all {\em almost entropic}
functions; for example the implication~\eqref{eq:k:s} holds for all
almost entropic functions.  Kaced and
Romashchenko~\cite{DBLP:journals/tit/KacedR13} proved that these two
problems differ, by giving examples of CI implications that hold for
all entropic functions but fail for almost entropic functions.


\subsubsection{Group-Theoretic Inequalities} There turns out to be a
way to ``rephrase'' $\iip$ as a decision problem in group theory; This
was a wonderful result by Chan and
Yeung~\cite{DBLP:journals/tit/ChanY02} (see
also~\cite{DBLP:conf/isit/Chan07}).  A tuple $(G; G_1, \dots, G_n)$ is
called a {\em group system} if $G$ is a finite group and
$G_1, \ldots, G_n \subseteq G$ are $n$ subgroups.  For any
$\alpha \subseteq [n]$, define
$G_\alpha := \bigcap_{i \in \alpha} G_i$; implicitly, we set
$G_\emptyset := G$.  A vector $\bm c \subseteq \R^{2^n}$ defines the
following {\em group-theoretic inequality}:
\begin{align}
    \sum_{\alpha \subseteq [n]} c_\alpha \log \frac{|G|}{|G_\alpha|} \geq 0
\label{eq:group:inequality}
\end{align}
\begin{theorem}[\cite{DBLP:journals/tit/ChanY02}] An information inequality
  \eqref{eq:iip} is $\entropic$-valid if and only if the corresponding group-theoretic
  inequality \eqref{eq:group:inequality} holds for all group systems $(G, G_1, \dots, G_n)$,
\end{theorem}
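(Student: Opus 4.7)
The plan is to prove each direction via an explicit correspondence between group systems and entropic functions.

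For the ``only if'' direction, I would construct, from any group system $(G, G_1, \ldots, G_n)$, an entropic function $h$ satisfying $h(\alpha) = \log(|G|/|G_\alpha|)$. Let $X$ be uniformly distributed on $G$, and for each $i \in [n]$ define $X_i \defeq X G_i$ (the left coset of $G_i$ containing $X$). For any $\alpha \subseteq [n]$, the tuple $\bm X_\alpha$ is in bijection with the coset $X G_\alpha$: knowing the cosets $XG_i$ for all $i \in \alpha$ pins $X$ down modulo $G_\alpha = \bigcap_{i \in \alpha} G_i$. Therefore $\bm X_\alpha$ is uniform over exactly $|G|/|G_\alpha|$ values, giving $H(\bm X_\alpha) = \log(|G|/|G_\alpha|)$. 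Hence the vector $(\log(|G|/|G_\alpha|))_\alpha$ lies in $\entropic$, and $\entropic$-validity of \eqref{eq:iip} immediately yields the group-theoretic inequality \eqref{eq:group:inequality}.

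For the ``if'' direction, the plan is to show that every entropic vector can be approximated, up to a positive scalar, by group-characterizable vectors of the form $(\log(|G|/|G_\alpha|))_\alpha$. Given random variables $(X_1, \ldots, X_n)$ whose joint distribution takes only rational probabilities with common denominator $N$, I would construct a group system by letting $G = S_N$ act on $[N]$, and letting $G_i$ be the set-wise stabilizer of the partition of $[N]$ induced by $X_i$ (that is, the subgroup permuting within each preimage $X_i^{-1}(y)$). A Stirling-type calculation then yields $\log(|G|/|G_\alpha|) = N \cdot H(\bm X_\alpha) + O(\log N)$. Dividing by $N$ and letting $N \to \infty$ along rational distributions approximating an arbitrary entropic one shows that every vector in $\aentropic$ lies in the closure of positive scalings of group-characterizable vectors. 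By homogeneity and linearity of the inequality in $\bm h$, validity over all group vectors transfers to validity on this closure, and hence on $\entropic \subseteq \aentropic$; continuity also gives the equivalence of $\entropic$-validity and $\aentropic$-validity used in Section~\ref{subsubsection:ii}.

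The main obstacle is the converse direction, and specifically the stabilizer construction: it realizes $N$ times an entropic vector only up to an additive $O(\log N)$ error, so one must lean on linearity and homogeneity of the inequality in $\bm h$ to argue that dividing by $N$ washes out the error in the limit. Extending from rational-probability distributions to arbitrary entropic (and almost entropic) vectors further requires a density and continuity argument; fortunately, linear information inequalities are precisely those for which $\entropic$-validity and $\aentropic$-validity coincide, so approximation in the closed convex cone $\aentropic$ is sufficient.
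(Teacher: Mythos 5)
The paper states this theorem as a cited result from Chan and Yeung~\cite{DBLP:journals/tit/ChanY02} and does not reproduce a proof, so there is no in-paper proof to compare against. Your blind reconstruction correctly reproduces the standard Chan--Yeung argument: for the easy direction, the coset random variables $X_i = XG_i$ with $X$ uniform on $G$ give $H(\bm X_\alpha)=\log(|G|/|G_\alpha|)$, so every group-characterizable vector is entropic; for the converse, the $G=S_N$ construction with $G_i$ the stabilizer of the partition induced by $X_i$ yields, via Stirling, $\log(|G|/|G_\alpha|) = N\,H(\bm X_\alpha) + O(\log N)$, and dividing by $N$ and passing to the limit shows that conic combinations of group-characterizable vectors are dense in $\aentropic$. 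Combining this with homogeneity of the linear functional $\bm c \cdot \bm h$ and the coincidence of $\entropic$-validity and $\aentropic$-validity for linear inequalities (which the paper invokes in Section~\ref{subsubsection:ii}) closes the argument. I see no gap; this is essentially the proof in the cited reference.
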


In particular, a positive or negative answer to the decidability
problem for $\iip$ immediately carries over to the validity problem of
group-theoretic inequalities of the form~\eqref{eq:group:inequality}.
We note that the group-theoretic inequalities considered here are
different from the word problems in group, see e.g. the
survey~\cite{miller-decision-problems-for-groups}; the undecidability
results for word problems in groups do not carry over to the
group-theoretic inequalities and, thus, to information inequalities.

\subsubsection{Application to Relational Query Evaluation}
The problem of bounding the number of copies of a graph inside of
another graph has a long and interesting
history~\cite{MR1639767,MR599482,MR859293,DBLP:conf/pods/000118}.  The subgraph
homomorphism problem is a special case of the relational query
evaluation problem, in which case we want to find an upper bound on
the output size of a full conjunctive query. Using the entropy
argument from~\cite{MR859293}, {\em Shearer's lemma} in particular,
Atserias, Grohe, and Marx~\cite{DBLP:journals/siamcomp/AtseriasGM13}
established a tight upper bound on the answer to a full conjunctive
query over a database.  Note that Shearer's lemma is a Shannon-type
inequality.
Their result was extended to include functional dependencies and more
generally degree constraints in a series of recent work in database
theory~\cite{DBLP:journals/jacm/GottlobLVV12,DBLP:conf/pods/KhamisNS16,DBLP:conf/pods/Khamis0S17}.
All these results can be cast as applications of Shannon-type
inequalities.  For a simple example, let $R(X,Y), S(Y,Z), T(Z,U)$ be
three binary relations (tables), each with $N$ tuples, then their join
$R(X,Y) \Join S(Y,Z) \Join T(Z,U)$ can be as large as $N^2$ tuples.
However, if we further know that the functional dependencies
$XZ \rightarrow U$ and $YU \rightarrow X$ hold in the output, then one
can prove that the output size is $\leq N^{3/2}$, by using the
following Shannon-type information inequality:
\begin{align}
    h(XY)+h(YZ) + h(ZU) + h(X|YU) + h(U|XZ) & \geq 2 h(XYZU)
\end{align}
%
While the tight upper bound of any conjunctive query can be proven using
only Shannon-type inequalities, this no longer holds when the
relations used in the query are constrained to satisfy functional
dependencies.  In that case, the tight upper bound can always be
obtained from an information inequality, but Abo Khamis et
al.~\cite{DBLP:conf/pods/Khamis0S17} gave an example of a conjunctive
query for which the tight upper bound requires a non-Shannon
inequality.

\subsubsection{Application to Secret Sharing}
\label{subsec:secret:sharing}
An interesting application of conditional information inequalities is
secret sharing, which is a classic problem in cryptography,
independently introduced by Shamir~\cite{DBLP:journals/cacm/Shamir79}
and Blakley~\cite{blakley}. The setup is as follows. There is a set
$P$ of {\em participants}, a {\em dealer} $d \notin P$, and an {\em
  access structure} $\calF \subset 2^P$. The access structure is
closed under taking superset: $A \in \calF$ and $A \subseteq B$
implies $B \in \calF$.  The dealer has a secret $s$, from some finite
set $K$, which she would like to share in such a way that every set
$F \in \calF$ of participants can recover the secret $s$, but every
set $F \notin \calF$ knows {\em nothing} about $s$.  The dealer shares
her secret by using a {\em secret sharing scheme}, in which she gives
each participant $p \in P$ a {\em share} $s_p \in K_p$, where $K_p$ is
some finite domain.  The scheme is designed in such a way that from
the tuple $(s_p)_{p \in F}$ one can recover $s$ if $F \in \calF$, and
conversely one cannot infer any information about $s$ if
$F \notin \calF$.

One way to formalize secret sharing uses information theory (for other
formalisms, see~\cite{DBLP:conf/codcry/Beimel11}).  We identify the
participants $P$ with the set $[n-1]$, and the dealer with the number
$n$. A secret sharing scheme on $P$ with access structure
$\calF \subseteq 2^P$ is a joint distribution on $n$ discrete random
variables $(X_1, \dots, X_n)$ satisfying:
\begin{itemize}
    \item[(i)] $H(X_n) > 0$
    \item[(ii)] $H(X_n \suchthat \bm X_F) = 0$ if $F \in \calF$
    \item[(iii)] $H(X_n \suchthat \bm X_F) = H(X_n)$ if
      $F \notin \calF$; equivalently, $I_H(X_n; \bm X_F) = 0$.
\end{itemize}
Intuitively, $X_i$ denotes the share given to the $i$th participant,
and $X_n$ is the unknown secret. It can be shown, without loss of
generality, that $(i)$ can be replaced by the assumption that the
marginal distribution on $X_n$ is
uniform~\cite{DBLP:journals/ipl/BlundoSV98}, which encodes the fact
that the scheme does not reveal any information about the secret
$X_n$.  Condition $(ii)$ means one can recover the secret from the
shares of qualified participants, while condition $(iii)$ guarantees
the complete opposite.  A key challenge in designing a good secret
sharing scheme is to reduce the total size of the shares. The only
known~\cite{DBLP:journals/joc/Csirmaz97,DBLP:journals/joc/CapocelliSGV93,DBLP:journals/tit/KarninGH83}
way to prove a {\em lower bound} on share sizes is to lower bound the
{\em information ratio} $\frac{\max_{p \in P} H(X_p)}{H(X_n)}$.  In
order to prove that some number $\ell$ is a lower bound on the
information ratio, we need to check that
$\max_{i \in [n-1]} \bigl\{ h(X_i) - \ell \cdot h(X_n) \bigr\} \geq 0$
holds for all entropic functions $\bm h \in \entropic$ satisfying the
extra conditions (i), (ii), and (iii) above.  Equivalently, $\ell$ is
a lower bound on the information ratio if and only if the following Boolean
information constraint is $\entropic$-valid:
\begin{align*}
    \bigwedge_{F\in \calF} (h(X_n \suchthat \bm X_F) = 0)
    \wedge
    \bigwedge_{F\not\in \calF} (I_h(X_n; \bm X_F) = 0)
    \Longrightarrow
    (h(X_n) = 0) \vee
    \bigl[
        \bigvee_{i \in [n-1]} (h(X_i) \geq \ell \cdot h(X_n))
    \bigr]
\end{align*}


\section{Placing $\ebic$ and $\aebic$ in the Arithmetical Hierarchy}

What is the complexity of $\ebic(F)$ / $\aebic(F)$?  Is it even
decidable?  As we have seen there are numerous applications of the
Boolean Information Constraint problem, hence any positive or negative
answer, even for special cases, would shed light on these
applications.  While their (un)decidability is currently open, in this
paper we provide several upper bounds on their complexity, by placing
them in the arithmetical hierarchy.


We briefly review some concepts from computability theory.  In this
setting it is standard to assume objects are encoded as natural
numbers.  A set $A \subseteq \N^k$, for $k \geq 1$, is \emph{Turing
  computable}, or \emph{decidable}, if there exists a Turing machine
that, given $x \in \N^k$ decides whether $x \in A$.  A set $A$ is
\emph{Turing reducible} to $B$ if there exists a Turing machine with
an oracle for $B$ that can decide membership in $A$.  The {\em
  arithmetical hierarchy} consists of the classes of sets $\Sigma^0_n$
and $\Pi^0_n$ defined as follows. The class $\Sigma^0_n$ consists of
all sets of the form
$\set{x \suchthat \exists y_1 \forall y_2 \exists y_3 \cdots \text{\sf
    Q} y_n R(x, y_1, \dots, y_n)}$, where $R$ is an $(n+1)$-ary
decidable predicate, \text{\sf Q} $ = \exists$ if $n$ is odd, and
$Q= \forall$ if $n$ is even. In a dual manner, the class $\Pi^0_n$
consists of sets of the form
$\set{x \suchthat \forall y_1 \exists y_2 \forall y_3 \cdots \text{\sf
    Q} y_n R(x, y_1, \dots, y_n)}$.
%
Then $\Sigma^0_0 = \Pi^0_0$ are the decidable sets, while $\Sigma^0_1$
consists of the {\em recursively enumerable} sets, and $\Pi^0_1$
consists of the {\em co-recursively enumerable} sets.  It is known
that these classes are closed under union and intersection, but not
under complements, and that they form a strict hierarchy,
$\Sigma^0_n, \Pi^0_n \subsetneq (\Sigma^0_{n+1} \cap \Pi^0_{n+1})$.
For more background, we refer to~\cite{rogers1967theory}.
Our goal is to place the problems $\ebic(F)$, $\aebic(F)$, and their
variants in concrete levels of the arithmetical hierarchy.

\subsection{Unconditional Boolean Information Constraints}

\label{sec:unconditional:bic}

We start by discussing unconditional Boolean information constraints,
or, equivalently, a Boolean information constraint defined by a
monotone Boolean formula $F$. The results here are rather simple; we
include them only as a warmup for the less obvious results in later
sections.  Based on our discussion in Sections~\ref{subsubsection:ii} and
~\ref{subsubsec:maxiip}, we have the following result.

\begin{theorem} \label{th:entropic:to:almost:entropic} If $F$ is monotone,
  then $\ebic(F)$ and $\aebic(F)$ are equivalent problems.
\end{theorem}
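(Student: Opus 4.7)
The plan is to show the two directions separately. The forward direction, $\aebic(F) \Rightarrow \ebic(F)$, is immediate from the inclusion $\entropic \subseteq \aentropic$: any constraint that holds on all almost-entropic vectors a fortiori holds on all entropic vectors. The interesting direction is $\ebic(F) \Rightarrow \aebic(F)$, and the whole point of assuming $F$ is monotone is that we can push a limit through the constraint.

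First, I would reduce to the single-clause case. Since $F$ is monotone, it can be written as a CNF $F = C_1 \wedge C_2 \wedge \cdots$ where each clause $C_j$ is a pure disjunction of positive literals $Z_1 \vee \cdots \vee Z_{\ell_j}$, and checking $\ebic(F)$ (resp.\ $\aebic(F)$) amounts to checking each clause separately, as noted in the discussion leading to Eq.~\eqref{eq:clause}. So it suffices to prove the equivalence when $F = Z_1 \vee \cdots \vee Z_\ell$, in which case the associated constraint is the max-inequality $\bigvee_{i=1}^\ell (\bm c_i \cdot \bm h \geq 0)$.

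Now I would argue by a limit-and-pigeonhole argument. Suppose the max-inequality holds on all $\bm h' \in \entropic$, and fix $\bm h \in \aentropic$. By definition of closure there is a sequence $\bm h_k \in \entropic$ with $\bm h_k \to \bm h$ in $\R^{2^n}$. For each $k$ there exists an index $i_k \in [\ell]$ with $\bm c_{i_k} \cdot \bm h_k \geq 0$. Since $[\ell]$ is finite, by pigeonhole some index $i^\star \in [\ell]$ occurs for infinitely many $k$; passing to that subsequence we have $\bm c_{i^\star} \cdot \bm h_k \geq 0$ for all $k$ in the subsequence. The linear functional $\bm h \mapsto \bm c_{i^\star} \cdot \bm h$ is continuous, so taking the limit yields $\bm c_{i^\star} \cdot \bm h \geq 0$, which witnesses the disjunction at $\bm h$. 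Hence the constraint is $\aentropic$-valid.

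There is really no hard obstacle here; the only subtlety worth pointing out is that without monotonicity this argument would break down, because a negated literal corresponds to a strict-type inequality $\bm c \cdot \bm h < 0$, and strict inequalities are \emph{not} preserved under limits. The monotonicity of $F$ is exactly what ensures that only non-strict inequalities $\bm c_i \cdot \bm h \geq 0$ appear, which is what allows the continuity argument to go through. This cleanly explains why the distinction between $\ebic$ and $\aebic$ becomes nontrivial only in the conditional (non-monotone) regime studied in later sections.
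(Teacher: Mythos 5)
Your proof is correct and matches the paper's (only sketched) argument: the paper justifies this theorem by the same continuity observation, namely that for a monotone $F$ the satisfying set is a finite union of closed half-spaces, hence closed, so $\entropic$-validity transfers to the closure $\aentropic$. Your clause-by-clause reduction and the pigeonhole-plus-limit step are exactly the details the paper leaves implicit, and your remark about why negated literals (strict inequalities) break the argument is the right explanation for why the equivalence fails in the conditional case.
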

%

Next, we prove that these problems are co-recursively enumerable, by using the
following folklore fact.  A {\em representable set of $n$ random
  variables} is a finite relation $\Omega$ with $N$ rows and $n+1$
columns $X_1, \ldots, X_n, p$, where column $p$ contains rational
probabilities in $[0,1] \cap \Q$ that sum to 1.  Thus, $\Omega$
defines $n$ random variables with finite domain and probability mass
given by rational numbers.  We denote $\bm h^\Omega$ its entropic
vector.  By continuity of Eq.(\ref{eq:h:sum:p}), we obtain:

\begin{proposition} \label{prop:rational:p} For every entropic vector
  $\bm h \in \entropic$ and every $\varepsilon > 0$, there exists
  a representable space $\Omega$ such that
  $\norm{\bm h - \bm h^\Omega} < \varepsilon$.
\end{proposition}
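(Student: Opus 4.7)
The plan is to invoke the continuity of the Shannon entropy on the probability simplex, combined with the density of $\Q$ in $\R$. Fix $\bm h \in \entropic$ and $\varepsilon > 0$. By definition of $\entropic$, $\bm h$ is realized by some joint pmf $p$ on a finite product domain $D = D_1 \times \cdots \times D_n$, where $|D| = N$ is finite but possibly with irrational values. For every $\alpha \subseteq [n]$, the marginal $p_\alpha$ on $\bm X_\alpha$ depends continuously on $p$, and the entropy $H(\bm X_\alpha) = -\sum_{x} p_\alpha(x) \log p_\alpha(x)$ is a continuous function of $p_\alpha$ (using the standard convention $0 \log 0 = 0$). Hence the map $p \mapsto \bm h^p \in \R^{2^n}$ is continuous, and it suffices to find a \emph{rational} pmf $p'$ on $D$ with $\|p - p'\|_\infty$ small enough.

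To construct such a $p'$, enumerate the atoms of $D$ as $x_1, \ldots, x_N$ and fix a positive integer $M$ to be chosen later. For $i < N$, choose $q_i \in [0,1] \cap \Q$ with $|q_i - p(x_i)| \leq 1/M$, e.g. $q_i = \lfloor M\, p(x_i) \rfloor / M$. Then define $q_N \defeq 1 - \sum_{i<N} q_i$, which is automatically rational. By the triangle inequality,
\begin{align*}
|q_N - p(x_N)| = \Bigl| \sum_{i<N} (p(x_i) - q_i) \Bigr| \leq \frac{N-1}{M},
\end{align*}
and for $M$ sufficiently large all $q_i$ lie in $[0,1]$ (including $q_N \geq 0$, because $q_N$ is within $(N-1)/M$ of $p(x_N) \geq 0$, and if $p(x_N) = 0$ one may simply drop $x_N$ from $D$). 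The resulting $p'(x_i) \defeq q_i$ is a rational pmf on $D$ with $\|p - p'\|_\infty \leq (N-1)/M$, which can be made arbitrarily small.

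Finally, by uniform continuity of the entropy function on the compact simplex of pmfs on $D$, choose $M$ large enough so that the induced entropic vector $\bm h^{p'}$ satisfies $\|\bm h - \bm h^{p'}\| < \varepsilon$. Tabulating the pairs $(x_i, q_i)$ with $q_i > 0$ yields a representable set $\Omega$ with exactly $n+1$ columns and rational probabilities summing to $1$, and by construction $\bm h^{\Omega} = \bm h^{p'}$, proving the claim.

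The only potentially subtle point is enforcing both rationality of the $q_i$'s \emph{and} the exact normalization $\sum_i q_i = 1$; rounding and then absorbing the leftover mass into $q_N$ resolves this cleanly, so there is no substantive obstacle beyond continuity of entropy.
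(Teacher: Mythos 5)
Your proof is correct and takes essentially the same approach as the paper, which dispenses with this proposition in a single line (``by continuity of Eq.~\eqref{eq:h:sum:p}''); your argument is exactly that continuity argument with the rational rounding and renormalization details spelled out. (Minor remark: with your specific choice $q_i = \lfloor M\,p(x_i)\rfloor/M$ one gets $q_N \geq p(x_N) \geq 0$ automatically, so the caveat about dropping $x_N$ is unnecessary.)
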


The group-characterization proven by Chan and
Yeung~\cite{DBLP:journals/tit/ChanY02} implies a much stronger version
of the proposition; we do not need that stronger version in this
paper.

\begin{theorem} \label{th:co:re} Let $F$ be a monotone Boolean formula.
  Then $\ebic(F)$ (and, hence,  $\aebic(F)$) is in $\Pi^0_1$, i.e., it is co-recursively
  enumerable.
\end{theorem}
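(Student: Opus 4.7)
The plan is to show that the complement of $\ebic(F)$ — the set of inputs admitting a counterexample — is recursively enumerable. Since Theorem~\ref{th:entropic:to:almost:entropic} equates $\ebic(F)$ with $\aebic(F)$ when $F$ is monotone, the same $\Pi^0_1$ bound will transfer.

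I would first reduce to a single monotone clause. By the paragraph preceding Theorem~\ref{th:entropic:to:almost:entropic}, we may assume $F$ is a single clause, and since $F$ is monotone this clause has the form $Z_1 \vee \cdots \vee Z_\ell$. Hence the constraint in question is $\max_{j \in [\ell]} \bm{c}_j \cdot \bm{h} \geq 0$, and invalidity means $\exists\, \bm{h} \in \entropic$ such that $\bm{c}_j \cdot \bm{h} < 0$ for all $j$.

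To enumerate invalid instances, I would run in parallel over all representable systems $\Omega$ (a countable, effectively enumerable set, since $\Omega$ is a finite table of rationals) and all precision parameters $k \in \N$. For a fixed $\Omega$, each coordinate $h^\Omega(\alpha) = -\sum p \log p$ is a computable real, so by expanding enough digits one can effectively test the rational inequality $\bm{c}_j \cdot \bm{h}^\Omega < -1/k$ for every $j \in [\ell]$. The search halts and reports ``invalid'' the first time some pair $(\Omega, k)$ satisfies all $\ell$ of these strict inequalities; this is a standard $\Sigma^0_1$ procedure. Soundness is immediate: the witnessed $\bm{h}^\Omega \in \entropic$ violates the constraint.

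For completeness, suppose the constraint is invalid, and fix a violating $\bm{h}^* \in \entropic$. Then $\bm{c}_j \cdot \bm{h}^* < 0$ for all $j$, so there exists $\varepsilon > 0$ with $\bm{c}_j \cdot \bm{h}^* \leq -2\varepsilon$ for all $j$. By Proposition~\ref{prop:rational:p}, for any $\delta > 0$ we can find a representable $\Omega$ with $\norm{\bm{h}^\Omega - \bm{h}^*} < \delta$; taking $\delta$ smaller than $\varepsilon / \max_j \norm{\bm{c}_j}_1$ forces $\bm{c}_j \cdot \bm{h}^\Omega \leq -\varepsilon$ for every $j$, which our enumeration eventually detects for any $k$ with $1/k \leq \varepsilon$. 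The main subtlety I expect to pin down carefully is precisely this gap between the open condition $<0$ (what invalidity actually requires) and the effectively verifiable condition $< -1/k$: openness of the violation region together with the density statement of Proposition~\ref{prop:rational:p} is exactly what closes that gap, and no further hypotheses on $F$ or the $\bm{c}_j$ are needed.
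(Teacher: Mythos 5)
Your proposal is correct and follows essentially the same route as the paper: both reduce invalidity to the existence of a representable counterexample $\bm h^\Omega$ using Proposition~\ref{prop:rational:p} together with the openness of the strict-violation region $\bigwedge_j \bm c_j \cdot \bm h < 0$, and then enumerate over the countable set of representable spaces. The only difference is in the per-$\Omega$ test: the paper makes it exactly decidable by rewriting $\sum_j a_j \log b_j \geq 0$ as $\prod_j b_j^{a_j} \geq 1$ over the rationals, whereas you semi-decide the strict inequalities by approximating the computable reals $\bm c_j \cdot \bm h^\Omega$ to within a margin $1/k$ — both suffice for the $\Pi^0_1$ bound.
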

\begin{proof} Fix $F = Z_1 \vee \cdots \vee Z_m$ and
$\bm c_i \in \Z^{2^n}$, $i \in [m]$.  We need to check:
  \begin{align}
      \forall \bm h \in \entropic: && \bm c_1 \cdot \bm h \geq 0 \vee \cdots \vee  \bm c_m \cdot \bm h\geq 0
\label{eq:ii:unrestricted}
  \end{align}
  We claim that~\eqref{eq:ii:unrestricted} is equivalent to:
  \begin{align}
    \forall \Omega && \bm c_1 \cdot \bm h^\Omega \geq 0 \vee \cdots \vee \bm c_m \cdot \bm   h^\Omega \geq 0
\label{eq:ii:representable}
  \end{align}
  Obviously~\eqref{eq:ii:unrestricted}
  implies~\eqref{eq:ii:representable}, and the opposite follows from
  Prop.~\ref{prop:rational:p}: if~\eqref{eq:ii:unrestricted} fails on
  some entropic vector $\bm h$, then it also fails on some
  representable $\bm h^\Omega$ close enough to $\bm h$.  Finally,
  ~\eqref{eq:ii:representable} is in $\Pi^0_1$ because, the property
  after $\forall\Omega$ is decidable, by expanding the definition of
  entropy~\eqref{eq:h:sum:p} in each condition
  $\bm c_i \cdot \bm h^\Omega \geq 0$, and writing the latter as
  $\sum_j a_j \log b_j \geq 0$, or, equivalently,
  $\prod_j (b_j)^{a_j} \geq 1$, where $a_j, b_j$ are rational numbers,
  which is decidable.
\end{proof}

\subsection{Conditional Boolean Information Constraints}

We now consider non-monotone Boolean functions, in other words,
conditional information constraints~\eqref{eq:cond:ii}. Since
$\entropic$- and $\aentropic$-validity no longer coincide, we study
$\ebic(F)$ and $\aebic(F)$ separately.  The results here are
non-trivial, and some proofs are
\conferenceorfull{deferred to~\cite{arxiv-version}.}
{included in the Appendix.}

\subsubsection{The Entropic Case}
\label{subsec:entropic:functions}

Our result for $\ebic(F)$ is restricted to the CI implication problem.
Recall from Sec.~\ref{subsubsection:cond:iip} that this problem
consists of checking whether an implication between statements of the
form $(\bm Y \upmodels \bm Z \suchthat \bm X)$ holds for all random
variables with finite domain, and this is equivalent to checking
whether a certain conditional inequality holds for all entropic
functions.  We prove that this problem is in $\Pi^0_1$ by using
Tarski's theorem of the decidability of the theory of reals with
$+, *$~\cite{tarski1998decision}.

\begin{theorem} \label{th:ci:in:pi1} The CI implication problem
  (Section~\ref{subsubsection:cond:iip}) is in $\Pi^0_1$.
\end{theorem}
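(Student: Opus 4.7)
The plan is to show that the complement of the CI implication problem is recursively enumerable, by reducing the search for a counterexample distribution, over all finite domain sizes, to a sequence of decision problems in Tarski's first-order theory of the reals.

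First, I would observe that a CI statement $(\bm Y \upmodels \bm Z \mid \bm X)$ for discrete random variables with finite domain is equivalent to the family of polynomial equations
\[
  p(\bm x,\bm y,\bm z)\,p(\bm x) \;=\; p(\bm x,\bm y)\,p(\bm x,\bm z)
\]
ranging over all values $\bm x,\bm y,\bm z$ of the variables, where the marginals $p(\bm x)$, $p(\bm x,\bm y)$, $p(\bm x,\bm z)$ are linear expressions in the entries of the joint pmf. Thus, for any fixed tuple $\bm d=(d_1,\dots,d_n)$ of domain sizes for the underlying variables, the statement ``there exists a joint pmf on $\prod_i[d_i]$ satisfying the antecedents $\varphi_1,\dots,\varphi_k$ but violating the consequent $\varphi_0$'' can be written as an existentially quantified sentence $\Psi_{\bm d}$ in the first-order language of ordered fields: we quantify over the nonnegative reals $p_{\bm i}$ indexed by $\bm i \in \prod_i[d_i]$, add the normalization $\sum_{\bm i} p_{\bm i}=1$, encode each antecedent as a finite conjunction of polynomial equations, and encode the negated consequent as a finite disjunction of polynomial disequations.

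Second, I would invoke Tarski's theorem on the decidability of the first-order theory of $(\R,+,\cdot,0,1,\leq)$ to conclude that, for every individual $\bm d$, the sentence $\Psi_{\bm d}$ is decidable. The semi-decision procedure for the complement of CI implication is then the obvious one: enumerate all finite tuples $\bm d$ of domain sizes and, for each, run Tarski's decision procedure on $\Psi_{\bm d}$; halt and output ``counterexample'' as soon as some $\Psi_{\bm d}$ is found to be true. If the CI implication fails, then by definition there is a counterexample consisting of finitely many random variables with finite domains, hence some $\Psi_{\bm d}$ is true, and the procedure eventually halts; conversely, if it halts, a counterexample demonstrably exists. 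This shows the complement is in $\Sigma^0_1$, hence the CI implication problem is in $\Pi^0_1$.

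The main conceptual point, and the only place that requires care, is the encoding step: one must verify that the conditional independence conditions, which on the surface involve ratios $p(\bm y,\bm z\mid \bm x)=p(\bm y\mid \bm x)\,p(\bm z\mid \bm x)$, can be cleared of denominators into genuine polynomial equations $p(\bm x,\bm y,\bm z)\,p(\bm x)=p(\bm x,\bm y)\,p(\bm x,\bm z)$ that remain correct even when $p(\bm x)=0$, so that no case analysis on the support of the distribution escapes the first-order description. Once this is in place, enumerating domain sizes and invoking Tarski's procedure is routine, and no non-Shannon or polymatroid reasoning is needed.
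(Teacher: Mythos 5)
Your proposal is correct and takes essentially the same route as the paper: enumerate finite domain sizes, encode the existence of a counterexample distribution as an existential sentence over $(\R,+,\cdot)$ using the denominator-cleared polynomial form $p(\bm x,\bm y,\bm z)p(\bm x)=p(\bm x,\bm y)p(\bm x,\bm z)$ of conditional independence, decide each such sentence via Tarski's theorem, and halt upon finding a counterexample, placing the complement in $\Sigma^0_1$. The only cosmetic difference is that you enumerate tuples of per-variable domain sizes while the paper uses a single common bound $N$; these are interchangeable since smaller domains can be padded with zero-probability outcomes.
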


\begin{proof} 
  Tarski has proven that the theory of reals with $+, *$ is decidable.
  More precisely, given a formula $\Phi$ in FO with symbols $+$ and
  $*$, it is decidable whether that formula is true in the model of
  real numbers $(\R, +, *)$; for example, it is decidable
  whether\footnote{$3y$ is a shorthand for $y+y+y$ and $x \geq y$ is a
    shorthand for $\exists u (x = y+u^2)$.}
  $\Phi \equiv \forall x \exists y \forall z (x^2 + 3 y \geq z \wedge
  (y^3 + yz \leq xy^2))$ is true.  We will write
  $(\R,+,*) \models \Phi$ to denote the fact that $\Phi$ is true in
  the model of reals.

  Consider a conditional inequality over a set of $n$ joint random
  variables:
  \begin{align*}
    I_h(Y_1;Z_1|X_1)=0 \wedge \cdots \wedge I_h(Y_k;Z_k|X_k)=0
    \Rightarrow  I_h(Y;Z|X) = 0
  \end{align*}

  The following algorithm returns {\em false} if the inequality fails
  on some entropic function $h$, and runs forever if the inequality
  holds for all $h$, proving that the problem is in $\Pi^0_1$:

  \begin{itemize}
  \item Iterate over all $N \geq 0$.  For each $N$, do the following
    steps.
  \item Consider $n$ joint random variables $X_1, \ldots, X_n$ where
    each has outcomes in the domain $[N]$; thus there are $N^n$
    possible outcomes.  Let $p_1, \ldots, p_{N^n}$ be real variables
    representing the probabilities of these outcomes.
  \item Construct a formula $\Delta$ stating ``there exist
    probabilities $p_1, \ldots, p_{N^n}$ for these outcomes, whose
    entropy fails the conditional inequality''.  More precisely, the
    formula consists of the following:
    \begin{itemize}
    \item Convert  each conditional independence statement in the
      antecedent   $I_h(Y_i;Z_i|X_i)=0$ into its equivalent statement
      on probabilities: $p(X_iY_iZ_i)p(X_i) = p(X_iY_i)p(X_iZ_i)$.
    \item Replace each such statement with a conjunction of statements
      of the form
      $p(X_i=x, Y_i=y, Z_i=z)\cdot p(X_i=x) = p(X_i=x, Y_i=y)\cdot p(X_i=x,
      Z_i=z)$, for all combinations of values $x,y,z$.  If
      $X_i, Y_i, Z_i$ have in total $k$ random variables, then there
      are $N^k$ combinations of values $x,y,z$, thus we create a
      conjunction of $N^k$ equality statements.
    \item Each marginal probability is a sum of atomic probabilities,
      for example $p(X_i=x, Y_i=y) = p_{k_1}+p_{k_2}+\cdots$ where
      $p_{k_1}, p_{k_2},\ldots$ are the probabilities of all outcomes
      that have $X_i=x$ and $Y_i=y$.  Thus, the equality statement in
      the previous step becomes the following formula:
      $(p_{i_1} + p_{i_2} + \cdots)(p_{j_1}+p_{j_2}+\cdots) =
      (p_{k_1}+p_{k_2}+\cdots)(p_{\ell_1}+p_{\ell_2}+\cdots)$.  There
      is one such formula for every combination of values $x,y,z$;
      denote $\Phi_i$ the conjunction of all these formulas.  Thus,
      $\Phi_i$ asserts $I_h(Y_i;Z_i|X_i) = 0$.
    \item Let $\Phi = \Phi_1 \wedge \cdots \wedge \Phi_k$.  Let $\Psi$
      be the similar formula for the consequent: thus, $\Psi$ asserts
      $I_h(Y;Z|X)=0$.
    \item Finally, construct the formula
      $\Delta \defeq \exists p_1, \ldots, \exists p_{N^n}, (\Phi
      \wedge \neg \Psi)$.
    \end{itemize}
  \item Check whether $(\R, +, *) \models \Delta$.  By Tarski's
    theorem this step is decidable.
  \item If $\Delta$ is true, then return {\em false}; otherwise,
    continue with $N+1$.
  \end{itemize}
\end{proof}

\paragraph*{Tarski's exponential function problem} One may attempt to
extend the proof above from the CI implication problem to arbitrary
conditional inequalities~\eqref{eq:cond:ii}.  To check if a
conditional inequality is valid for all entropic functions, we can
repeat the argument above: iterate over all domain sizes
$N=1,2,3,\ldots$, and check if there exists probabilities
$p_1, \ldots, p_{N^n}$ that falsify the implication
$(\bm c_1 \cdot \bm h \geq 0 \wedge \cdots \wedge \bm c_k \cdot \bm h
\geq 0) \Rightarrow \bm c_0 \cdot \bm h \geq 0$.  The problem is that
in order to express $\bm c_i \cdot \bm h \geq 0$ we need to express
the vector $\bm h$ in terms of the probabilities
$p_1, \ldots, p_{N^n}$.  To apply directly the definition of entropy
in~\eqref{eq:h:sum:p} we need to use the $\log$ function, or,
alternatively, the exponential function, and this takes us outside
the scope of Tarski's theorem.  A major open problem in model theory,
originally formulated also by Tarski, is whether decidability
continues to hold if we augment the structure of the real numbers with
the exponential function (see,
e.g.,~\cite{david:marker:exponentiation} for a discussion).
Decidability of the first-order theory of the reals with
exponentiation would easily imply that the entropic conditional
information inequality problem \econdiip\ (not just the entropic
conditional independence (CI) implication problem) is in $\Pi^0_1$,
because every condition $\bm c \cdot \bm h \geq 0$ can be expressed
using $+, *$ and the exponential function, by simply expanding the
definition of entropy in Equation \eqref{eq:h:sum:p}.

\subsubsection{The Almost-Entropic Case}
\label{sec:beyond}


Suppose the antecedent of~\eqref{eq:cond:ii} includes the condition
$\bm c \cdot \bm h \geq 0$.  Call $\bm c \in \R^{2^n}$ {\em tight} if
$\bm c \cdot \bm h \leq 0$ is $\aentropic$-valid.  When $\bm c$ is
tight, we can rewrite $\bm c \cdot \bm h \geq 0$ as
$\bm c \cdot \bm h = 0$.  If $\bm c$ is not tight, then there exists
$\bm h \in \aentropic$ such that $\bm c \cdot \bm h > 0$; in that case
we say that $\bm c$ {\em has slack}.  For example, all conditions
occurring in CI implications are tight, because they are of the form
$-I_h(Y;Z|X) \geq 0$, and more conveniently written $I_h(Y;Z|X)=0$,
while a condition like $3h(X) - 4h(YZ) \geq 0$ has slack.  We extend
the definition of slack to a set.  We say that the set
$\set{\bm c_1, \ldots, \bm c_k} \subset \R^{2^n}$ has slack if there
exists $\bm h \in \aentropic$ such that $\bm c_i \cdot \bm h > 0$ for
all $i=1,k$; notice that this is more restricted than requiring each
of $\bm c_i$ to have slack.  We present below results on the
complexity of $\aebic(F)$ in two special cases: when all antecedents
are tight, and when the set of antecedents has slack.  Both results
use the following theorem, which allows us to move one condition
$\bm c_k \cdot \bm h \geq 0$ from the antecedent to the consequent:

\begin{theorem}
  \label{th:lambda:1}
  The following statements are equivalent:
  \begin{align}
      \forall \bm h \in \aentropic: && \bigwedge_{i \in [k]} \bm c_i \cdot \bm h \geq 0 \Rightarrow &\bm c \cdot \bm h \geq 0\label{eq:lambda:1:antecedent}\\
    \forall \varepsilon>0,\exists \lambda\geq 0, \forall \bm h \in \aentropic: &&
    \bigwedge_{i \in [k-1]} \bm c_i \cdot \bm h \geq 0
    \Rightarrow  & \bm c \cdot \bm h+ \varepsilon h([n]) \geq \lambda \bm c_k \cdot \bm h\label{eq:lambda:1:consequent}
  \end{align}
  Moreover, if the set $\set{\bm c_1, \ldots, \bm c_k}$ has slack,
  then one can set $\varepsilon=0$ in
  Eq.\eqref{eq:lambda:1:consequent}.
\end{theorem}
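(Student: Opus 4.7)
The plan is to prove the equivalence by convex-cone duality. Fix the closed convex cones $K := \aentropic$, $A := K \cap \bigcap_{i \in [k-1]} \{\bm h : \bm c_i \cdot \bm h \geq 0\}$, and $B := \{\bm h : \bm c_k \cdot \bm h \geq 0\}$, and note that $B^* = \{\mu \bm c_k : \mu \geq 0\}$. Let $\bm e_{[n]}$ be the unit vector satisfying $\bm e_{[n]} \cdot \bm h = h([n])$, and set $C := A^* + B^*$. The direction \eqref{eq:lambda:1:consequent}$\Rightarrow$\eqref{eq:lambda:1:antecedent} is immediate: any $\bm h \in \aentropic$ satisfying all $k$ antecedents also satisfies $\bm c \cdot \bm h + \varepsilon h([n]) \geq \lambda \bm c_k \cdot \bm h \geq 0$ for every $\varepsilon > 0$, and letting $\varepsilon \to 0$ yields $\bm c \cdot \bm h \geq 0$.

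For \eqref{eq:lambda:1:antecedent}$\Rightarrow$\eqref{eq:lambda:1:consequent}, rephrase the hypothesis as $\bm c \in (A \cap B)^*$, and recall the standard identity $(A \cap B)^* = \overline{A^* + B^*}$ for closed convex cones, placing $\bm c$ in $\overline{C}$. The crucial geometric ingredient is that $\bm e_{[n]}$ lies in the interior of $K^*$, and hence of $C$. Indeed, any nonzero $\bm h \in K$ has some coordinate $h(\alpha) > 0$, so by monotonicity $h([n]) \geq h(\alpha) > 0$; compactness of $K \cap \{\|\bm h\| = 1\}$ then forces $\inf_{\bm h \in K,\ \|\bm h\|=1} h([n]) > 0$, which is equivalent to $\bm e_{[n]} \in \mathrm{int}(K^*)$. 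Because $A \subseteq K$ gives $K^* \subseteq A^* \subseteq C$, we conclude $\bm e_{[n]} \in \mathrm{int}(C)$, and by the cone property $\varepsilon \bm e_{[n]} \in \mathrm{int}(C)$ for every $\varepsilon > 0$.

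Next, invoke the routine fact that for a convex cone $C$, if $\bm x \in \overline{C}$ and $\bm y \in \mathrm{int}(C)$ then $\bm x + \bm y \in C$: pick $\delta > 0$ with the open $\delta$-ball around $\bm y$ contained in $C$, approximate $\bm x$ by some $\bm x' \in C$ with $\|\bm x - \bm x'\| < \delta$, and write $\bm x + \bm y = \bm x' + (\bm y + (\bm x - \bm x')) \in C + C \subseteq C$. Applied with $\bm x = \bm c$ and $\bm y = \varepsilon \bm e_{[n]}$, this gives $\bm c + \varepsilon \bm e_{[n]} = \bm a + \lambda \bm c_k$ for some $\bm a \in A^*$ and $\lambda \geq 0$. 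Evaluating this identity against an arbitrary $\bm h \in A$ delivers precisely \eqref{eq:lambda:1:consequent}.

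For the slack strengthening, a witness $\bm h^* \in \aentropic$ with $\bm c_i \cdot \bm h^* > 0$ for every $i \in [k]$ lies in $A$ and satisfies $\bm c_k \cdot \bm h^* > 0$. Applying the decomposition $\bm c + \varepsilon \bm e_{[n]} = \bm a_\varepsilon + \lambda_\varepsilon \bm c_k$ to $\bm h^*$, together with $\bm a_\varepsilon \cdot \bm h^* \geq 0$, bounds $\lambda_\varepsilon \leq (\bm c \cdot \bm h^* + \varepsilon h^*([n]))/(\bm c_k \cdot \bm h^*)$, which stays bounded as $\varepsilon \to 0$. Extracting a convergent subsequence $\lambda_\varepsilon \to \lambda_0 \geq 0$, the associated $\bm a_\varepsilon = \bm c + \varepsilon \bm e_{[n]} - \lambda_\varepsilon \bm c_k$ converges to $\bm c - \lambda_0 \bm c_k$, which lies in $A^*$ by closedness; hence \eqref{eq:lambda:1:consequent} holds with $\varepsilon = 0$ and $\lambda = \lambda_0$. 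The main obstacle I anticipate is justifying $\bm e_{[n]} \in \mathrm{int}(A^*)$ cleanly and invoking $(A \cap B)^* = \overline{A^* + B^*}$ without any further constraint qualification—together these are what let the $\varepsilon h([n])$ slack in \eqref{eq:lambda:1:consequent} absorb the unavoidable closure gap between $A^* + B^*$ and its closure.
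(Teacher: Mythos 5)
Your proof is correct, but it takes a genuinely different route from the paper's. The paper reduces the theorem to a lemma about closed convex cones whose slack case hinges on Studen\'y's non-obvious closedness result (if $L$ is a closed convex cone and $\bm y_k \notin -L^*$... more precisely, that $\cone(L^* \cup \set{\bm y_k})$ is closed), which immediately yields an exact decomposition $\bm c = \bm z + \lambda \bm c_k$; for the general case it approximates $\bm c$ within the closure $\cl{\cone(L^* \cup \set{\bm y_k})}$ by a nearby $\bm y'$ and controls the error via Cauchy--Schwarz together with $\|\bm h\|_2 \leq 2^n h([n])$, rescaling $\varepsilon$. You instead work with $C = A^* + B^*$ directly and absorb the closure gap $(A\cap B)^* = \cl{C}$ by the interior-point identity $\cl{C} + \mathrm{int}(C) \subseteq C$, which requires your (correct) observation that monotonicity and non-negativity force $\bm e_{[n]} \in \mathrm{int}((\aentropic)^*)$; for the slack case you replace Studen\'y's lemma by a compactness argument, using the slack witness $\bm h^*$ to bound $\lambda_\varepsilon$ and passing to a convergent subsequence, with closedness of $A^*$ finishing the job. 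Your version is more self-contained (no external closedness lemma) and the interior-point mechanism explains conceptually why the $\varepsilon h([n])$ term suffices; the paper's version isolates a reusable cone lemma and, via Studen\'y's result, exhibits the slack case as a genuine exactness phenomenon rather than a limit of approximations. All the individual steps you invoke --- $(A\cap B)^* = \cl{A^*+B^*}$ for closed convex cones, $B^* = \setof{\mu \bm c_k}{\mu \geq 0}$, and the compactness of $\aentropic \cap \setof{\bm h}{\|\bm h\|=1}$ --- check out.
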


\begin{proof} We prove here only the implication from
  \eqref{eq:lambda:1:consequent} to \eqref{eq:lambda:1:antecedent};
  the other direction is non-trivial and is proven in
  \conferenceorfull{the full version~\cite{arxiv-version}}{Appendix~\ref{app:proof:th:lambda:eps}} using only the properties of closed convex cones.  Assume condition
  \eqref{eq:lambda:1:consequent} holds, and consider any
  $\bm h \in \aentropic$ s.t.
  $\bigwedge_{i\in [k]} \bm c_i \cdot \bm h \geq 0$.  We prove that
  $\bm c \cdot \bm h \geq 0$.  For any $\varepsilon > 0$, condition
  \eqref{eq:lambda:1:consequent} states that there exists $\lambda >0$
  such that
  $\bm c \cdot \bm h+ \varepsilon h([n]) \geq \lambda \bm c_k \cdot
  \bm h$ and therefore $\bm c \cdot \bm h+ \varepsilon h([n]) \geq 0$.
  Since $\varepsilon > 0$ is arbitrary, we conclude that
  $\bm c \cdot \bm h \geq 0$, as required.
\end{proof}

By applying the theorem repeatedly, we can move all antecedents to the
consequent:

\begin{corollary}
  \label{cor:lambda:many}
  Condition~\eqref{eq:lambda:1:antecedent} is  equivalent to:
  \begin{align}
    \forall \varepsilon>0,\exists \lambda_1 \geq 0, \cdots, \exists \lambda_k \geq 0, \forall \bm h \in \aentropic: &&
    \bm c \cdot \bm h+ \varepsilon h([n]) \geq \sum_{i \in [k]} \lambda_i \bm c_i \cdot \bm h\label{eq:cor:lambda:1:consequent}
  \end{align}
  Moreover, if the set $\set{\bm c_1, \ldots, \bm c_k}$ has slack,
  then one can set $\varepsilon=0$ in
  Eq.\eqref{eq:cor:lambda:1:consequent}.
\end{corollary}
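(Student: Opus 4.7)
The plan is to prove the corollary by iterated application of Theorem~\ref{th:lambda:1}, peeling off one antecedent at a time. The easy implication, from \eqref{eq:cor:lambda:1:consequent} to \eqref{eq:lambda:1:antecedent}, is the same continuity argument already used inside Theorem~\ref{th:lambda:1}: for any $\bm h \in \aentropic$ with $\bigwedge_i \bm c_i \cdot \bm h \geq 0$, the right-hand side of \eqref{eq:cor:lambda:1:consequent} is nonnegative, so $\bm c \cdot \bm h + \varepsilon h([n]) \geq 0$ for every $\varepsilon>0$, forcing $\bm c \cdot \bm h \geq 0$.

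For the nontrivial direction I would proceed by induction on the number of antecedents $k$, with the base case $k=0$ being immediate. For the inductive step, assume \eqref{eq:lambda:1:antecedent} holds and apply Theorem~\ref{th:lambda:1} to move $\bm c_k$ into the consequent: for each $\delta_k>0$ there is some $\lambda_k \geq 0$ such that
\begin{align*}
\forall \bm h \in \aentropic: \quad \bigwedge_{i \in [k-1]} \bm c_i \cdot \bm h \geq 0 \ \Longrightarrow\ (\bm c - \lambda_k \bm c_k) \cdot \bm h + \delta_k h([n]) \geq 0.
\end{align*}
This is again an instance of~\eqref{eq:lambda:1:antecedent}, now with only $k-1$ antecedents and with consequent vector $\bm c - \lambda_k \bm c_k + \delta_k \bm{1}_{[n]}$, where $\bm{1}_{[n]}$ denotes the vector with a $1$ in the coordinate indexed by $[n]$ and $0$ elsewhere. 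Applying the inductive hypothesis gives, for every $\delta'>0$, nonnegative multipliers $\lambda_1, \ldots, \lambda_{k-1}$ such that
\begin{align*}
\bm c \cdot \bm h + (\delta_k + \delta')\,h([n]) \ \geq\ \sum_{i \in [k]} \lambda_i \bm c_i \cdot \bm h \qquad \forall \bm h \in \aentropic.
\end{align*}
To realize any prescribed $\varepsilon>0$ in~\eqref{eq:cor:lambda:1:consequent}, I simply split the error budget across the $k$ peelings, taking $\delta_i = \varepsilon / k$ at each step.

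For the slack version, observe that if $\{\bm c_1, \ldots, \bm c_k\}$ has slack witnessed by some $\bm h^\star \in \aentropic$ with $\bm c_i \cdot \bm h^\star > 0$ for every $i \in [k]$, then for every $j \leq k$ the subset $\{\bm c_1, \ldots, \bm c_j\}$ also has slack, witnessed by the same $\bm h^\star$. Hence the ``moreover'' clause of Theorem~\ref{th:lambda:1} applies uniformly at every step of the induction, permitting $\delta_i = 0$ throughout and yielding \eqref{eq:cor:lambda:1:consequent} with $\varepsilon = 0$. The genuinely hard work has already been absorbed into Theorem~\ref{th:lambda:1} itself, whose nontrivial direction is established via closed convex cone duality; what remains here is purely the bookkeeping of how the $\delta_i$ errors accumulate across iterations and the observation that slack is inherited by subsets of antecedents.
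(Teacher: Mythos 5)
Your proof is correct and follows exactly the route the paper intends: the paper introduces Corollary~\ref{cor:lambda:many} with the single sentence ``By applying the theorem repeatedly, we can move all antecedents to the consequent'' and gives no further argument, so your induction on $k$ is precisely the bookkeeping the authors elided. The key details you correctly supply — that the new consequent vector $\bm c - \lambda_k \bm c_k + \delta_k \bm 1_{[n]}$ yields again an instance of \eqref{eq:lambda:1:antecedent} with one fewer antecedent, that the $\delta$-errors accumulate additively so a split of $\varepsilon$ suffices, and that slack of $\set{\bm c_1,\ldots,\bm c_k}$ is inherited by every prefix $\set{\bm c_1,\ldots,\bm c_j}$ via the same witness $\bm h^\star$, so the ``moreover'' clause of Theorem~\ref{th:lambda:1} applies at each peeling — are all sound.
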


{\bf Antecedents Are Tight} We consider now the case when all
antecedents are tight, a condition that can be verified in $\Pi^0_1$,
by Th.\ref{th:co:re}.  In that case,
condition~\eqref{eq:lambda:1:antecedent} is equivalent to:
  \begin{align}
    \forall p \in \N,\exists q\in\N, \forall \bm h \in \aentropic: && \bm c \cdot \bm h+ \frac{1}{p} h([n]) \geq q \sum_{i\in[k]}\bm c_i \cdot \bm h\label{eq:lambda:many:consequent:tight}
  \end{align}
  Indeed, the non-trivial direction
  \eqref{eq:cor:lambda:1:consequent}$\Rightarrow$~\eqref{eq:lambda:many:consequent:tight}
  follows by setting
  $q \defeq \lceil \max(\lambda_1,\ldots,\lambda_k)\rceil \in \N$ and
  noting that $\bm c_i$ is tight, hence $\bm c_i \cdot \bm h \leq 0$
  and therefore
  $\lambda_i \bm c_i \cdot \bm h \geq q \bm c_i \cdot \bm h$.

\begin{corollary}
  \label{cor:lambda:1:tight} Consider a conditional
  inequality~\eqref{eq:cond:ii}.  If all antecedents are tight, then
  the corresponding decision problem $\aecondiip$ is in $\Pi^0_3$
\end{corollary}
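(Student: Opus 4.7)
The plan is to take the equivalent reformulation \eqref{eq:lambda:many:consequent:tight} as the starting point—the derivation of this reformulation from Corollary~\ref{cor:lambda:many} is already carried out in the paragraph preceding the corollary, using tightness of each $\bm c_i$ to replace every real $\lambda_i \geq 0$ with a single natural number $q \defeq \lceil \max_i \lambda_i \rceil$, the inequality $\lambda_i (\bm c_i \cdot \bm h) \geq q (\bm c_i \cdot \bm h)$ being valid precisely because $\bm c_i \cdot \bm h \leq 0$ on $\aentropic$. So the task collapses to bounding the complexity of
\[
  \forall p \in \N,\ \exists q \in \N,\ \forall \bm h \in \aentropic:\ \bm c \cdot \bm h + \tfrac{1}{p}\, h([n]) \geq q \sum_{i \in [k]} \bm c_i \cdot \bm h.
\]

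For fixed $p, q \in \N$, I would observe that the inner $\forall \bm h \in \aentropic$ statement is, after clearing denominators, a single linear information inequality whose coefficient vector $p\bm c + \bm e_{[n]} - pq\sum_i \bm c_i \in \Z^{2^n}$ is computable from the input $(\bm c, \bm c_1, \ldots, \bm c_k)$ together with $p$ and $q$. This is exactly an instance of $\iip$ (equivalently, $\aebic(F)$ for the single-literal monotone formula $F = Z_1$), and Theorem~\ref{th:co:re} places it in $\Pi^0_1$ uniformly in $(p,q)$: by the representable-space argument, it is expressible as ``for every representable space $\Omega$, the corresponding numerical inequality on $\bm h^\Omega$ holds,'' which is a universal quantifier over natural-number codes followed by a decidable predicate.

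Counting quantifiers now finishes the proof: prepending $\exists q \in \N$ to a $\Pi^0_1$ predicate produces a $\Sigma^0_2$ predicate, and prepending $\forall p \in \N$ in turn gives a $\Pi^0_3$ predicate, placing $\aecondiip$ under the tight-antecedents promise in $\Pi^0_3$. The only nontrivial step in this plan is the appeal to Corollary~\ref{cor:lambda:many} (equivalently, the hard direction of Theorem~\ref{th:lambda:1}), which is already established, so no further obstacle arises; the remaining work is simply the bookkeeping described above.
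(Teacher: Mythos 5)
Your proposal is correct and follows essentially the same route as the paper: reduce to the reformulation \eqref{eq:lambda:many:consequent:tight} (whose derivation from Corollary~\ref{cor:lambda:many} via tightness you correctly recount), observe that the innermost universally quantified statement is a single information inequality handled by the representable-space argument of Theorem~\ref{th:co:re}, and count quantifiers to land in $\Pi^0_3$. Your explicit remark that the inner statement is, after clearing denominators, a uniformly computable $\iip$ instance is a slightly more careful rendering of the same step the paper performs implicitly.
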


\begin{proof}
  Based on our discussion, the inequality~\eqref{eq:cond:ii} is
  equivalent to condition~\eqref{eq:lambda:many:consequent:tight},
  which is of the form $\forall p \exists q \forall \bm h$.  Replace
  $\bm h$ with a representable entropic vector $\bm h^\Omega$, as in
  the proof of Theorem~\ref{th:co:re}, and it becomes
  $\forall p \exists q \forall \bm h^\Omega$, placing it in $\Pi^0_3$.
\end{proof}

Recall that the implication problem for CI is a special case of a
conditional inequality with tight antecedents.  We have seen in
Theorem~\ref{th:ci:in:pi1} that the {\em entropic} version of the CI
implication problem is in $\Pi^0_1$;
Corollary~\ref{cor:lambda:1:tight} proves that the {\em almost
  entropic} version is in $\Pi^0_3$.

Consider any conditional inequality~\eqref{eq:cond:ii} where the
antecedents are tight.  If this inequality holds for all almost
entropic functions, then it can be proven by proving a family of
(unconditional) inequalities~\eqref{eq:lambda:many:consequent:tight}.
In fact, some conditional inequalities in the literature have been
proven precisely in this way.  For example, consider the CI
implication \eqref{eq:k:s} (Sec.~\ref{subsubsection:cond:iip}), and
replace each antecedent $I_h(\bm Y; \bm Z|\bm X)=0$ with
$-I_h(\bm Y; \bm Z|\bm X) \geq 0$.  By
Eq.~\eqref{eq:lambda:many:consequent:tight}, the following condition
holds: $\forall p \in \N, \exists q \in \N$ such that

{\footnotesize
\begin{align}
q(I_h(C;D \suchthat A)+I_h(C;D \suchthat B)+I_h(A;B)+I_h(B;C \suchthat
  D)) + \frac{1}{p}  h(ABCD)
& \geq I_h(C;D) \label{eq:temp5}
\end{align}
}

Thus, in order to prove \eqref{eq:k:s}, it suffices to prove
\eqref{eq:temp5}.  Mat{\'{u}}{\v s}'s inequality~\eqref{eq:matus}
provides precisely the proof of \eqref{eq:temp5} (by setting
$k \defeq p$,
$q \defeq \max(\left\lceil \frac{k+3}{2} \right\rceil, 1)$, and
observing that $I_h(B;D \suchthat C) \leq h(ABCD)$).

{\bf Antecedents Have Slack} Next, we consider the case when the
antecedents have slack, which is a recursively enumerable condition.  In that case,
condition~\eqref{eq:cor:lambda:1:consequent} is equivalent to:
\begin{align}
  \exists \lambda_1 \geq 0, \cdots, \exists \lambda_k \geq 0, \forall \bm h \in \aentropic: &&  \bm c \cdot \bm h \geq \sum_{i \in [k]} \lambda_i \bm c_i \cdot \bm h\label{eq:lambda:many:consequent:slack}
\end{align}
In other words, we have proven the following result of independent
interest: any conditional implication with slack is essentially
unconditioned.  However, we cannot immediately use
\eqref{eq:lambda:many:consequent:slack} to prove complexity bounds for
$\aebic(F)$, because the $\lambda_i$'s in
\eqref{eq:lambda:many:consequent:slack} are not necessarily rational
numbers.  When we derived Eq.~\eqref{eq:lambda:many:consequent:tight}
we used the fact that the antecedents are tight, hence
$\bm c_i \cdot \bm h \leq 0$, hence we could replace the $\lambda_i$'s
with some natural number $q$ larger than all of them.  But now, the
sign of $\bm c_i \cdot \bm h$ is unknown.  We prove below that, under
a restriction called {\em group balance}, the $\lambda_i$'s can be
chosen in $\Q$, placing the decision problem in $\Sigma^0_2$.  Group
balance generalizes Chan's notion of a {\em balanced inequality},
which we review below.  In \conferenceorfull{the full version~\cite{arxiv-version}}{Appendix~\ref{sec:th:rationals}} we give
evidence that some restriction is necessary to ensure the
$\lambda_i$'s are rationals\conferenceorfull{}{ (Example~\ref{ex:s2:irrational})}, and also
show that every conditional inequality can be strengthened to be group
balanced\conferenceorfull{}{ (Prop~\ref{prop:strongly:balanced})}.

A vector $\bm h \in \R^{2^n}$ is called {\em modular} if
$h(\bm X)+h(\bm Y) = h(\bm X \cup \bm Y)+h(\bm X \cap \bm Y)$ forall
sets of variables $\bm X, \bm Y \subseteq \bm V$.
Every non-negative modular function is entropic~\cite{yeung2012book},
and is a non-negative linear combination of the {\em basic modular
  functions} $\bm h^{(1)}, \ldots, \bm h^{(n)}$, where
$h^{(j)}(\alpha) \defeq 1$ when $j \in \alpha$ and is
$h^{(j)}(\alpha) \defeq 0$ otherwise.
%
Chan~\cite{terence-chan-balanced-ii-2004} called an inequality
$\bm c \cdot \bm h \geq 0$ {\em balanced} if
$\bm c \cdot \bm h^{(j)} = 0$ for every $j \in [n]$.  He proved that
{\em any} valid inequality can be strengthened to a balanced one.
More precisely: $\bm c \cdot \bm h \geq 0$ is valid iff
$\bm c \cdot \bm h^{(i)} \geq 0$ for all $i \in [n]$ and
$\bm c \cdot \bm h - \sum_i (\bm c \cdot \bm h^{(i)}) h(X_i \suchthat
X_{[n]-\set{i}}) \geq 0$ is valid; notice that the latter inequality
is balanced.  For example, $h(XY) + h(XZ) - h(X) - h(XYZ) \geq 0$ is
balanced, while $h(XY) - h(X) \geq 0$ is not balanced, and can be
strengthened to $h(XY)-h(X)-h(Y|X) \geq 0$.  We generalize Chan's
definition:

\begin{definition} \label{def:group:balanced} Call a set
  $\set{\bm d_1, \ldots, \bm d_k} \subseteq \R^{2^n}$ {\em group
    balanced} if (a) $\rank(\bm A)=k-1$ where $\bm A$ is the
  $k \times n$ matrix $A_{ij} = \bm d_i \cdot \bm h^{(j)}$, and (b)
  there exists a non-negative modular function $\bm h^{(*)} \neq 0$
  such that $\bm d_i \cdot \bm h^{(*)} = 0$ for all $i$.
\end{definition}

If $k=1$ then $\set{\bm d_1}$ is group balanced iff $\bm d_1$ is
balanced, because the matrix $\bm A$ has a single row
$\left(\bm d\cdot \bm h^{(1)} \cdots \bm d \cdot \bm h^{(n)}\right)$,
and its rank is 0 iff all entries are 0.  We prove in \conferenceorfull{\cite{arxiv-version}}{Appendix~\ref{sec:th:rationals}}:

\begin{theorem} \label{th:rationals} Consider a group balanced set of $n$
  vectors with rational coefficients,
  $D = \set{\bm d_1, \ldots, \bm d_n} \subseteq \Q^{2^n}$.  Suppose
  the following condition holds:
  \begin{align}
\exists \lambda_1 \geq 0, \cdots, \exists \lambda_n \geq 0, \sum_{i \in [n]} \lambda_i=1, \forall \bm h \in \aentropic: && \sum_{i\in[n]} \lambda_i \bm d_i \cdot \bm h \geq 0 \label{eq:lambda:local}
  \end{align}
  Then there exists rational $\lambda_1, \ldots, \lambda_k \geq 0$
  with this property.
\end{theorem}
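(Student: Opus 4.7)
The plan is to exploit the structural consequences of group balance (conditions~(a) and~(b) of Definition~\ref{def:group:balanced}) to show that the real feasibility set
\[
\Lambda_{\R} \defeq \bigl\{\lambda \in \R^n_{\geq 0} : \textstyle\sum_i \lambda_i = 1,\ \sum_i \lambda_i \bm d_i \in (\aentropic)^*\bigr\}
\]
is cut out from the simplex by finitely many rational linear (in)equalities together with one residual validity condition that is uniform on the rational region; hence $\Lambda_{\R}$ is a non-empty bounded rational polyhedron, which has a rational vertex.

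Setup: write $\bm A \in \Q^{n \times n}$ for the matrix with $A_{ij} = \bm d_i \cdot \bm h^{(j)}$. By condition~(a), the left nullspace of $\bm A$ is one-dimensional; since $\bm A$ has rational entries, it is spanned by a rational vector $\bm \mu \in \Q^n$. By condition~(b), the right nullspace of $\bm A$ contains a non-negative non-zero vector (the coordinates of $\bm h^{(*)}$ in the basis of basic modular functions); since this nullspace is also one-dimensional, we may take its generator to be rational, $\alpha \in \Q^n_{\geq 0}$. Let $S \defeq \{j \in [n] : \alpha_j > 0\}$.

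First step (forced equalities): for any real feasible $\lambda$ and any $j \in [n]$, the basic modular vector $\bm h^{(j)}$ is entropic, so $(\bm A^T \lambda)_j = \sum_i \lambda_i (\bm d_i \cdot \bm h^{(j)}) \geq 0$. Combining with $\alpha \cdot (\bm A^T \lambda) = \lambda^T (\bm A \alpha) = 0$ and $\alpha \geq 0$, every non-negative summand must vanish, so $(\bm A^T \lambda)_j = 0$ for every $j \in S$. Consequently $\Lambda_\R$ is contained in the rational polytope
\[
P \defeq \bigl\{\lambda \in \R^n_{\geq 0} : \textstyle\sum_i \lambda_i = 1,\ (\bm A^T \lambda)_j = 0 \text{ for } j \in S,\ (\bm A^T \lambda)_j \geq 0 \text{ for } j \notin S\bigr\}.
\]

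Second step (invariance on $P$): show that once one $\lambda^* \in P$ satisfies $T(\lambda^*) \in (\aentropic)^*$, the same holds for every $\lambda \in P$. The plan is to invoke the Chan-style characterization reviewed earlier in the paper---$\bm c \in (\aentropic)^*$ iff $\bm c \cdot \bm h^{(j)} \geq 0$ for each $j$ and the balanced correction $\bm c' = \bm c - \sum_j (\bm c \cdot \bm h^{(j)})\,\bm f_j$ is valid, where $\bm f_j$ encodes $h(X_j \mid X_{[n] \setminus \{j\}})$---reducing the residual condition on $T(\lambda)$ to a validity condition on $T'(\lambda)$. The plan is then to use the rank-$(n{-}1)$ hypothesis~(a) together with the forced equalities on $S$ to show that, modulo the already-imposed simplex and equality constraints, the only residual freedom in $\lambda$ is along the rational left null direction $\bm \mu$, and that moving $\lambda$ by a rational multiple of $\bm \mu$ changes $T'(\lambda)$ by a vector in the Shannon cone, which is contained in $(\aentropic)^*$. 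Once invariance is established, $\Lambda_\R = P$ is a non-empty bounded rational polyhedron and any of its (rational) vertices provides the desired rational $\lambda$.

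The main obstacle is the invariance claim in the second step: translating the non-polyhedral condition $T(\lambda) \in (\aentropic)^*$ into a property of the rational polytope $P$ alone. Conditions~(a) and~(b) are tailored exactly for this---(b) kills the coordinates of $\bm A^T \lambda$ on $S$, while (a) compresses the residual freedom into one rational direction $\bm \mu$---but carrying out the reduction requires a careful linear-algebraic tracking of how the balanced correction $\bm c'$ responds to translation of $\lambda$ along $\bm \mu$, and that is where the bulk of the technical work will lie.
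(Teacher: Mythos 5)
Your first step is sound and coincides with the paper's key computation: pairing $\sum_i\lambda_i\bm d_i$ with the non-negative modular function $\bm h^{(*)}=\sum_j\alpha_j\bm h^{(j)}$ gives $0=\sum_j\alpha_j(\bm A^T\lambda)_j$ with every summand non-negative, forcing $(\bm A^T\lambda)_j=0$ for $j$ in the support $S$ of $\alpha$. The gap is everything after that. Your second step is explicitly left as a plan, and both of its load-bearing claims are problematic. First, the assertion that ``the only residual freedom in $\lambda$ is along $\bm\mu$'' holds only when the forced equalities apply to \emph{all} $j$, i.e.\ $S=[n]$; but in that case $P\cap\{\sum_i\lambda_i=1\}$ is already the intersection of the one-dimensional rational left nullspace of $\bm A$ with a rational hyperplane, hence a single rational point, and no invariance argument is needed at all. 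If instead $S\subsetneq[n]$, the equality constraints cut out a subspace of dimension possibly larger than one, the residual freedom is \emph{not} confined to $\bm\mu$, and your reduction collapses. Second, the proposed mechanism for invariance---that translating $\lambda$ by a multiple of $\bm\mu$ changes the corrected vector by an element of the Shannon cone---is unsupported: the change in $\sum_i\lambda_i\bm d_i$ is $t\sum_i\mu_i\bm d_i$, and for invariance in both directions you would need both $\pm\sum_i\mu_i\bm d_i$ to be valid inequalities, which (since $(\aentropic)^*$ contains no line through a nonzero balanced vector in general) essentially forces $\sum_i\mu_i\bm d_i=0$ and does not follow from the hypotheses. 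So the ``bulk of the technical work'' you defer is not a technical verification but the actual mathematical content, and the route you sketch for it would fail.

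The paper's proof avoids all of this. It argues that $\operatorname{rank}(\bm A)=n-1$ forces the generator $\alpha$ of the right nullspace to be strictly positive, so $S=[n]$ and your first step already yields $\bm A^T\lambda=0$ for \emph{every} feasible $\lambda$. Hence the entire feasibility cone $\Lambda$ is contained in the left nullspace of $\bm A$, which is one-dimensional and spanned by a rational vector because $\bm A$ is rational; since $\Lambda\neq\{\bm 0\}$ by hypothesis, $\Lambda$ is a single ray with a rational non-negative generator, which after normalization gives the rational $\lambda_i$'s. In short: the conclusion is not that the feasible set is a rational polytope with rational vertices, but that it is (at most) a single ray living in a rational line---a much stronger and much easier statement once the forced equalities hold at every coordinate. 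I recommend you replace your step two entirely by this dimension count, and make sure you justify (or assume) the strict positivity of $\alpha$ that makes $S=[n]$.
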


This implies that, if $\bm c_1, \ldots, \bm c_k$ have slack and
$\set{\bm c, -\bm c_1, \ldots, -\bm c_k}$ is group balanced, then
there exist rational $\lambda_i$'s for
inequality~\eqref{eq:lambda:many:consequent:slack}.  In particular:

\begin{corollary} \label{cor:lambda:1:slack} Consider a conditional
  inequality~\eqref{eq:cond:ii}.  If the antecedents have slack and
  $\set{\bm c, -\bm c_1, \ldots, -\bm c_k}$ is group balanced, then
  the corresponding decision problem is in $\Sigma^0_2$.
\end{corollary}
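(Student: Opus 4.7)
The plan is to invoke Corollary~\ref{cor:lambda:many} to move the antecedents into the consequent, then invoke Theorem~\ref{th:rationals} to argue that the resulting multipliers can be taken rational, and finally observe that with rational multipliers the claim becomes an existential quantifier over rationals followed by a $\Pi^0_1$ condition, which is $\Sigma^0_2$.

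First I would observe that, since $\set{\bm c_1, \ldots, \bm c_k}$ has slack, Corollary~\ref{cor:lambda:many} (with $\varepsilon=0$) shows that the conditional inequality~\eqref{eq:cond:ii} is equivalent to \eqref{eq:lambda:many:consequent:slack}, namely
\begin{align*}
  \exists \lambda_1 \geq 0, \ldots, \exists \lambda_k \geq 0, \forall \bm h \in \aentropic: \quad \bm c \cdot \bm h \geq \sum_{i \in [k]} \lambda_i \bm c_i \cdot \bm h.
\end{align*}
To line this up with Theorem~\ref{th:rationals}, set $\bm d_0 \defeq \bm c$ and $\bm d_i \defeq -\bm c_i$ for $i \in [k]$, so $D \defeq \set{\bm d_0, \bm d_1, \ldots, \bm d_k}$ is the group balanced set from the hypothesis (and has rational entries, since the inputs to $\aecondiip$ are integer vectors). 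Given any non-negative reals $\lambda_i$ realising the above, put $S \defeq 1 + \sum_{i \in [k]} \lambda_i$, $\mu_0 \defeq 1/S$ and $\mu_i \defeq \lambda_i/S$; then the $\mu_i$'s are non-negative, sum to $1$, and satisfy $\sum_{i=0}^{k} \mu_i\, \bm d_i \cdot \bm h \geq 0$ for every $\bm h \in \aentropic$. Conversely, if such $\mu_i$'s exist with $\mu_0 > 0$, then $\lambda_i \defeq \mu_i/\mu_0$ recovers the original form.

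Next I would apply Theorem~\ref{th:rationals} to $D$ to obtain \emph{rational} non-negative $\mu_0, \ldots, \mu_k$ summing to $1$ with $\sum_{i=0}^{k} \mu_i\, \bm d_i \cdot \bm h \geq 0$ holding $\aentropic$-universally. The remaining subtlety is to rule out $\mu_0 = 0$. If $\mu_0=0$ then $\sum_{i=1}^{k} \mu_i = 1$, so some $\mu_j > 0$, and the conclusion becomes $\sum_{i \in [k]} \mu_i(-\bm c_i) \cdot \bm h \geq 0$ for every $\bm h \in \aentropic$. But slack furnishes an $\bm h^\star \in \aentropic$ with $\bm c_i \cdot \bm h^\star > 0$ for every $i \in [k]$, giving $\sum_{i \in [k]} \mu_i(-\bm c_i) \cdot \bm h^\star < 0$, a contradiction. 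Hence $\mu_0 > 0$ and rational $\lambda_i \defeq \mu_i/\mu_0 \geq 0$ realising \eqref{eq:lambda:many:consequent:slack} exist.

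Finally, \eqref{eq:cond:ii} is equivalent to the statement: there exist rationals $\lambda_1, \ldots, \lambda_k \geq 0$ such that the \emph{unconditioned} linear information inequality $\bigl(\bm c - \sum_{i \in [k]} \lambda_i \bm c_i\bigr) \cdot \bm h \geq 0$ is $\aentropic$-valid. Clearing denominators produces an integer coefficient vector, so by the representable-distribution argument used in the proof of Theorem~\ref{th:co:re} the inner condition is in $\Pi^0_1$. Encoding the rational tuple $(\lambda_1, \ldots, \lambda_k)$ by a natural number, the whole statement is of the form $\exists \cdot \forall \cdot \text{(decidable)}$ and therefore lies in $\Sigma^0_2$. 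I expect the main obstacle to be the reduction to Theorem~\ref{th:rationals}: one must normalise so the coefficients sum to $1$ before applying the theorem, and then exploit slack to guarantee $\mu_0 > 0$ so that rational $\lambda_i$'s can be recovered by division. Without the group balance hypothesis the $\lambda_i$'s of \eqref{eq:lambda:many:consequent:slack} could be irrational, and the $\Sigma^0_2$ bound would be out of reach.
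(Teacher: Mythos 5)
Your proposal is correct and follows the paper's intended argument exactly: apply the slack case of Corollary~\ref{cor:lambda:many} to reduce to~\eqref{eq:lambda:many:consequent:slack}, use Theorem~\ref{th:rationals} on the group balanced set $\set{\bm c, -\bm c_1, \ldots, -\bm c_k}$ to get rational multipliers, and enumerate representable spaces for the inner $\Pi^0_1$ check. The normalization and the $\mu_0>0$ step you supply are details the paper leaves implicit (the latter also follows from the proof of Theorem~\ref{th:rationals}, where the solution set is shown to be a single rational ray containing the real solution with $\mu_0 = 1/S > 0$), and your slack-based argument for it is a valid way to close that gap.
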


We end this section by illustrating with an example:

\begin{example}
  \label{ex:conditional:with:slack} Consider the following conditional
  inequality:

  {\footnotesize
  \begin{align}
    h(XYZ) + h(X) \geq 2h(XY) \wedge h(XYZ) + h(Y) \geq 2h(YZ)\ \ \Rightarrow\ \ 2h(XZ)\geq h(XYZ)+h(Z)
    \label{eq:kopparty:rossman:cond}
  \end{align}
}

The antecedents have slack, because, by setting\footnote{Where
  $\bm h^{(X)}$ denotes the basic modular function at $X$,
  i.e. $h^{(X)}(X)=1$, $h^{(X)}(Y)=h^{(X)}(Z)=0$.}
$\bm h \defeq 2\bm h^{(X)} + \bm h^{(Z)}$, both antecedents become
strict inequalities: $h(XYZ)+h(X)-2h(XY) = 3+2 - 4 > 0$ and
$h(XYZ)+h(Y)-2h(YZ)=3+0-2 > 0$.  To check validity, we prove in
Example~\ref{ex:kopparty:rossman:sum} the following inequality:
  \begin{align*}
    (2h(XY)-h(XYZ)-h(X))+(2h(YZ)-h(XYZ)-h(Y))+(2h(XZ)-h(XYZ)-h(Z))\geq & 0
  \end{align*}
  and this immediately implies \eqref{eq:kopparty:rossman:cond}.

  Consider now the following set
  $D = \set{\bm d_1, \bm d_2, \bm d_3}$, where the vectors
  $\bm d_1, \bm d_2, \bm d_3$ represent the expressions
  $2h(XY)-h(XYZ)-h(X)$, $2h(YZ)-h(XYZ)-h(Y)$, and $2h(XZ)-h(XYZ)-h(Z)$
  respectively.  We prove that $D$ is group balanced.  To check
  condition (a) of Def.~\ref{def:group:balanced} we verify that the
  matrix $\bm A$ has rank 2; in our example the matrix is
  $\bm A = \left(
    \begin{array}{ccc}
      0&1&-1 \\ -1&0&1 \\ -1&1&0
    \end{array}
  \right)$ and its rank is 2 as required.  To check condition (b), we
  define $\bm h^{(*)} = h^{(X)}+h^{(Y)}+h^{(Z)}$ and verify that
  $\bm d_1 \cdot \bm h^{(*)} = \bm d_2 \cdot \bm h^{(*)} = \bm d_3
  \cdot \bm h^{(*)} = 4-3-1 = 0$.  Thus, $D$ is group balanced.
\end{example}

\subsection{Discussion on the Decidability of  $\maxiip$}

A proof of the decidability of $\maxiip$ would immediately imply that
the domination problem $\bm A \preceq \bm B$ for acyclic structures
$\bm B$ is also decidable~\cite{DBLP:journals/corr/abs-1906-09727}.
It is currently open whether $\maxiip$ is decidable, or even if the
special case $\iip$ is decidable.  But what can we say about the
domination problem if $\iip$ were decidable?  Theorem~\ref{th:co:re}
only says that both problems are in $\Pi^0_1$, and does not tell us
anything about $\maxiip$ if $\iip$ were decidable.  We prove here
that, the decidability of $\iip$ implies the decidability of
group-balanced $\maxiip$.  We start with a result of general interest,
which holds even for conditional Max-Information constraints.

\begin{theorem} \label{th:lambdas:real} The following two statements are
  equivalent:
  \begin{align}
      \forall \bm h \in \aentropic: && \bigwedge_{i\in [k]} \bm c_i \cdot \bm h \geq 0 \Rightarrow
    \bigvee_{j\in [m]} \bm d_j \cdot \bm h \geq 0\label{eq:lambda:1}\\
  \exists \lambda_1, \ldots, \lambda_m \geq 0, \sum_j \lambda_j = 1, \forall \bm h \in
    \aentropic: && \bigwedge_{i\in [k]} \bm c_i \cdot \bm h \geq 0 \Rightarrow \sum_{j\in [m]} \lambda_j\bm d_j \cdot \bm h \geq 0\label{eq:lambda:2}
  \end{align}
\end{theorem}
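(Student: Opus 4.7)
The plan is to prove the easy direction (\ref{eq:lambda:2})$\Rightarrow$(\ref{eq:lambda:1}) by a convex-combination bound, and then to prove the nontrivial direction (\ref{eq:lambda:1})$\Rightarrow$(\ref{eq:lambda:2}) by a hyperplane separation argument in $\R^m$.

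For (\ref{eq:lambda:2})$\Rightarrow$(\ref{eq:lambda:1}): fix $\lambda_1,\ldots,\lambda_m \geq 0$ with $\sum_j \lambda_j = 1$ witnessing (\ref{eq:lambda:2}), and any $\bm h \in \aentropic$ satisfying all $\bm c_i \cdot \bm h \geq 0$. Then $\max_j \bm d_j \cdot \bm h \geq \sum_j \lambda_j \bm d_j \cdot \bm h \geq 0$, so some $\bm d_j \cdot \bm h \geq 0$, which is (\ref{eq:lambda:1}).

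For (\ref{eq:lambda:1})$\Rightarrow$(\ref{eq:lambda:2}), let $K \defeq \setof{\bm h \in \aentropic}{\bm c_i \cdot \bm h \geq 0 \text{ for all } i \in [k]}$. Since $\aentropic$ is a convex cone and each $\bm c_i \cdot \bm h \geq 0$ defines a half-space through the origin, $K$ is a convex cone containing $\bm 0$. Its image under the linear map $\bm h \mapsto (\bm d_1 \cdot \bm h, \ldots, \bm d_m \cdot \bm h)$,
\[
A \defeq \setof{(\bm d_1 \cdot \bm h, \ldots, \bm d_m \cdot \bm h)}{\bm h \in K} \subseteq \R^m,
\]
is therefore a convex cone containing the origin. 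Statement (\ref{eq:lambda:1}) is exactly the assertion that $A$ is disjoint from the open negative orthant $N \defeq (-\infty,0)^m$. Apply the hyperplane separation theorem to the disjoint convex sets $A$ and $N$ (with $N$ nonempty and open): there exist $\lambda \in \R^m \setminus \set{\bm 0}$ and $c \in \R$ such that $\lambda \cdot \bm a \geq c \geq \lambda \cdot \bm b$ for all $\bm a \in A$ and $\bm b \in N$. Since $\bm 0 \in A$ we get $c \leq 0$, and since $\bm 0 \in \overline{N}$ we get $c \geq 0$ by continuity, so $c = 0$. The condition $\lambda \cdot \bm b \leq 0$ for every $\bm b \in N$ forces $\lambda_j \geq 0$ for every $j$: if some $\lambda_{j_0} < 0$, sending $b_{j_0} \to -\infty$ while keeping the other coordinates bounded would drive $\lambda \cdot \bm b$ to $+\infty$, a contradiction. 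Since $\lambda$ is nonzero and nonnegative, rescaling yields $\sum_j \lambda_j = 1$, and then $\lambda \cdot \bm a \geq 0$ for all $\bm a \in A$ is exactly (\ref{eq:lambda:2}).

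The one subtle point is that $K$ is an unbounded cone through the origin, so Sion's minimax cannot be applied directly; separating the convex cone $A$ from the \emph{open} orthant $N$ circumvents this, and the openness of $N$ is precisely what forces both $c = 0$ and $\lambda \geq 0$. The argument uses only the fact that $\aentropic$ is a convex cone, so no non-Shannon machinery enters.
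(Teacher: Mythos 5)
Your proof is correct and takes essentially the same route as the paper's: both project the constrained cone $\setof{\bm h\in\aentropic}{\bm c_i\cdot\bm h\geq 0,\ i\in[k]}$ into $\R^m$ via $\bm h\mapsto(\bm d_1\cdot\bm h,\ldots,\bm d_m\cdot\bm h)$ and then separate the resulting convex cone from the open negative orthant $(-\infty,0)^m$ to extract a nonzero nonnegative $\bm\lambda$. The only difference is packaging: the paper isolates this step as Lemma~\ref{lemma:lambda:real} and proves it via cone duality (showing $\cone(A\cup\set{\bm e_1,\ldots,\bm e_m})$ has nontrivial dual using $L^{**}=\cl{\cone(L)}$), whereas you invoke the separating-hyperplane theorem directly; the two are interchangeable.
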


The theorem says that every $\max$-inequality is essentially a linear
inequality.  The proof of ~\eqref{eq:lambda:2} $\Rightarrow$
~\eqref{eq:lambda:1} is immediate; we prove the reverse in
\conferenceorfull{\cite{arxiv-version}}{Appendix~\ref{app:max:inequality}}. As before, we don't know whether
these coefficients $\lambda_i$ can be chosen to be rational numbers in
general, but by Theorem~\ref{th:rationals} this is the case when
$\set{\bm c_1, \ldots, \bm c_k}$ is group-balanced, and this implies:

\begin{corollary} \label{cor:maxiip:iip}
  The $\maxiip$ problem where the inequalities
  $\bm c_1, \ldots, \bm c_n$ are group balanced is Turing equivalent
  to the $\iip$ problem.
\end{corollary}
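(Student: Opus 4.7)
The plan is to combine Theorem~\ref{th:lambdas:real} with Theorem~\ref{th:rationals} to reduce the group-balanced $\maxiip$ problem to $\iip$, and then to argue the reverse direction using Chan's balancing result together with the fact that $\iip$ is the single-disjunct case of $\maxiip$.

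For the forward direction (group-balanced $\maxiip \leq_T \iip$), I would first apply Theorem~\ref{th:lambdas:real} with an empty antecedent ($k=0$) to the disjunction $\bigvee_{j \in [n]} \bm c_j \cdot \bm h \geq 0$. This yields that the $\maxiip$ instance is $\aentropic$-valid if and only if there exist real $\lambda_1, \ldots, \lambda_n \geq 0$ with $\sum_j \lambda_j = 1$ such that the single linear inequality $\left(\sum_j \lambda_j \bm c_j\right) \cdot \bm h \geq 0$ is $\aentropic$-valid. Using the hypothesis that $\{\bm c_1, \ldots, \bm c_n\}$ is group balanced, Theorem~\ref{th:rationals} lets me restrict attention to rational $\lambda_j$'s. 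The Turing reduction then runs two semi-decision procedures in parallel: (i) enumerate all rational tuples $(\lambda_1, \ldots, \lambda_n) \in \Q_{\geq 0}^n$ with $\sum_j \lambda_j = 1$, clear denominators to produce an integer vector $\bm c' = L \sum_j \lambda_j \bm c_j$, and query the $\iip$ oracle on $\bm c'$, accepting on the first valid hit; (ii) enumerate representable spaces $\Omega$ (as in the proof of Theorem~\ref{th:co:re}) and reject as soon as some $\bm h^\Omega$ falsifies every disjunct $\bm c_j \cdot \bm h^\Omega \geq 0$. Procedure (i) halts iff a rational convex combination witnesses validity, which by Theorem~\ref{th:rationals} is iff the $\maxiip$ instance is $\aentropic$-valid; procedure (ii) halts iff the $\maxiip$ instance fails, by Proposition~\ref{prop:rational:p}. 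Together they decide the problem relative to the $\iip$ oracle.

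For the reverse direction ($\iip \leq_T$ group-balanced $\maxiip$), observe that a $\maxiip$ instance with a single disjunct is just an ordinary information inequality, so it suffices to reduce $\iip$ to the case where that single disjunct is balanced---which, as noted immediately after Definition~\ref{def:group:balanced}, coincides with being group balanced for $k=1$ (take $\bm h^{(*)}$ to be any basic modular function). Applying Chan's balancing result, an arbitrary inequality $\bm c \cdot \bm h \geq 0$ is $\aentropic$-valid iff the finitely many arithmetic conditions $\bm c \cdot \bm h^{(i)} \geq 0$ hold and the strengthened inequality $\bm c \cdot \bm h - \sum_i (\bm c \cdot \bm h^{(i)}) h(X_i \mid X_{[n]\setminus\{i\}}) \geq 0$ is valid. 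The strengthened inequality is balanced, hence a one-disjunct group-balanced $\maxiip$ instance, so a single oracle call settles it.

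The main obstacle is the forward direction, and specifically ruling out non-rational $\lambda_j$'s: without the rationality guarantee from Theorem~\ref{th:rationals} we would only know real coefficients exist, and these are not effectively enumerable, giving no Turing reduction to $\iip$. The group-balance hypothesis is precisely what licenses rational convex combinations, and it is the only place in the argument where that hypothesis is used.
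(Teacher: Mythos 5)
Your proof is correct and the forward reduction is essentially the paper's own argument: two parallel semi-decision procedures, one enumerating coefficient tuples (the paper uses tuples in $\N^m$, which is equivalent to your rational tuples after clearing denominators, since validity of $\sum_j \lambda_j \bm c_j \cdot \bm h \geq 0$ is scale-invariant) and querying the $\iip$ oracle, justified exactly by Theorems~\ref{th:lambdas:real} and~\ref{th:rationals}, the other enumerating representable spaces to detect counterexamples via Proposition~\ref{prop:rational:p}. Your explicit treatment of the reverse direction via Chan's balancing is a welcome addition: the paper's proof only describes the reduction from $\maxiip$ to $\iip$ and leaves the converse implicit, even though an arbitrary $\iip$ instance need not be group balanced as a one-disjunct $\maxiip$ instance.
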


\begin{proof}
  We describe a Turing reduction from $\maxiip$ to $\iip$.  Consider a
  $\maxiip$ problem,
  $\bigvee_{j\in [m]} (\bm c_j \cdot \bm h \geq 0)$.  We run two
  computations in parallel.  The first computation iterates over all
  representable spaces $\Omega$, and checks whether
  $\bigwedge_j (\bm c_j \cdot \bm h^\Omega < 0)$; if we find such a
  space then we stop and we return {\em false}.  If the inequality is
  invalid then this computation will eventually terminate because in
  that case there exists a representable counterexample $\Omega$.  The
  second computation iterates over all $m$-tuples of natural numbers
  $(\lambda_1, \ldots, \lambda_m) \in \N^m$ and checks
  $\forall \bm h \in \entropic, \sum_j \lambda_j \bm c_j \cdot \bm h
  \geq 0$ by using the oracle for $\iip$: if it finds such
  $\lambda_j$'s, then it stops and returns {\em true}.  If the
  inequality is valid then this computation will eventually terminate,
  by Theorems~\ref{th:lambdas:real} and~\ref{th:rationals}.
\end{proof}

We illustrate with an example.

\begin{example} \label{ex:kopparty:rossman:sum} Consider Kopparty and
  Rossman's inequality \eqref{eq:kopparty:rossman}, which can be
  stated as $\max(\bm c_1, \bm c_2, \bm c_3) \geq 0$, where
  $\bm c_1, \bm c_2, \bm c_3$ define the three expressions in
  \eqref{eq:kopparty:rossman}.  To prove that it is valid, it suffices
  to prove that their sum is $\geq 0$; we show this briefly
  here\footnote{We apply submodularity: $h(XY)-h(X) \geq h(XYZ)-h(XZ)$
    etc.}:
  \begin{align*}
    & (2h(XY) - h(X))  +  (2h(YZ) - h(Y)) +  (2h(XZ) - h(Z)) - 3h(XYZ) \\
    &= (h(XY)+h(YZ) + h(XZ))
    + (h(XY) - h(X))  +  (h(YZ) - h(Y)) +  (h(XZ) - h(Z)) \\&\quad- 3h(XYZ) \\
    &\geq (h(XY)+h(YZ) + h(XZ)) +
    (h(XYZ)-h(XZ)) + (h(XYZ)-h(XY)) \\
    & \qquad +(h(XYZ)-h(YZ))-3h(XYZ)
    =0
  \end{align*}
  Theorem~\ref{th:lambdas:real} proves that {\em any} max-inequality
  necessarily follows from such a linear inequality; we just have to
  find the right $\lambda_i$'s.  In this example, the set
  $\bm c_1, \bm c_2, \bm c_3$ is group balanced (as we showed in
  Example~\ref{ex:conditional:with:slack}), therefore there exists
  rational $\lambda_i$'s; indeed, our choice here is
  $\lambda_1=\lambda_2=\lambda_3=1$.
\end{example}

\section{The Recognizability Problems}

\label{sec:recognizability}

We study here two problems that are the dual of the Boolean
information constraint problem.  The {\em entropic-recognizability
  problem} takes as input a vector $\bm h$ and checks if
$\bm h \in \entropic$.  The {\em almost-entropic-recognizability
  problem} checks if $\bm h \in \aentropic$.  We will prove that the
latter is in $\Pi^0_2$, and leave open the complexity of the former.

Before we define these problems formally, we must first address the
question of how to represent the input $\bm h$.  One possibility is to
represent $\bm h$ as a vector of rational numbers, but this is
unsatisfactory, because usually entropies are not rational numbers.
Instead, we will allow a more general representation.  To justify it,
assume first that $\bm h$ were given by some representable space
$\Omega$ (Sec.~\ref{sec:unconditional:bic}), where all probabilities
are rational numbers.  In that case, every term $p_i \log p_i$ in the
definition of the entropy can be written as $\log (p_i^{p_i})$, hence
the quantity $h(\bm X)$ has the form
$h(\bm X) = \log \prod_i p_i^{p_i}$.  In general, any product
$\prod_i m_i^{n_i}$ where $m_i, n_i \in Q$, for $i=1,n$, can be
rewritten as $\left(\frac{a}{b}\right)^{\frac{1}{c}}$, where
$a, b, c \in \N$.  Indeed, writing $m_i = u_i/v_i$ and $n_i = s_i/t_i$
where $u_i, v_i, s_i, t_i \in \N$, we have:
\begin{align*}
\prod_i \left(\frac{u_i}{v_i}\right)^{\frac{s_i}{t_i}} = &
\prod_i \left(\frac{u_i^{s_i}}{v_i^{s_i}}\right)^{\frac{1}{t_i}} =
\left(\prod_i \frac{u_i^{s_i\cdot \prod_{j\neq i}t_j}}{v_i^{s_i\cdot \prod_{j\neq i}t_j}}\right)^{\frac{1}{\prod_i t_i}} =
\left(\frac{a}{b}\right)^{\frac{1}{c}} && a, b, c \in \N
\end{align*}
Justified by this observation, we assume that the input to our problem
consists of three vectors $(a_{\bm X})_{\bm X \subseteq \bm V}$,
$(b_{\bm X})_{\bm X \subseteq \bm V}$, and
$(c_{\bm X})_{\bm X \subseteq \bm V}$ in $\N^{2^n}$, with the
convention that
$h(\bm X) \defeq \frac{1}{c_{\bm X}} \log \frac{a_{\bm X}}{b_{\bm
    X}}$.  Thus, we do not assume that these vectors come from a
representable space $\Omega$, we only assume their entropies can be
represented in this form.

\begin{definition}[(Almost-)Entropic Recognizability
  Problem] \label{def:recognizability} Given natural
  numbers $(a_{\bm X})_{\bm X \subseteq \bm V}$,
  $(b_{\bm X})_{\bm X \subseteq \bm V}$ and
  $(c_{\bm X})_{\bm X \subseteq \bm V}$, check whether the vector
  $h({\bm X}) \defeq \frac{1}{c_{\bm X}} \log \frac{a_{\bm X}}{b_{\bm
      X}}$, $\bm X \subseteq \bm V$, represents an entropic vector, or
  an almost-entropic vector.
\end{definition}

Our result in this section is (see \conferenceorfull{\cite{arxiv-version}}{Appendix~\ref{app:almost:recognizability:pi2}} for a proof):

\begin{theorem} \label{th:almost:recognizability:pi2} The almost entropic
  recognizability problem is in $\Pi^0_2$.
\end{theorem}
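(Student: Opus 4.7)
The plan is to characterize $\aentropic$-membership by approximation with representable spaces and then verify that the approximation predicate is decidable. First, since $\aentropic$ is by definition the topological closure of $\entropic$, we have $\bm h \in \aentropic$ iff for every $\varepsilon > 0$ there exists some $\bm h' \in \entropic$ with $\|\bm h - \bm h'\|_\infty < \varepsilon$. Combining this with Proposition~\ref{prop:rational:p} via a standard triangle-inequality argument yields the key equivalence
\[
\bm h \in \aentropic \iff \forall q \in \N_{+}, \ \exists \text{ representable } \Omega : \ \|\bm h - \bm h^\Omega\|_\infty < 1/q.
\]
This is syntactically of the form $\forall q\,\exists \Omega\,R(q,\Omega,\bm h)$, so once $R$ is shown to be decidable we obtain the $\Pi^0_2$ bound.

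The main thing to verify is decidability of the inner predicate. For the given input, each coordinate has the form $h(\bm X) = \frac{1}{c_{\bm X}} \log \frac{a_{\bm X}}{b_{\bm X}}$, and for a representable $\Omega$ with rational pmf the coordinate $h^\Omega(\bm X) = -\sum_x p_x \log p_x$ can, by the identity recorded just before Definition~\ref{def:recognizability}, be rewritten in the same form $\frac{1}{c'}\log\frac{a'}{b'}$ with $a',b',c'\in\N$ computable from $\Omega$. Hence the difference admits a representation $h(\bm X) - h^\Omega(\bm X) = \frac{1}{N}\log\frac{A}{B}$ for integers $A,B,N$ effectively computable from the input. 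The condition $\bigl|\tfrac{1}{N}\log\tfrac{A}{B}\bigr| < \tfrac{1}{q}$ is equivalent to $2^{-N/q} < A/B < 2^{N/q}$, and raising to the $q$-th power turns this into the two integer comparisons $A^q < B^q\cdot 2^N$ and $B^q < A^q \cdot 2^N$, which are plainly decidable. Taking the conjunction over all $\bm X \subseteq [n]$ gives a finite decidable test.

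Putting it together: for each $q$ we enumerate candidate representable spaces $\Omega$ (finite relations with rational probabilities summing to one) and decidably check the conjunction above; this shows the inner matrix is $\Sigma^0_1$ (in fact computable once $\Omega$ is fixed) and the overall statement is in $\Pi^0_2$.

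The main subtlety to watch is that $h(\bm X)-h^\Omega(\bm X)$ and the threshold $1/q$ live in generally irrational territory, so one cannot naively compare floats or directly compare to $2^{N/q}$; the reduction to integer comparisons $A^q$ vs.\ $B^q\cdot 2^N$ (using the algebraic form of the input already made explicit by the paper) is the crucial device that makes the inner predicate decidable. Once that is in place the quantifier counting gives $\Pi^0_2$ immediately.
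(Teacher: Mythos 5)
Your proof is correct, but it takes a genuinely different route from the paper's. You work on the primal side: since $\aentropic=\cl{\entropic}$ and, by Proposition~\ref{prop:rational:p}, representable spaces are dense in $\entropic$, membership in $\aentropic$ is exactly the $\forall q\,\exists\Omega$ approximation statement, and your reduction of each coordinate comparison $\bigl|\tfrac{1}{N}\log\tfrac{A}{B}\bigr|<\tfrac{1}{q}$ to the integer tests $A^q<B^q2^N$ and $B^q<A^q2^N$ correctly makes the matrix decidable (this is the same algebraic device the paper sets up just before Definition~\ref{def:recognizability}). The paper instead works on the dual side: it proves a separating-hyperplane lemma (Lemma~\ref{lemma:separating:hyperplane}), which requires Studen\'y's regularity machinery to guarantee that a violated inequality can be taken with \emph{integer} coefficients, and then writes $\bm h\in\aentropic$ as $\forall \bm c\,(\neg P(\bm c)\vee \bm c\cdot\bm h\geq 0)$, where $P(\bm c)$ is $\iip$-validity, known to be $\Pi^0_1$ by Theorem~\ref{th:co:re}. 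Your argument is more elementary and self-contained, needing only the density of representable vectors and no convex-geometric input. The paper's formulation buys a tighter connection to the validity problem: it exhibits almost-entropic recognizability as co-r.e.\ relative to an $\iip$ oracle, so a proof that $\iip$ is decidable would immediately drop the bound to $\Pi^0_1$, a refinement your quantifier structure does not directly yield. Both establish the stated $\Pi^0_2$ bound.
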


We end with a brief comment on the complexity of the
entropic-recognizability problem: given $\bm h$ (represented as in
Def.~\ref{def:recognizability}) check if $\bm h \in \entropic$.
Consider the following restricted form of the problem: check if
$\bm h$ is the entropic vector of a representable space $\Omega$
(i.e. finite space with rational probabilities).  This problem is in
$\Sigma^0_1$, because one can iterate over all representable spaces
$\Omega$ and check that their entropies are those required.  However, in
the general setting we ask whether {\em any} finite probability space
has these entropies, not necessarily one with rational probabilities.
This problem would remain in $\Sigma^0_1$ if the theory of reals with
exponentiation were decidable.  Recall that Tarski's theorem states
that the theory of reals $\mathrm{FO}(\R, 0, 1, +, *)$ is decidable.  A major
open problem in model theory is whether the theory remains decidable
if we add exponentiation.  If that were decidable, then the
entropic-recognizability problem would be in $\Sigma^0_1$.  To see this,
consider the following semi-decision problem.  Iterate over
$N=1, 2, 3, \ldots$ and for each $N$ check if there exists a
probability space whose active domain has size $N$ (thus, there are
$N^n$ outcomes, where $n =|\bm V|$ is the number of variables) and whose
entropies are precisely those given.  This statement that can be
expressed using the exponential function (which we need in order to
express the entropy as $\sum_i p_i \log p_i$).  If there exists any
finite probability space with the required entropies, then this
procedure will find it; otherwise it will run forever, placing the
problem in $\Sigma^0_1$.

\section{Discussion}

\label{sec:discussion}

\textbf{CI Implication Problem} The implication problem for
Conditional Independence statements has been extensively studied in
the literature, but its complexity remains an open problem.  It is not
even known whether this problem is
decidable~\cite{GeigerPearl1993,DBLP:journals/ijar/NiepertGG10,DBLP:journals/ai/NiepertGSG13}.
Our Theorem~\ref{th:ci:in:pi1} appears to be the first upper bound on
the complexity of the CI implication problem, placing it in $\Pi^0_1$.
Hannula et al.~\cite{DBLP:journals/corr/abs-1812-05873} prove that, if
all random variables are restricted to be binary random variables,
then the CI implication problem is in EXPSPACE; the implication
problem for binary random variables differs from that for general
discrete random variables; see the discussion
in~\cite{GeigerPearl1993}.

\textbf{Finite, infinite, continuous random variables.} In this paper,
all random variables have a finite domain.  There are two alternative
choices: discrete random variables (possibly infinite), and continuous
random variables.  The literature on entropic functions has mostly
alternated between defining entropic functions over finite random variables, or
over discrete infinite random variables with finite entropy.  For example discrete
(possibly infinite) random variables are considered by Zhang and
Yeung, ~\cite{DBLP:journals/tit/ZhangY97}, by Chan and
Yeung~\cite{DBLP:journals/tit/ChanY02}, and by
Chan~\cite{terence-chan-balanced-ii-2004}, while random variables with
finite domains are considered by Mat{\'{u}}{\v
  s}~\cite{Matus94probabilisticconditional,DBLP:conf/isit/Matus07} and
by Kaced and Romashchenko~\cite{DBLP:journals/tit/KacedR13}.  The
reason for this inconsistency is that for information inequalities the
distinction doesn't matter: every entropy of a set of discrete random
variables can be approximated arbitrarily well by the entropy of a set
of random variables with finite domain, and
Prop.~\ref{prop:rational:p} extends immediately to discrete random
variables\footnote{The idea of the proof relies on the fact that every
  entropy is required to converge, i.e.
  $h(X_\alpha) = \sum_i p_i \log 1/p_i$, hence there exists a finite
  subspace of outcomes $\set{1,2,\ldots, N}$ for which the sum is
  $\varepsilon$-close to $h(X_\alpha)$.  The union of these spaces
  over all $\alpha \subseteq [n]$ suffices to approximate $h$ well
  enough.}.  However, the distinction is significant for conditional
inequalities, and here the choice in the literature is always for
finite domains.  For example, the implication problem for conditional
independence, i.e. the graphoid axioms, is stated for finite
probability spaces by Geiger and Pearl~\cite{GeigerPearl1993}, while
Kaced and Romashchenko~\cite{DBLP:journals/tit/KacedR13} also use
finite distributions to prove the existence of conditional
inequalities that hold over entropic but fail for almost-entropic
functions.  One could also consider continuous distributions, whose
entropy is $\int p(x) \log(1/p(x))dx$, where $p$ is the probability
density function.  Chan~\cite{terence-chan-balanced-ii-2004} showed
that an information inequality holds for all continuous distributions
iff it is balanced and it holds for all discrete distributions.  For
example, $h(X) \geq 0$ is not balanced, hence it fails in the
continuous, because the entropy of the uniform distribution in the
interval $[0,c]$ is $\log c$, which is $<0$ when $c < 1$.

\textbf{Strict vs.\ non-strict inequalities.} The literature on
information inequalities always defines inequalities using $\geq 0$,
in which case validity for entropic functions is the same as validity
for almost entropic functions.  One may wonder what happens if one
examines strict inequalities $\bm c \cdot \bm h > 0$ instead.
Obviously, each such inequality fails on the zero-entropic vector, but
we can consider the conditional version
$\bm h \neq 0 \Rightarrow \bm c \cdot \bm h > 0$, which we can write
formally as $\bm c \cdot \bm h \leq 0 \Rightarrow h(\bm V) \leq 0$.
This a special case of a conditional inequality as discussed in this
paper.  An interesting question is whether for this special case
$\entropic$-validity and $\aentropic$-validity coincide; a negative
answer would represent a significant extension of Kaced and
Romashchenko's result~\cite{DBLP:journals/tit/KacedR13}.



\bibliography{main}

\conferenceorfull{}{
    \appendix
    \section{Background on Cones}

\label{appendix:cones}


We will employ basic facts about closed, convex
cones~\cite{DBLP:books/cu/BV2014,DBLP:journals/kybernetika/Studeny93},
which we review briefly in this section.

Fix a set $S \subseteq \R^m$.  We denote by $\bar S$ its topological
closure.  The set $S$ is {\em convex} if it is closed under taking
convex combination, i.e. $\bm x, \bm y \in S$ and $\theta \in [0,1]$
imply $\theta \bm x + (1-\theta)\bm y \in S$.  The set $S$ is a
(Euclidean) {\em cone} if it is closed under taking non-negative
multiple, i.e. $\bm x \in S$ and $\theta \geq 0$ imply
$\theta \bm x \in S$.  We use $\con(S)$ to denote its {\em conic
  hull}, i.e. the set of conic combinations $\sum_i \theta_i \bm x_i$
where $\theta_i \geq 0$ and $\bm x_i \in S$.  It is easy to see that
conic hulls are convex.

For any set $K \subseteq \R^m$, its {\em dual cone} $K^*$ is defined
by
\begin{align}
    K^* &:= \setof{\bm y}{\bm x\cdot \bm y \geq 0, \forall \bm x \in K}
\end{align}
If $\bm x \in \R^m$ then we denote $\bm x^* = \setof{\bm y}{\bm x \cdot \bm y \geq 0}$ for
short.

It is not hard to see that $K^*$ is always a closed, convex cone
(regardless of whether $K$ is closed or convex or even a cone).  For
any two sets $K, L\subseteq \R^m$ it holds that $K \subseteq L^*$ iff
$K^* \supseteq L$ (i.e. $(-)^*$ forms an antitone Galois connection).
It is also known that, taking duality twice, $K^{**}$ is the closure
of the smallest convex cone containing $K$, i.e.
$K^{**} =\cl{\con(K)}$; in particular, $K^{**}=K$ iff $K$ is a closed
convex cone.

A cone $K$ is called {\em pointed} if $\bm x \in K$ and
$-\bm x \in K$ imply $\bm x = \bm 0$; in other words, $K$ is pointed if it contains no line.
If $K$ has a non-empty interior, then its dual $K^*$ is pointed. If $K$ is a cone and its
closure is pointed, then $K^*$ has a non-empty interior.
For any
sets $K, L \subseteq \R^m$, it is easy to see that $(K \cup L)^* = K^* \cap L^*$.

\section{Proof of Theorem~\ref{th:lambda:1}}

\label{app:proof:th:lambda:eps}

We need the following non-obvious lemma due to Studen{\'{y}}:

\begin{lemma} \label{lemma:studeny} \cite[Lemma
  1]{DBLP:journals/kybernetika/Studeny93} Let $L \subseteq \R^m$ be a
  closed, convex cone, and $\bm y \in \R^m$ be a vector s.t.
  $\bm y \not\in -L$.  Then $\con(L \cup \set{\bm y})$ is closed.
\end{lemma}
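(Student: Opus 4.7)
The plan is to observe that since $L$ is itself a convex cone, the conic hull $\con(L \cup \set{\bm y})$ coincides with the Minkowski sum $L + \R_+\bm y = \setof{\bm x + t\bm y}{\bm x \in L, t \geq 0}$. So I would aim to show directly that this set is closed by a sequential argument rather than reasoning about arbitrary conic combinations.

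First I would pick an arbitrary convergent sequence $\bm z_n = \bm x_n + t_n \bm y \to \bm z$, with $\bm x_n \in L$ and $t_n \geq 0$, and split into two cases depending on whether the sequence $(t_n)_n$ is bounded. In the bounded case, after passing to a subsequence I may assume $t_n \to t \geq 0$; then $\bm x_n = \bm z_n - t_n \bm y$ converges to $\bm z - t\bm y$, which lies in $L$ because $L$ is closed, so $\bm z = (\bm z - t\bm y) + t\bm y$ belongs to $L + \R_+\bm y$, as desired.

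The main step is the unbounded case, in which after passing to a subsequence $t_n \to \infty$. Here I would exploit both the cone property and closedness of $L$: dividing by $t_n$ gives $\bm z_n / t_n = \bm x_n / t_n + \bm y$. Since $(\bm z_n)$ converges it is bounded, so the left-hand side tends to $\bm 0$, and hence $\bm x_n / t_n \to -\bm y$. But each $\bm x_n / t_n$ lies in $L$ (because $L$ is a cone), and $L$ is closed, so $-\bm y \in L$, i.e.\ $\bm y \in -L$, contradicting the hypothesis.

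I do not anticipate a significant obstacle: the hypothesis $\bm y \notin -L$ is precisely what rules out an ``escape to infinity'' along the $\bm y$ direction in the rescaling argument, so it is exactly the right recession-direction condition to make the standard closedness argument succeed. The only place where care is needed is to invoke both the closedness and the cone property of $L$ in tandem — the former alone in the bounded case, and conic normalization followed by closedness in the unbounded case.
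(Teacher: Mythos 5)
Your proof is correct. Note that the paper does not actually prove this lemma itself --- it is quoted verbatim from Studen\'y (Lemma~1 of the cited 1993 paper), so there is no in-paper argument to compare against. Your write-up supplies a valid self-contained proof: the identification $\con(L \cup \set{\bm y}) = L + \R_+\bm y$ is legitimate because $L$ is a convex cone (so the $L$-part of any conic combination collapses back into $L$, and the trivial combination $t\bm y$ is covered since $\bm 0 \in L$), and the two-case sequential argument is the standard recession-direction proof of closedness of a sum of a closed set and a ray. In the unbounded case the normalization $\bm x_n/t_n \in L$ uses exactly the cone property, and the limit $-\bm y \in L$ uses closedness, contradicting $\bm y \notin -L$; in the bounded case only closedness of $L$ is needed. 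The one hypothesis you do not explicitly invoke is convexity of $L$, but it is used implicitly in the reduction to $L + \R_+\bm y$ (closure of $L$ under sums), so nothing is missing. The paper's Example~\ref{ex:s2} shows the hypothesis $\bm y \notin -L$ cannot be dropped, which matches your observation that it is precisely the condition blocking escape to infinity.
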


\begin{example} \label{ex:s2} The condition $\bm y \not\in -L$ is
  necessary, as illustrated by the following example, also
  from~\cite{DBLP:journals/kybernetika/Studeny93}.  Let
  \begin{align}
    L = & \setof{(a,b,c)}{a, b, c \in \R, a \geq 0, c \geq 0, ac \geq b^2} \label{eq:s2}
  \end{align}
  One can check that this is a closed, convex cone\footnote{$L$ is
    isomorphic to the cone $S^2$ of positive semi-definite symmetric
    $2 \times 2$ matrices.}.  Notice that $c=0$ implies $b=0$.  Let
  $\bm y = (-1,0,0)$, then
  $\con(L \cup \set{\bm y})=\setof{(a,b,c)}{a,b,c\in \R, c \geq 0
    \wedge (c=0\Rightarrow b=0)}$.  Then the sequence of vectors
  $\bm z_n = (0,1,1/n)$ is in $\con(L \cup \set{\bm y})$, but their
  limit $(0,1,0)$ is not, proving that $\con(L \cup \set{\bm y})$ is
  not closed.
\end{example}
%
%

Fix a cone $K \subseteq \R^m$.  We say that a set of vectors
$\set{\bm y_1, \ldots, \bm y_k} \subseteq \R^m$ has {\em slack}
w.r.t. $K$ if there exists $\bm x \in K$ such that
$\bm y_i \cdot \bm x > 0$ for $i=1,k$.  We prove the following lemma,
which is an extension of a result
in~\cite{DBLP:journals/corr/abs-1812-09987}:

\begin{lemma} \label{lemma:K:lambda:1}
  Let $K\subseteq \R^m$ be a closed convex
  cone.  Then the following statements are equivalent:
  \begin{align}
      \forall \bm x \in K: & \bigwedge_{i\in [k]} (\bm x \cdot \bm y_i \geq 0) \Rightarrow   (\bm x \cdot \bm y \geq 0)\label{eq:K:lambda:1}\\
      \forall \varepsilon > 0, \exists \lambda \geq 0, \forall \bm x \in K: &
\bigwedge_{i\in [k-1]} (\bm x \cdot \bm y_i \geq 0) \Rightarrow
(\bm x \cdot \bm y + \varepsilon ||\bm x||_2 \geq \lambda \bm x\cdot \bm y_k)\label{eq:K:lambda:2}
  \end{align}
  Moreover, if $\set{\bm y_1, \ldots, \bm y_k}$ has slack w.r.t. $K$
  then we can set $\varepsilon=0$ in Eq.\eqref{eq:K:lambda:2}.
\end{lemma}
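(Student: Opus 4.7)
}
My plan is to handle the easy direction first and then cast the hard direction as a statement about closed convex cones and their duals. For \eqref{eq:K:lambda:2} $\Rightarrow$ \eqref{eq:K:lambda:1}, take any $\bm x \in K$ with $\bm x \cdot \bm y_i \geq 0$ for every $i \in [k]$ and apply \eqref{eq:K:lambda:2} at an arbitrary $\varepsilon > 0$: this gives $\bm x \cdot \bm y + \varepsilon \|\bm x\|_2 \geq \lambda \bm x \cdot \bm y_k \geq 0$, and letting $\varepsilon \to 0$ yields $\bm x \cdot \bm y \geq 0$.

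For the nontrivial direction \eqref{eq:K:lambda:1} $\Rightarrow$ \eqref{eq:K:lambda:2}, set $L := K \cap \bigcap_{i \in [k-1]} \bm y_i^*$, which is a closed convex cone since each factor is. Then \eqref{eq:K:lambda:1} reads precisely as $\bm y \in (L \cap \bm y_k^*)^*$. A direct computation (using that $\bm y_k^*$ is a half-space whose bounding hyperplane is $\bm y_k^\perp \subseteq \bm y_k^*$) gives $(\bm y_k^*)^* = \cone(\bm y_k) := \setof{\lambda \bm y_k}{\lambda \geq 0}$. Applying the standard intersection-duality identity $(A \cap B)^* = \cl{A^* + B^*}$ for closed convex cones $A,B$ (a consequence of bipolarity) yields $\bm y \in \cl{L^* + \cone(\bm y_k)}$. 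I will then unpack this closure: for every $\varepsilon > 0$ there exist $\bm z \in L^*$ and $\lambda \geq 0$ with $\|\bm y - \bm z - \lambda \bm y_k\|_2 \leq \varepsilon$. Writing $\bm w := \bm y - \bm z - \lambda \bm y_k$ and using $\bm z \cdot \bm x \geq 0$ on $\bm x \in L$, Cauchy--Schwarz gives
\[
\bm x \cdot \bm y \;=\; \bm x \cdot \bm z + \lambda \bm x \cdot \bm y_k + \bm x \cdot \bm w \;\geq\; \lambda \bm x \cdot \bm y_k - \varepsilon \|\bm x\|_2,
\]
which is exactly \eqref{eq:K:lambda:2}.

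To handle the slack case, the goal is to remove the closure from the duality step, and here Studený's Lemma~\ref{lemma:studeny} does all the work. If $\bm x_0 \in K$ satisfies $\bm x_0 \cdot \bm y_i > 0$ for all $i \in [k]$, then $\bm x_0 \in L$ and $\bm x_0 \cdot \bm y_k > 0$, so $-\bm y_k \notin L^*$. Applying Lemma~\ref{lemma:studeny} with the closed convex cone $L^*$ and the vector $\bm y_k$ shows that $\con(L^* \cup \set{\bm y_k}) = L^* + \cone(\bm y_k)$ is already closed. Hence $\bm y = \bm z + \lambda \bm y_k$ for some $\bm z \in L^*$ and $\lambda \geq 0$ exactly, and the previous argument goes through with $\varepsilon = 0$. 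The main obstacle I anticipate is making the intersection-duality identity and the bipolar identifications fully rigorous in the generality needed; beyond that the argument is a clean translation of the slack hypothesis into the non-degeneracy condition $\bm y_k \notin -L^*$ required by Studený's lemma.
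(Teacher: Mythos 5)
Your proposal is correct and follows essentially the same route as the paper's proof: both reduce \eqref{eq:K:lambda:1} to the statement $\bm y \in \cl{\con(L^*\cup\set{\bm y_k})} = \cl{L^*+\cone(\bm y_k)}$ for $L = K\cap\bigcap_{i\in[k-1]}\bm y_i^*$ via bipolarity, handle the general case by approximating $\bm y$ within the closure and absorbing the error with Cauchy--Schwarz, and handle the slack case by invoking Studen\'y's lemma to drop the closure. The only cosmetic difference is that you dualize $L\cap\bm y_k^*$ directly using $(A\cap B)^*=\cl{A^*+B^*}$, whereas the paper reaches the same set through the chain $(L^*\cup\set{\bm y_k})^{**}$.
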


\begin{proof}
  The implication~\eqref{eq:K:lambda:2} $\Rightarrow$
  \eqref{eq:K:lambda:1} is immediate, hence we
  prove~\eqref{eq:K:lambda:1} $\Rightarrow$ \eqref{eq:K:lambda:2}.
  For every $i$, the statement $\bm x \cdot \bm y_i \geq 0$ is
  equivalent to $\bm x \in \bm y_i^*$.  Denote
  $L \defeq K \cap \bigcap_{i \in [k-1]} y_i^*$ and notice that this
  is a closed, convex cone.  We start by proving that
  Condition~\eqref{eq:K:lambda:1} is equivalent to
  $\bm y \in \cl{\con(L^* \cup \set{\bm y_k})}$:
  \begin{align*}
    \mbox{Condition~\eqref{eq:K:lambda:1}} & \Leftrightarrow L \cap \bm y_k^* \subseteq \bm y^* \\
& \Leftrightarrow L^{**} \cap \bm y_k^* \subseteq \bm y^* & \mbox{$L$ is closed, convex, hence $L^{**}=L$}\\
& \Leftrightarrow (L^* \cup \set{\bm y_k})^* \subseteq \bm y^* & \mbox{$A^* \cap B^* = (A \cup B)^*$} \\
& \Leftrightarrow \bm y \in (L^* \cup \set{\bm y_k})^{**} & \mbox{$A \subseteq B^*$ iff $A^* \supseteq B$}\\
& \Leftrightarrow \bm y \in \cl{\con(L^* \cup \bm y_k)} & \mbox{$A^{**} = \cl{\con(A)}$}
  \end{align*}
  Consider first the case when the set
  $\set{\bm y_1, \ldots, \bm y_k}$ has slack in $K$, and let
  $\bm x_0 \in K$ be such that $\bm x_0 \cdot \bm y_i >0$ for all
  $i=1,k$; in particular, $\bm x_0 \in L$ and
  $\bm x_0 \cdot \bm y_k > 0$.  This implies that
  $\bm y_k \not\in -L^*$, hence, by Lemma~\ref{lemma:studeny},
  $\con(L^* \cup \set{\bm y_k})$ is closed.  Therefore we have that
  $\bm y \in \con(L^* \cup \set{\bm y_k})$, hence there exist
  $\bm z \in L^*$ and $\lambda \geq 0$ such that
  $\bm y = \bm z + \lambda \bm y_k$.  To prove
  condition~\eqref{eq:K:lambda:2}, it suffices to show that
  $\forall \bm x \in L$,
  $\bm x \cdot \bm y \geq \lambda \bm x \cdot \bm y_k$. This follows
  from the fact that $\bm x \cdot \bm z \geq 0$, which implies that
  $\bm x \cdot \bm y = \bm x \cdot \bm z + \lambda \bm x \cdot \bm y_k
  \geq \lambda \bm x \cdot \bm y_k$ as required.

  Consider now the general case, when
  $\bm y \in \cl{\con(L^* \cup \set{\bm y_k})}$.  Then, for every
  $\varepsilon > 0$ there exists
  $\bm y' \in \con(L^* \cup \set{\bm y_k})$ such that, denoting
  $\bm \delta \defeq \bm y' - \bm y$, we have
  $||\bm \delta||_2 < \varepsilon$.  Applying the argument above to
  $\bm y'$ instead of $\bm y$, we obtain
  $\bm x \cdot \bm y' \geq \lambda \bm x \cdot \bm y_k$.  On the other
  hand,
  $\bm x \cdot \bm y' = \bm x \cdot \bm y + \bm x \cdot \bm \delta
  \leq \bm x \cdot \bm y + ||\bm x||_2 \cdot ||\bm \delta||_2 \leq \bm
  x \cdot \bm y + \varepsilon||\bm x||_2$.
\end{proof}

\begin{example}
  \label{ex:studeny} We show that the error term
  $\varepsilon ||\bm x||_2$ in Condition~\eqref{eq:K:lambda:2} is
  necessary in general.  For that, consider the cone $L$ in
  Eq.~\eqref{eq:s2}.  It satisfies the condition
  $\forall (a,b,c) \in L: a \leq 0 \Rightarrow b \leq 0$.  Indeed,
  $a \leq 0$ is equivalent to $a=0$, thus $b^2 \leq ac = 0$ implying
  $b=0$, in particular $b \leq 0$.  By writing the implication as
  $-a \geq 0 \Rightarrow -b \geq 0$, Lemma~\ref{lemma:K:lambda:1} says
  that $\forall \varepsilon > 0, \exists \lambda \geq 0$ such that
  $\forall (a,b,c) \in L: -b + \varepsilon ||(a,b,c)||_2 \geq \lambda
  (-a)$ or, equivalently
  $b \leq \varepsilon ||(a,b,c)||_2 + \lambda a$.  When
  $\varepsilon=0$ then this condition fails for any choice of
  $\lambda\geq 0$, for example it fails on the vector
  $(1, 1+\lambda, (1+\lambda)^2) \in L$.  On the other hand, if
  $\varepsilon > 0$, then the condition holds, for example we can
  choose $\lambda = 1/\varepsilon$ and we obtain
  $\varepsilon ||(a,b,c)||_2 + \lambda a \geq \varepsilon c + \lambda
  a \geq 2 \sqrt{ac} \geq 2|b| \geq b$.
\end{example}

We can now prove Theorem~\ref{th:lambda:1}.

\begin{proof} (of Theorem~\ref{th:lambda:1}) It remains to prove that
  \eqref{eq:lambda:1:antecedent} implies
  \eqref{eq:lambda:1:consequent} This follows from
  Lemma~\ref{lemma:K:lambda:1} applied to $K \defeq \aentropic$, which
  is a closed, convex cone, see~\cite{yeung2012book}.  By renaming the
  vectors $\bm x, \bm y_i, \bm y$ in Lemma~\ref{lemma:K:lambda:1} to
  $\bm h, \bm c_i, \bm c$ in Theorem~\ref{th:lambda:1}, and using
  $\varepsilon/2^n$ instead of $\varepsilon$, we use the Lemma to
  argue that \eqref{eq:lambda:1:antecedent} implies:
  \begin{align*}
        \forall \varepsilon>0,\exists \lambda>0, \forall \bm h \in \aentropic: &
    \bigwedge_{i \in [k-1]} \bm c_i \cdot \bm h \geq 0
    \Rightarrow \bm c \cdot \bm h+ \frac{1}{2^n} \varepsilon ||\bm h||_2 \geq \lambda \bm c_k \cdot \bm h
  \end{align*}
  Condition \eqref{eq:lambda:1:consequent} follows from the fact that
  $h([n]) \geq \frac{1}{2^n} \sum_{\alpha \subseteq [n]} h(\alpha) = \frac{1}{2^n} ||\bm h||_1 \geq \frac{1}{2^n} ||\bm
  h||_2$.
\end{proof}

We end this section by stating an obvious consequence of
Lemma~\ref{lemma:K:lambda:1}: by applying it repeatedly, we can move
all antecedents to the consequent, and obtain:

\begin{corollary} \label{cor:K:lambda:1}
  Let $K\subseteq \R^m$ be a closed convex
  cone, and assume that $\set{\bm y_1, \ldots, \bm y_k}$ has slack
  w.r.t. $K$.  Then the following are equivalent:
  \begin{align}
      \forall \bm x \in K: & \bigwedge_{i\in [k]} (\bm x \cdot \bm y_i \geq 0) \Rightarrow   (\bm x \cdot \bm y \geq 0)\label{eq:K:lambda:many:1}\\
      \exists \lambda_1, \ldots, \lambda_k \geq 0, \forall \bm x \in K: &
(\bm x \cdot \bm y \geq \sum_{i \in [k]} \lambda_i \bm x\cdot \bm y_i)\label{eq:K:lambda:many:2}
  \end{align}
\end{corollary}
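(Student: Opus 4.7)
The implication \eqref{eq:K:lambda:many:2} $\Rightarrow$ \eqref{eq:K:lambda:many:1} is immediate: if $\bm x \cdot \bm y_i \geq 0$ for all $i \in [k]$ and $\lambda_i \geq 0$, then $\sum_{i \in [k]} \lambda_i \bm x \cdot \bm y_i \geq 0$, so \eqref{eq:K:lambda:many:2} forces $\bm x \cdot \bm y \geq 0$. So the content of the corollary is the converse direction, and my plan is to obtain it by iterating the ``slack'' version of Lemma~\ref{lemma:K:lambda:1} exactly $k$ times.

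The plan is to proceed by induction on $k$. The base case $k = 0$ is trivial: the antecedent is empty, \eqref{eq:K:lambda:many:1} says $\bm y \in K^*$, and \eqref{eq:K:lambda:many:2} holds with no $\lambda_i$'s (empty sum). For the inductive step, assume \eqref{eq:K:lambda:many:1} and the slack condition for $\{\bm y_1, \ldots, \bm y_k\}$. Since this set has slack with witness $\bm x_0 \in K$ satisfying $\bm x_0 \cdot \bm y_i > 0$ for $i = 1, \ldots, k$, Lemma~\ref{lemma:K:lambda:1} applies with $\varepsilon = 0$, yielding a $\lambda_k \geq 0$ such that
\begin{align*}
\forall \bm x \in K: \quad \bigwedge_{i \in [k-1]} (\bm x \cdot \bm y_i \geq 0) \Rightarrow \bm x \cdot (\bm y - \lambda_k \bm y_k) \geq 0.
\end{align*}

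Next I would apply the induction hypothesis to the vector $\bm y' \defeq \bm y - \lambda_k \bm y_k$ and the reduced set of antecedents $\{\bm y_1, \ldots, \bm y_{k-1}\}$. For this I need the slack hypothesis to survive reduction: the same witness $\bm x_0$ still satisfies $\bm x_0 \cdot \bm y_i > 0$ for $i = 1, \ldots, k-1$, so $\{\bm y_1, \ldots, \bm y_{k-1}\}$ has slack w.r.t. $K$ as required. The induction hypothesis therefore provides $\lambda_1, \ldots, \lambda_{k-1} \geq 0$ with
\begin{align*}
\forall \bm x \in K: \quad \bm x \cdot \bm y' \geq \sum_{i \in [k-1]} \lambda_i \bm x \cdot \bm y_i,
\end{align*}
and substituting $\bm y' = \bm y - \lambda_k \bm y_k$ yields \eqref{eq:K:lambda:many:2} with the full tuple $\lambda_1, \ldots, \lambda_k \geq 0$.

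The main obstacle---such as it is---lies in the single non-trivial step performed by Lemma~\ref{lemma:K:lambda:1}, which in turn depends on the closedness of $\con(L^* \cup \{\bm y_k\})$ via Studen\'y's Lemma~\ref{lemma:studeny}; that closedness is guaranteed precisely by the slack hypothesis (which ensures $\bm y_k \notin -L^*$). Everything else in the corollary is bookkeeping: checking that slack is inherited by subsets and that the error term can be taken to be $0$ at each iteration, so that no ``$\varepsilon$'' accumulates and the final $\lambda_i$'s can be output directly rather than as limits.
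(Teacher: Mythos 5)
Your proposal is correct and matches the paper's argument: the paper itself derives Corollary~\ref{cor:K:lambda:1} by ``applying Lemma~\ref{lemma:K:lambda:1} repeatedly'' in the slack ($\varepsilon=0$) case, which is exactly the induction you spell out. Your additional observations---that the slack witness $\bm x_0$ is inherited by the reduced set $\set{\bm y_1,\ldots,\bm y_{k-1}}$ and that no $\varepsilon$-error accumulates across iterations---are precisely the bookkeeping the paper leaves implicit, and they are handled correctly.
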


\section{Proof of Theorem~\ref{th:rationals}}

\label{sec:th:rationals}

We prove Theorem~\ref{th:rationals} by generalizing it to arbitrary
cones.  First, we give the obvious generalization of the notion of
group balanced.  Fix a set of vectors
$\bm x_1, \ldots, \bm x_n \in \Q^m$ with rational coordinates.

\begin{definition} \label{def:group:balanced:cone} A set
    $D = \set{\bm y_1, \ldots, \bm y_k} \subseteq \R^m$ is called {\em
        group balanced} if (a) $\rank(\bm A) = k-1$ where $\bm A$ is the
    matrix $A_{ij} = \bm x_j \cdot \bm y_i$, and (b) there exists
    $\bm x^{(*)} \in \convex(\bm x_1, \ldots, \bm x_n)$ such that
    $\bm x^{(*)} \cdot \bm y_i = 0$ for all $i$.
\end{definition}

\begin{theorem} \label{th:rationals:cone} Let $K \subseteq \R^m$ be a cone
    and $D = \set{\bm y_1, \ldots, \bm y_n} \subseteq \Q^m$ be a group-balanced 
    set of rational vectors.  Suppose that the following
    condition holds:
    \begin{align}
        \exists \lambda_1 \geq 0, \ldots, \lambda_k \geq 0, \sum_i \lambda_i=1, \forall \bm x
        \in K: && \sum_i \lambda_i \bm x \cdot \bm y_i \geq 0 \label{eq:lambda:local:cone}
    \end{align}
    Then there exist rational $\lambda_i$'s with this property.
\end{theorem}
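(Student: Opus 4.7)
My plan is to use the group-balance hypothesis to confine
$\Lambda := \{\bm \lambda \in \R^n_{\geq 0} : \sum_i \lambda_i = 1,\ \sum_i \lambda_i \bm y_i \in K^*\}$
to a rational affine subspace, and then extract a rational representative inside this restricted set.

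First, I would produce a non-negative null vector of $A$ (where $A_{ij} = \bm x_j \cdot \bm y_i$) with rational entries. Condition~(b) guarantees some $\bm \alpha \in \R^n_{\geq 0}$ with $\sum_j \alpha_j = 1$ and $A \bm \alpha = 0$. Because $A$ has rational entries and, by condition~(a), a one-dimensional kernel, that kernel is a rational line; any non-negative nonzero element of it is a positive multiple of a rational non-negative vector, so I may take $\bm \alpha$ itself rational. Let $S := \{j : \alpha_j > 0\}$.

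Second, I would derive that every $\bm \lambda \in \Lambda$ satisfies the rational homogeneous equalities $(A^{\top} \bm \lambda)_j = 0$ for $j \in S$. Setting $\phi(\bm \lambda) := \sum_i \lambda_i \bm y_i \in K^*$, and using that each $\bm x_j$ lies in $K$ (as in the intended applications), we have $(A^{\top} \bm \lambda)_j = \bm x_j \cdot \phi(\bm \lambda) \geq 0$ for every $j$. The identity $\bm \alpha^{\top} A^{\top} = (A\bm \alpha)^{\top} = 0$ then yields $\sum_j \alpha_j (A^{\top} \bm \lambda)_j = 0$, and since all summands are non-negative, each vanishes for $j \in S$. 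Combined with $\bm \lambda \geq 0$ and $\sum_i \lambda_i = 1$, this places $\Lambda$ inside the rational polytope
$Q := \{\bm \lambda \geq 0 : \sum_i \lambda_i = 1,\ (A^{\top}\bm \lambda)_j = 0 \text{ for } j \in S,\ (A^{\top}\bm \lambda)_j \geq 0 \text{ for } j \notin S\}$.

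The final step, and main obstacle, is to extract a rational point that actually lies in $\Lambda$ and not merely in $Q$. I plan to invoke the rank condition~(a) together with the face structure around $\bm x^{(*)} = \sum_j \alpha_j \bm x_j$: this point sits in the relative interior of the face of $K$ generated by $\{\bm x_j : j \in S\}$, so a standard convex-analytic argument shows that for any $\bm c \in K^*$ with $\bm x^{(*)} \cdot \bm c = 0$, the vector $\bm c$ must lie in the lineality space of $K^*$ relative to the affine hull of that face, namely the rational subspace $(\mathrm{span}\{\bm x_j : j \in S\})^{\perp}$. Consequently, the non-polyhedral condition $\phi(\bm \lambda) \in K^*$ restricted to $Q$ collapses to a finite system of rational linear constraints, turning $\Lambda$ into a (nonempty, since a real $\bm \lambda^\ast$ is given) rational polytope, from which any vertex supplies the desired rational $\bm \lambda$. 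The delicate part is rigorously justifying the reduction of $\phi(\bm \lambda) \in K^*$ to rational constraints on the restricted set: this is where condition~(a) is essential, since it guarantees that the rational equalities on $S$ pin down $\bm \lambda$ tightly enough (modulo the rational left null space of $A$) that the face-interior argument on $\bm x^{(*)}$ fully determines membership in $K^*$ along the rational subspace.
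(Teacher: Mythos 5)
Your first two steps track the paper's own proof: the paper likewise forms the matrix $\bm A$, extracts a non-negative kernel vector $\bm \mu$ from condition~(b), and uses $\sum_j \mu_j(\bm x_j\cdot\bm y^{(*)})=0$ together with $\bm x_j\cdot\bm y^{(*)}\geq 0$ (which, as in your argument, silently requires $\bm x_j\in K$) to kill the individual terms. The divergence, and the genuine gap, is in your third step. The paper's punchline is that the orthogonality relations hold for \emph{every} $j$ (it takes $\mu_j>0$ for all $j$), so every feasible $\bm\lambda$ lies in the orthogonal complement of the column span of $\bm A$, which by condition~(a) is a \emph{one-dimensional} subspace spanned by a rational vector; together with the normalization $\sum_i\lambda_i=1$, this forces the $\bm\lambda$ handed to you by hypothesis to \emph{be} the unique, rational, point on that line. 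Nothing further is verified: no vertex is extracted, and membership of $\sum_i\lambda_i\bm y_i$ in $K^*$ never has to be re-checked for a new candidate, because the candidate is the one you started with.

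Because you only obtain the equalities on the support $S$ of $\bm\alpha$, you land in a polytope $Q$ that can have positive dimension, and you are then left to intersect it with the non-polyhedral set $\setof{\bm\lambda}{\sum_i\lambda_i\bm y_i\in K^*}$. Your claim that this constraint ``collapses to a finite system of rational linear constraints'' is unsubstantiated and false in general: $K$ is an arbitrary cone ($\aentropic$ in the application, which is not polyhedral), and the lineality/face argument you sketch only re-derives the orthogonality of $\phi(\bm\lambda)$ to $\set{\bm x_j: j\in S}$, which you already have from step two---it says nothing about the remaining free directions of $Q$. Likewise, ``any vertex supplies the desired rational $\bm\lambda$'' presupposes that $\Lambda$ is a rational polytope, which is essentially the statement to be proved. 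What is missing is the observation that the orthogonality relations can be obtained for \emph{all} $j$ (equivalently, that the columns of $\bm A$ indexed by $S$ already span the full $(n-1)$-dimensional column space); once that is in hand, condition~(a) finishes the proof in one line and the entire face-structure apparatus becomes unnecessary.
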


\begin{proof}
    Denote by $\Lambda \defeq \setof{(\lambda_1, \ldots, \lambda_n)}{\lambda_1\geq 0, \cdots, \lambda_n \geq 0, \forall \bm x \in K, \sum_i \lambda_i \bm  \cdot \bm y_i \geq 0}$.
    Then $\Lambda$ is a convex cone, and is $\neq \set{\bm 0}$ by
    condition \eqref{eq:lambda:local:cone}.  To prove the theorem, we
    will show $\Lambda = \setof{t \bm \lambda}{t \geq 0}$ for some
    rational vector $\bm \lambda \in \Q^n_+$.  Denote the matrix
    $\bm A \defeq (\bm y_i \cdot \bm x_j)_{ij}$; by assumption its rank
    is $n-1$.  Def.~\ref{def:group:balanced:cone} (b) implies that there
    exist $\mu_1 \geq 0, \ldots, \mu_n \geq 0$ such that, denoting
    $\bm x^{(*)} \defeq \sum_{j \in [n]} \mu_j \bm x_j$, it holds that
    $\bm x^{(*)} \cdot \bm y_i = 0$ for all $i$.  Since
    $\rank(A) = n-1$, we have $\mu_j > 0$ for all $j$.  Consider now any
    $(\lambda_1, \ldots, \lambda_n) \in \Lambda$, and let
    $\bm y^{(*)} = \sum_{i \in [n]} \lambda_i \bm y_i$.  We prove that,
    for every $j \in [n]$, $\bm x_j \cdot \bm y^{(*)}=0$.  To prove
    this, we note that $\bm x^{(*)} \cdot \bm y_i=0$ and
    $\bm x_j \cdot \bm y^{(*)} \geq 0$
    (condition~\eqref{eq:lambda:local:cone} applied to $\bm x_j$) imply:
    \begin{align*}
        0 = \sum_{i \in [n]} \lambda_i \bm x^{(*)} \cdot \bm y_i = \sum_{i, j \in [n]} \lambda_i \mu_j \bm x_j \cdot \bm y_i = \sum_{j  \in [n]} \mu_j (\bm x_j \cdot \bm  y^{(*)}) \geq 0
    \end{align*}
    If $\bm x_j \cdot \bm y^{(*)} > 0$ for some $j$ then we obtain
    $0 > 0$, a contradiction, hence $\bm x_j \cdot \bm y^{(*)} = 0$ for
    all $j$.  Let $\bm v_1, \cdots, \bm v_n$ be the column vectors of
    $\bm A$, and let $V \defeq \vspan(\bm v_1, \cdots, \bm v_n)$.  We
    have proven that $\Lambda \subseteq V^\perp$, where $V^\perp$
    denotes the orthogonal space\footnote{For any set
        $V \subseteq \R^k$, its orthogonal space is
        $V^\perp \defeq \setof{\bm u}{\forall \bm v \in V, \bm u \cdot \bm
            v = 0}$.}.  Since $\dim(V)=n-1$, we have that $\dim(V^\perp)=1$,
    in other words $V^\perp = \setof{t \bm \lambda}{t \in \R}$ for some
    non-zero vector $\bm \lambda$.  Moreover, $\bm \lambda$ can be
    chosen to be a rational vector, because $\bm v_1, \ldots, \bm v_n$
    have rational coordinates.  This proves the claim and the theorem.
\end{proof}

Next we give an example showing that some additional condition on the
vectors $\bm y_1, \ldots, \bm y_k$ is necessary to ensure that the
values $\lambda_i$'s can be chosen to be rational numbers.

\begin{example} \label{ex:s2:irrational} We show here an example where the
    values $\lambda_i$ cannot be chosen to be rational numbers.  We
    generalize Example~\ref{ex:s2} as follows.  Fix two numbers
    $\alpha, \gamma \in (0,1)$ such that $\alpha+\gamma = 1$ and
    $\frac{\alpha}{\gamma} \not\in \Q$.  Consider the cone:
    \begin{align}
        K \defeq \setof{(a,b,c))}{a \geq 0, c \geq 0, a^\alpha c^\gamma \geq b} \label{eq:irrational:cone}
    \end{align}
    (The reader may verify that $K$ is a closed, convex cone.) We claim
    that, forall $(a,b,c) \in K$, either $a \geq b$ or $c \geq b$, in
    other words $K$ satisfies the max-inequality:
    \begin{align*}
        \forall (a,b,c) \in K: && \max(a-b, c-b) \geq 0
    \end{align*}
    This follows from the inequality between the weighted arithmetic
    mean and weighted geometric mean:
    $\alpha a + \gamma c \geq a^\alpha c^\gamma \geq b = (\alpha +
    \gamma) b$, hence, if both $a < b$ and $c < b$ then
    $\alpha a + \gamma c < (\alpha + \gamma)b = b$; a contradiction.  By
    Theorem~\ref{th:lambdas:real:cone} there exist
    $\lambda_1, \lambda_2 \geq 0$, $\lambda_1 + \lambda_2 = 1$ such that
    \begin{align*}
        \forall (a,b,c) \in K: && \lambda_1(a-b) + \lambda_2(c-b) \geq 0
    \end{align*}
    Equivalently, $\forall (a,b,c) \in K$,
    $\lambda_1 a + \lambda_2 c \geq b$.  We prove that the only possible
    values are $\lambda_1 = \alpha, \lambda_2 = \gamma$, and thus no
    rational values exist.  For that we set $b = a^\alpha c^{1-\alpha}$
    in the inequality above and obtain:
    \begin{align*}
        \forall a, c \geq 0: &&& \lambda_1 a + (1-\lambda_1) c \geq a^\alpha  c^{1-\alpha} \\
        \forall a\geq 0, c>0: &&& \lambda_1 \frac{a}{c} + 1- \lambda_1 \geq  \left(\frac{a}{c}\right)^\alpha \\
        \forall x \geq 0: &&& f(x) \defeq (1 - x^\alpha) -  \lambda_1(1-x) \geq 0
    \end{align*}
    Since $f(1) = 0$ and $f(x) \geq 0$ for $x \in \R$, we must have
    $f'(1) = 0$ by Lagrange's theorem.  Thus, $-\alpha + \lambda_1 = 0$
    or $\lambda_1 = \alpha$, proving the claim.
\end{example}

Finally, we prove that the definition of strong balance is natural.
Recall Condition~\eqref{eq:lambda:local} of
Theorem~\ref{th:rationals}, which we repeat here for readability:
\begin{align}
    \exists \lambda_1 \geq 0, \ldots, \lambda_n \geq 0, \sum_{i \in [n]} \lambda_i=1, \forall \bm h \in \aentropic: & \sum_i \lambda_i \bm d_i \cdot \bm h \geq 0 \label{eq:lambda:local:repeat}
\end{align}

\begin{proposition} \label{prop:strongly:balanced} For any set of vectors
    $D = \set{\bm d_1, \ldots, \bm d_n} \subseteq \R^{2^n}$ there
    exists a strongly balanced set
    $D' = \set{\bm d_1', \ldots, \bm d_n'} \subseteq \R^{2^n}$ such
    that $D$ satisfies Condition~\eqref{eq:lambda:local:repeat} iff
    $D'$ satisfies it.
\end{proposition}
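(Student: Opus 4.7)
The plan is to reformulate the condition using earlier machinery and then produce $D'$ by an explicit shift along conditional-entropy directions, before checking the balance properties and the validity equivalence. By Theorem~\ref{th:lambdas:real} (with an empty antecedent), Condition~\eqref{eq:lambda:local:repeat} is equivalent to the $\aentropic$-validity of $\bigvee_{i \in [n]} \bm d_i \cdot \bm h \geq 0$, i.e.\ to $\operatorname{conv}(\bm d_1,\ldots,\bm d_n) \cap \aentropic^* \neq \emptyset$. So the task reduces to constructing a group-balanced $n$-element family whose convex hull meets $\aentropic^*$ exactly when that of $D$ does.

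For the construction I would use the conditional-entropy vectors $\bm e_j \in \R^{2^n}$ determined by $\bm e_j \cdot \bm h = h(X_j \mid X_{[n]\setminus\{j\}})$; these satisfy the orthogonality relation $\bm e_j \cdot \bm h^{(k)} = \delta_{jk}$ and lie in $\aentropic^*$ since conditional entropy is non-negative. Fix the non-negative modular function $\bm h^{(*)} \defeq \sum_{j} \bm h^{(j)}$ and let $s_i \defeq \bm d_i \cdot \bm h^{(*)}$. I would then define
\[
    \bm d_i' \defeq \bm d_i - \frac{s_i}{n} \sum_{k=1}^{n} \bm e_k, \qquad i \in [n].
\]
A direct computation gives $\bm d_i' \cdot \bm h^{(*)} = s_i - (s_i/n)\cdot n = 0$ for every $i$, which verifies condition (b) of Def.~\ref{def:group:balanced}. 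The resulting matrix $A'_{ij} = \bm d_i' \cdot \bm h^{(j)}$ has row sums zero, so $\operatorname{rank}(A') \leq n-1$; if strict inequality occurs, I would apply a generic perturbation of the $\bm d_i'$ inside the hyperplane $\{\bm v : \bm v \cdot \bm h^{(*)} = 0\}$ (which preserves (b)) to bring the rank up to $n-1$, establishing (a).

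The crux is verifying that $D$ satisfies Condition~\eqref{eq:lambda:local:repeat} iff $D'$ does. For the forward direction, if $\bm c \defeq \sum_i \lambda_i^* \bm d_i \in \aentropic^*$ is a witness for $D$, then $\bm c \cdot \bm h^{(*)} \geq 0$ (since $\bm h^{(*)}$ is almost-entropic), so $\sum_i \lambda_i^* s_i \geq 0$, and $\sum_i \lambda_i^* \bm d_i' = \bm c - \frac{1}{n}(\sum_i \lambda_i^* s_i)\sum_k \bm e_k$ differs from $\bm c$ by a non-negative multiple of a vector in $\aentropic^*$. For the reverse direction, adding back the same correction moves any witness for $D'$ into a witness for $D$ while remaining in $\aentropic^*$, since $\sum_k \bm e_k \in \aentropic^*$.

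The main obstacle is the forward direction above: subtracting the correction from $\bm c$ may push it out of $\aentropic^*$, and so the same $\lambda^*$ need not witness validity of $D'$. Overcoming this requires a more delicate construction of the witnessing combination for $D'$, which I expect to obtain by a Chan-style decomposition of $\bm c$ into a balanced part plus a non-negative combination of the $\bm e_k$: the balanced part lies automatically in the hyperplane defining $D'$, while the conditional-entropy coefficients are absorbed into the re-weighting of the $\lambda_i$'s. Carrying out this decomposition — and in particular tracking the signs of the coefficients — is the technical heart of the proof and is where the non-trivial interplay between the group-balance conditions (a) and (b) is used.
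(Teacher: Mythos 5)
There are two genuine gaps here, and they sit exactly at the two places you would need to do real work. First, the ``generic perturbation'' used to raise $\rank(\bm A')$ from $\leq n-1$ to exactly $n-1$ is not harmless: perturbing the $\bm d_i'$ changes the set $D'$ and hence can change the truth value of Condition~\eqref{eq:lambda:local:repeat}, since the condition asks whether \emph{some} convex combination $\sum_i \lambda_i \bm d_i'$ lies in $(\aentropic)^*$, and a perturbation shifts every such combination by $\sum_i \lambda_i \bm v_i$, which is nonzero for a generic choice of the $\bm v_i$. Preserving condition (b) of Def.~\ref{def:group:balanced} is not enough. The paper's construction avoids this by adding to each $\bm d_i'$ the specific vector $\frac{1}{\lambda_i}\bigl(n\,\bm e_i - \sum_j \bm e_j\bigr)$ (in your notation), chosen so that the $\lambda$-weighted sum of the added terms vanishes identically; the witnessing combination is therefore literally unchanged while the matrix acquires rank $n-1$. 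This is a targeted cancellation, not a generic move.

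Second, the forward direction of the validity equivalence is left unproved, and your proposed fix does not go through as sketched. Your correction replaces $\bm c = \sum_i \lambda_i^* \bm d_i$ by $\bm c - \frac{1}{n}(\bm c \cdot \bm h^{(*)})\sum_k \bm e_k$, i.e.\ it subtracts the \emph{averaged} multiple of $\sum_k \bm e_k$; this is not Chan's transformation $\bm c \mapsto \bm c - \sum_j (\bm c\cdot \bm h^{(j)})\,\bm e_j$ unless all $\bm c \cdot \bm h^{(j)}$ coincide, so Chan's theorem (which is precisely the statement that the balanced strengthening of a valid inequality remains valid) cannot be invoked for it, and subtracting a non-negative multiple of a dual-cone element from $\bm c$ can indeed leave $(\aentropic)^*$. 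The paper resolves this by doing the per-coordinate Chan balancing of each $\bm d_i$ first: that transformation is linear in $\bm d_i$, so $\sum_i \lambda_i \bm d_i'$ is exactly the Chan transformation of $\sum_i \lambda_i \bm d_i$, and Chan's theorem yields the equivalence of validity for the weighted sums directly; the rank-$(n-1)$ requirement is then restored in a second step by the cancellation described above. You correctly identified this as the technical heart of the proof, but deferring it means the proof is not complete; and your first-step transformation (dividing the row sum $s_i$ equally among the $n$ coordinates) would have to be replaced by the genuine Chan transformation for the deferred argument to work.
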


\begin{proof} (Of Prop.~\ref{prop:strongly:balanced}) Fix any set
    $D = \set{\bm d_1, \ldots, \bm d_n} \in \R^{2^n}$, and denote by
    $\bm A$ the $n \times n$ matrix
    $A_{ij} = \bm d_i \cdot \bm h^{(j)}$.  We will modify $D$ to ensure
    that the matrix $A$ satisfies Def.~\ref{def:group:balanced} First,
    we replace each $\bm d_i$ by $\bm d_i'$ obtained using Chan's
    transformation:
    \begin{align*}
        \bm d_i'\cdot \bm h = & \bm d_i \cdot \bm h - \sum_j (\bm d_i \cdot \bm h^{(j)}) h(X_j \suchthat X_{[n]-\set{j}})
    \end{align*}
    and denote by $D' \defeq \set{\bm d_1', \ldots, \bm d_n'}$.  First,
    we claim that $D$ satisfies~\eqref{eq:lambda:local:repeat} iff $D'$
    does.  This follows from Chan's theorem, since, for any values
    $\lambda_1, \ldots, \lambda_n$ that sum to 1, the expression
    $\bm d' \defeq \sum_i \lambda_i \bm d_i'$ is precisely Chan's
    transformation applied to $\bm d \defeq \sum_i \lambda_i d_i'$,
    hence $\bm d' \cdot \bm h \geq 0$ is valid iff
    $\bm d \cdot \bm h \geq 0$ is valid.  After this transformation,
    every $\bm d_i'$ is balanced, and therefore the matrix $\bm A'$
    associated to the new set $D'$ is identically 0.  Next, assume that
    $D'$ satisfies condition~\eqref{eq:lambda:local:repeat}, and let
    $\lambda_1, \ldots, \lambda_n$ be the corresponding coefficients.
    We have $\lambda_i > 0$ for all $i$ because the rank of $\bm A$ is
    $n-1$.  Replace each $\bm d_i'$ by $\bm d''_i$ where:
    \begin{align*}
        \bm d_i'' \cdot \bm h = & \bm d_i' \cdot \bm h + \frac{1}{\lambda_i}\left(n h(X_i \suchthat X_{[n]-\set{i}}) - \sum_{j\in[n]} h(X_j\suchthat X_{[n]-\set{j}})\right)
    \end{align*}
    In other words, we add the term
    $\frac{n-1}{\lambda_i}h(X_i \suchthat X_{[n]-\set{i}})$ and subtract
    all terms $\frac{1}{\lambda_i}h(X_j \suchthat X_{[n]-\set{j}})$ for
    $j\neq i$.  This transformation does not affect the
    expression~\eqref{eq:lambda:local:repeat} because
    $\sum_i \lambda_i \bm d_i' = \sum_i \lambda_i \bm d_i''$.  The new
    $n \times n$ matrix $\bm A''$ is as follows.  The diagonal is
    $A_{ii} = (n-1)/\lambda_i$, and all other entries are
    $A_{ij} = -1/\lambda_i$.  To compute its rank, multiply each row $i$
    with $\lambda_i$, and the new matrix has $n-1$ on the diagonal and 1
    everywhere else, hence the rank is $n-1$.  Finally, we check that
    there exists $\bm h^{(*)}$ such that
    $\bm d_i'' \cdot \bm h^{(*)} = 0$ for all $i$, by setting
    $\bm h^{(*)} = \sum_j \bm h^{(j)}$; then
    $\bm d_i'' \cdot \bm h^{(*)}$ is precisely the sum of the elements
    in row $i$ of the matrix $\bm A''$, hence it is $=0$.
\end{proof}

\section{Proof of Theorem~\ref{th:lambdas:real}}

\label{app:max:inequality}

We prove the theorem by generalizing it to arbitrary cones, in
Th.~\ref{th:lambdas:real:cone} below. Its proof, in turn, is based on
the following lemma.

\begin{lemma} \label{lemma:lambda:real} If $K \subseteq \R^m$ is a
  convex cone such that $K \cap (-\infty,0)^m = \emptyset$, then
  $K^* \cap \R_+^m \neq \set{0}$.
\end{lemma}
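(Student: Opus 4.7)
The plan is to prove this as a standard consequence of the hyperplane separation theorem applied to $K$ and the open negative orthant $C \defeq (-\infty,0)^m$. The intuition is that since $K$ avoids the open quadrant where all coordinates are strictly negative, we can separate them by a hyperplane, and the geometry of $C$ forces the separating functional to lie in $\R_+^m$.

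More concretely, I would proceed as follows. First, dispose of the degenerate case $K = \emptyset$, where $K^* = \R^m$ and so any nonzero nonnegative vector works. Otherwise, since $K$ is a cone, $\bm 0 \in K$. The set $C$ is convex, open, and nonempty, and disjoint from $K$ by hypothesis, so the separating hyperplane theorem yields a nonzero $\bm y \in \R^m$ and $\alpha \in \R$ with $\bm y \cdot \bm x \geq \alpha$ for all $\bm x \in K$ and $\bm y \cdot \bm z \leq \alpha$ for all $\bm z \in C$. Letting $\bm z \to \bm 0$ from within $C$ yields $\alpha \geq 0$, while $\bm 0 \in K$ yields $\alpha \leq 0$, so $\alpha = 0$. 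This already gives $\bm y \in K^*$.

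It remains to show $\bm y \in \R_+^m$, which follows from the inequality $\bm y \cdot \bm z \leq 0$ on all of $C$. Indeed, if some coordinate $y_i$ were negative, then taking $z_i \to -\infty$ while keeping the other coordinates fixed at, say, $-1$, would make $\bm y \cdot \bm z$ tend to $+\infty$, contradicting boundedness above by $0$. Hence $\bm y \geq 0$ componentwise, and since $\bm y \neq \bm 0$ we conclude $\bm y \in K^* \cap \R_+^m \setminus \set{\bm 0}$, as required.

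I do not expect any real obstacle here; the proof is essentially a textbook separation argument. The only subtlety is making sure the separating hyperplane theorem applies (which it does because $C$ is open, convex, nonempty, and disjoint from the convex set $K$), and correctly extracting nonnegativity of $\bm y$ from the one-sided boundedness of $\bm y \cdot \bm z$ on the open negative orthant.
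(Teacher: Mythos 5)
Your proof is correct, but it takes a different route from the paper's. You separate $K$ from the open negative orthant $C=(-\infty,0)^m$ directly via the hyperplane separation theorem, pin down $\alpha=0$ using $\bm 0\in K$ and $\bm z\to\bm 0$ in $C$, and then extract componentwise nonnegativity of $\bm y$ from the unboundedness of $C$ in each coordinate direction. The paper instead stays within the duality calculus it sets up in its appendix: it enlarges $K$ to $L=\cone(K\cup\set{\bm e_1,\ldots,\bm e_m})$, checks that $L$ still misses the open negative orthant, argues $L^*\neq\set{0}$ by biduality (if $L^*=\set{0}$ then $\cl{L}=L^{**}=\R^m$, contradicting disjointness from an open set), and reads off $y_i\geq 0$ from $\bm e_i\in L$. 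The two arguments rest on the same geometric fact (separation), but yours invokes it directly and is the more elementary, self-contained textbook argument, while the paper's reuses the $K^{**}=\cl{\con(K)}$ machinery it already needs elsewhere and cleverly encodes the nonnegativity constraint by adjoining the basis vectors to the cone. Your handling of the degenerate cases ($K=\emptyset$, and $\bm 0\in K$ otherwise) and of the sign of $\alpha$ is careful and correct; no gaps.
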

In other words, if a convex cone $K$ satisfies following property
  \begin{align}
      \forall \bm x \in K, &&\bigvee_{j\in [m]} x_j \geq & 0  \label{eq:property:m}
  \end{align}
  then there exists $\bm y$ s.t. $y_j \geq 0$ forall $j\in [m]$, and
  $\bm x \cdot \bm y \geq 0$ forall $\bm x \in K$.

\begin{proof}
  Let $\bm e_1, \ldots, \bm e_m$ be the canonical basis of $\R^m$,
  i.e.  $(\bm e_j)_i = \delta_{ij}$, and let
  $L = \cone(K \cup \set{\bm e_1, \ldots, \bm e_m})$.  We claim that
  $L$ also satisfies \eqref{eq:property:m}.  Indeed, every
  $\bm x' \in L$ has the form
  $\bm x' = \bm x + \sum_i \theta_i \bm e_i$ with $\bm x \in K$ and
  $\theta_i \geq 0$, $i=1,m$.  If $x'_j < 0$ for all $j$, then
  $x_j < 0$ for all $j$, which is a contradiction because $K$
  satisfies property \eqref{eq:property:m}.  Thus, $L$ is a convex
  cone, and disjoint from the strictly negative quadrant
  $(-\infty,0)^m$; since the latter is an open set, it is also
  disjoint from $\cl{L}$.  We claim that $L^* \neq \set{0}$.  Indeed,
  otherwise $L^{**} = \set{0}^* = \R^m$, but $L^{**}= \cl{L}$ is
  disjoint from $(-\infty,0)^m$, which is a contradiction.  Therefore
  there exists $\bm y \in L^*$ s.t. $\bm y \neq 0$.  Since
  $\bm e_i \in L$, we have $y_i = \bm y \cdot \bm e_i \geq 0$,
  i.e. $\bm y \in \R_+^m$.  Since $K \subseteq L$ we have
  $L^* \subseteq K^*$, hence $y \in K^*$ proving the lemma.
\end{proof}

We prove Theorem~\ref{th:lambdas:real} by generalizing it to arbitrary
convex cones.

\begin{theorem}\label{th:lambdas:real:cone} Let $S \subseteq \R^m$ be any
  convex cone, and $\bm y_1, \ldots, \bm y_k \in \R^m$.  Then the
  following two properties are equivalent:
  \begin{align}
    \forall \bm x \in S: & \max(\bm x \cdot \bm y_1, \ldots, \bm x \cdot  \bm y_k) \geq 0 \label{eq:lambda:1:cone}\\
    \exists \lambda_1\geq 0, \ldots, \lambda_k \geq 0, \sum_i \lambda_i=1,
    \forall \bm x \in S: & \sum_{i \in [k]} \lambda_i \bm x \cdot \bm y_i \geq 0 \label{eq:lambda:2:cone}
  \end{align}
\end{theorem}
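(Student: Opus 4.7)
The plan is to reduce Theorem~\ref{th:lambdas:real:cone} directly to Lemma~\ref{lemma:lambda:real} by passing from $S \subseteq \R^m$ to its linear image in $\R^k$. The implication \eqref{eq:lambda:2:cone} $\Rightarrow$ \eqref{eq:lambda:1:cone} is trivial: if some nonnegative weighted average of the quantities $\bm x \cdot \bm y_i$ is $\geq 0$, then at least one of them must be $\geq 0$, so their maximum is $\geq 0$. Thus the whole content is in showing \eqref{eq:lambda:1:cone} $\Rightarrow$ \eqref{eq:lambda:2:cone}.

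First I would define the linear map $\Phi : \R^m \to \R^k$ by $\Phi(\bm x) \defeq (\bm x \cdot \bm y_1, \ldots, \bm x \cdot \bm y_k)$, and let $K \defeq \Phi(S) \subseteq \R^k$. Since $S$ is a convex cone and $\Phi$ is linear, $K$ is a convex cone in $\R^k$. The assumption \eqref{eq:lambda:1:cone} translates verbatim into the statement that no element of $K$ has all coordinates strictly negative, i.e.\ $K \cap (-\infty,0)^k = \emptyset$.

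Next, I would apply Lemma~\ref{lemma:lambda:real} to $K$ (with the ambient dimension being $k$ instead of $m$). This yields a nonzero vector $\bm \lambda = (\lambda_1, \ldots, \lambda_k) \in K^* \cap \R_+^k$. Unpacking the definition, $\lambda_i \geq 0$ for every $i \in [k]$, and $\bm \lambda \cdot \bm u \geq 0$ for every $\bm u \in K$. Substituting $\bm u = \Phi(\bm x)$ for arbitrary $\bm x \in S$ gives $\sum_{i \in [k]} \lambda_i \bm x \cdot \bm y_i \geq 0$. Finally, since $\bm \lambda \neq \bm 0$ and all $\lambda_i \geq 0$, we have $\sum_i \lambda_i > 0$, so we may rescale $\bm \lambda$ by $1/\sum_i \lambda_i$ to obtain coefficients summing to $1$, without affecting the inequality. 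This establishes \eqref{eq:lambda:2:cone}.

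There is essentially no obstacle left once Lemma~\ref{lemma:lambda:real} is available, since that lemma already encapsulates the crucial separation argument (the use of the double dual and the fact that a convex cone avoiding an open convex set is separable from it). The only thing to be careful about is the rescaling step: it requires $\bm \lambda \neq \bm 0$, which is exactly the content of the lemma's conclusion $K^* \cap \R_+^k \neq \set{\bm 0}$. Theorem~\ref{th:lambdas:real} then follows as the special case $S = \aentropic$ (which is a convex cone), combined with the observation that the antecedents $\bigwedge_i \bm c_i \cdot \bm h \geq 0$ can be absorbed by replacing $S$ with the closed convex subcone $S \cap \bigcap_i \bm c_i^*$.
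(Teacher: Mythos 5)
Your proof is correct and follows essentially the same route as the paper's: both map $S$ into $\R^k$ via $\bm x \mapsto (\bm x \cdot \bm y_1, \ldots, \bm x \cdot \bm y_k)$, apply Lemma~\ref{lemma:lambda:real} to the image cone to extract a nonzero $\bm \lambda \in K^* \cap \R_+^k$, and normalize. No substantive differences.
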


Theorem~\ref{th:lambdas:real} is the special case when
$S \defeq \aentropic \cap \bigcap_{i \in [k]} \bm c_i^*$.

\begin{proof} (of Theorem~\ref{th:lambdas:real:cone}).  We prove only
  the implication from \eqref{eq:lambda:1:cone} to
  \eqref{eq:lambda:2:cone}; the other direction is immediate.  Define:
\begin{align*}
  K \defeq & \setof{(\bm x \cdot \bm y_1, \ldots, \bm x \cdot \bm y_k)}{\bm x \in S}
\end{align*}
Since $S$ is a convex cone, it follows that $K$ is also a convex cone,
hence we can apply Lemma~\ref{lemma:lambda:real} to $K$.  Condition
\eqref{eq:lambda:1} of the theorem states:
\begin{align*}
  \forall \bm z \in K: & \bigvee_{j \in [m]} \bm z_i \geq 0
\end{align*}
Lemma~\ref{lemma:lambda:real} implies that there exists
$\bm \lambda \in \R^k_+ \cap K^*$ s.t. $\bm \lambda \neq 0$:
\begin{align*}
  \forall \bm z \in K: & \bm \lambda \cdot \bm z \geq 0
\end{align*}
Expanding the definition of $K$, this condition becomes:
\begin{align}
  \forall \bm x  \in  S: & \sum_{i \in [k]} \lambda_i \bm x \cdot \bm y_i \geq 0 \label{eq:lambda:2:again}
\end{align}
We can assume w.l.o.g. that $\sum_i \lambda_i = 1$ (otherwise we
normalize it by dividing by $\sum_i \lambda_i$), and this proves
\eqref{eq:lambda:2:cone}.
\end{proof}

\section{Proof of Theorem~\ref{th:almost:recognizability:pi2}}

\label{app:almost:recognizability:pi2}

To prove the theorem, we need to establish the following
(separating-hyperplane-type) lemma:

\begin{lemma} \label{lemma:separating:hyperplane} Let
    $\bm h \in \R^{2^n}$ and suppose $\bm h \not\in \aentropic$.  Then
    there exists an information theoretic inequality with integral
    coefficients that is not satisfied by $\bm h$.  In other words,
    $\exists \bm c \in \Z^{2^n}$, such that
    $\forall \bm h_0 \in \entropic, \bm c \cdot \bm h_0 \geq 0$, and
    $\bm c \cdot \bm h < 0$.
\end{lemma}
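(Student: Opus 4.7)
The plan is to apply the separating hyperplane theorem for closed convex cones to get a separating vector $\bm c \in \R^{2^n}$, and then perturb it slightly to obtain a rational (hence, after scaling, integral) separator. Recall from the background on cones that $\aentropic$ is a closed convex cone (see~\cite{yeung2012book}).

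\textbf{Step 1: apply the separating hyperplane theorem.} Since $\bm h \notin \aentropic$ and $\aentropic$ is a closed convex cone, standard convex analysis yields a real vector $\bm c^{(0)} \in (\aentropic)^*$ with $\bm c^{(0)} \cdot \bm h < 0$. Equivalently, $\bm c^{(0)} \cdot \bm h_0 \geq 0$ for every $\bm h_0 \in \aentropic$, and strict on $\bm h$.

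\textbf{Step 2: push $\bm c^{(0)}$ into the interior of $(\aentropic)^*$.} Because $\aentropic \subseteq \R^{2^n}_+$, the cone $\aentropic$ is pointed (contains no line), and by the standard duality fact recalled in the background, its dual $(\aentropic)^*$ has non-empty topological interior. Pick any $\bm c^{(\mathrm{int})}$ in that interior; a concrete choice is a strictly positive combination of the Shannon-type coefficient vectors, which are rational and lie in $(\aentropic)^*$. For every $\varepsilon > 0$ the vector $\bm c^{(\varepsilon)} \defeq \bm c^{(0)} + \varepsilon \bm c^{(\mathrm{int})}$ lies in the interior of $(\aentropic)^*$, since adding a boundary (or interior) point of a convex cone to an interior point keeps one in the interior. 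By continuity, for $\varepsilon$ small enough we still have $\bm c^{(\varepsilon)} \cdot \bm h < 0$.

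\textbf{Step 3: round to a rational vector.} Fix such an $\varepsilon$ and set $\bm c^{(\mathrm{int,h})} \defeq \bm c^{(\varepsilon)}$. Since $\bm c^{(\mathrm{int,h})}$ lies in the interior of $(\aentropic)^*$ and also in the open half-space $\{\bm u : \bm u \cdot \bm h < 0\}$, it has an open neighborhood $U \subseteq (\aentropic)^*$ on which $\bm u \cdot \bm h < 0$. The set $\Q^{2^n}$ is dense in $\R^{2^n}$, so $U$ contains a rational vector $\bm c^{(\Q)}$. Clearing denominators, i.e., multiplying by a common positive integer, gives $\bm c \in \Z^{2^n}$ with $\bm c \in (\aentropic)^*$ and $\bm c \cdot \bm h < 0$. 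Since validity over $\entropic$ is equivalent to validity over $\aentropic$ by continuity of entropy, $\bm c \cdot \bm h_0 \geq 0$ for all $\bm h_0 \in \entropic$, completing the proof.

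\textbf{Main obstacle.} The only nontrivial point is Step 2: the rational (in particular, integral) separator cannot be obtained by naive rounding of $\bm c^{(0)}$ because $\bm c^{(0)}$ may lie on the boundary of $(\aentropic)^*$, where an arbitrary small perturbation may leave $(\aentropic)^*$. Showing that $(\aentropic)^*$ has non-empty interior, which is where pointedness of $\aentropic$ (coming from $\aentropic \subseteq \R^{2^n}_+$) is used, is precisely what provides the slack needed to round.
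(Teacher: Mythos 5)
Your proof is correct, and it reaches the conclusion by a genuinely different route from the paper. The paper first invokes Studen{\'y}'s regularity machinery: every pointed closed convex cone $L$ is \emph{regular} (its plane $L\cap(-L)$ is $\set{0}$), and for regular cones $\Q^m$ is dense in the dual $L^*$ (\cite[Prop.~3]{DBLP:journals/kybernetika/Studeny93}); since $\aentropic\subseteq\R^{2^n}_+$ is pointed, rational vectors are dense in $(\aentropic)^*$, and one rounds the real separator to a nearby rational one in $(\aentropic)^*$. You instead give a self-contained proof of exactly the density statement you need at the one point where you need it: pointedness of $\aentropic$ gives $(\aentropic)^*$ nonempty interior, you nudge the separator into that interior while preserving $\bm c\cdot\bm h<0$, and then rationality comes for free from density of $\Q^{2^n}$ in $\R^{2^n}$. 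Both arguments pivot on the same geometric fact (pointedness of $\aentropic$), so the difference is one of packaging: yours avoids the external citation and makes the "why can we round" step transparent, at the cost of verifying the standard duality fact that pointedness of a closed cone gives the dual nonempty interior (which the paper's appendix also records as background). One small remark: your "concrete choice" of an interior point as a strictly positive combination of Shannon coefficient vectors is fine but needs the observation that if all elemental Shannon expressions vanish on an almost-entropic $\bm h$ then $\bm h=\bm 0$ (e.g.\ because $h([n])$ is a positive combination of elemental terms and monotonicity then forces all coordinates to $0$); alternatively you can drop the concrete choice entirely, since nonemptiness of the interior is all you use.
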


To prove the lemma, we review some background, following
Studen{\'{y}}~\cite{DBLP:journals/kybernetika/Studeny93}.

\begin{definition}
    Given a cone $L$, define its {\em plane} as
    $pl(L) \defeq L \cap (-L)$.
\end{definition}


\begin{fact} \cite[Fact 9]{DBLP:journals/kybernetika/Studeny93} If $L$
    is a closed, convex cone, then $pl(L)$ is a vector space.
\end{fact}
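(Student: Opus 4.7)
The plan is to verify that $pl(L) := L \cap (-L)$ satisfies the three conditions for being a linear subspace of the ambient Euclidean space: it contains the zero vector, it is closed under vector addition, and it is closed under arbitrary (possibly negative) scalar multiplication. The key observation is that a convex cone $L$ is by definition closed both under addition (if $\bm x, \bm y \in L$ then $\frac{1}{2}(\bm x + \bm y) \in L$ by convexity, and hence $\bm x + \bm y \in L$ by applying the cone property with factor $2$) and under multiplication by any $\theta \geq 0$.

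First, I would note $\bm 0 \in pl(L)$, since $\bm 0 = 0 \cdot \bm x \in L$ for any $\bm x \in L$ (cone property) and $-\bm 0 = \bm 0 \in L$ as well. Second, for closure under addition, given $\bm x, \bm y \in pl(L)$, the memberships $\bm x, \bm y \in L$ yield $\bm x + \bm y \in L$, while $-\bm x, -\bm y \in L$ similarly yield $-(\bm x + \bm y) = (-\bm x) + (-\bm y) \in L$, so $\bm x + \bm y \in L \cap (-L) = pl(L)$. Third, for scalar multiplication by $\theta \in \R$: if $\theta \geq 0$ then $\theta \bm x \in L$ by the cone property, and $-\theta \bm x = \theta(-\bm x) \in L$ since $-\bm x \in L$; while if $\theta < 0$, writing $\theta \bm x = (-\theta)(-\bm x)$ with $-\theta > 0$ and $-\bm x \in L$ shows $\theta \bm x \in L$, and $-\theta \bm x = (-\theta)\bm x \in L$ analogously. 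In every case $\theta \bm x \in pl(L)$.

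There is essentially no obstacle here; the argument is a direct unwinding of the definitions of convex cone and of $pl(L)$. In particular the proof does not invoke the topological closedness of $L$ --- that hypothesis is used elsewhere in the surrounding development (e.g.\ in duality identities such as $L^{**} = L$ and in Studen\'y's lemma on cones generated by adjoining a vector), but for this fact the convex cone structure alone suffices.
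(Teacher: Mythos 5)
Your verification is correct: $L\cap(-L)$ contains $\bm 0$, is closed under addition (via convexity plus scaling by $2$), and is closed under arbitrary real scalar multiplication, and you are right that topological closedness of $L$ plays no role here. The paper itself does not prove this fact but only cites it from Studen\'y, and your argument is the standard elementary one; the only implicit assumption worth flagging is that $L$ is nonempty, so that $\bm 0\in L$ follows from the cone property.
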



\begin{definition}~\cite[Def. 6]{DBLP:journals/kybernetika/Studeny93} A
    closed convex cone $L \subseteq \R^m$ is called {\em regular} if
    $\Q$ is dense in $pl(L)$, i.e.  $\cl{\Q^m \cap pl(L)} = pl(L)$.
\end{definition}

\begin{lemma}~\cite[Prop. 3]{DBLP:journals/kybernetika/Studeny93} \label{lemma:regular}
    If $L$ is regular, then $Q^m$ is dense in $L^*$.
\end{lemma}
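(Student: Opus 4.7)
The plan is to reduce the density statement to the rationality of the ambient linear span of $L^*$ and then conclude via a standard relative-interior approximation. The key identity I will prove first is
\[
\vspan(L^*) = pl(L)^\perp.
\]
This comes from the symmetric statement $pl(L) = \vspan(L^*)^\perp$: for $x \in pl(L)$ we have $\pm x \in L$, so every $y \in L^*$ satisfies $x \cdot y \geq 0$ and $-x \cdot y \geq 0$, forcing $x \perp L^*$; conversely if $x \perp L^*$, then both $x$ and $-x$ satisfy the defining condition of $L^{**}$, and since $L$ is a closed convex cone $L = L^{**}$, so $x \in pl(L)$. Taking orthogonal complements on both sides (and remembering that any linear span in $\R^m$ is automatically a closed subspace) yields the displayed identity.

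Second, I will use regularity to upgrade this to rationality. The hypothesis $\cl{\Q^m \cap pl(L)} = pl(L)$, together with the fact that $\Q^m \cap pl(L)$ is closed under $\Q$-linear combinations, forces $pl(L)$ to be spanned over $\R$ by rational vectors, i.e., to be a rational subspace. Its orthogonal complement $pl(L)^\perp = \vspan(L^*)$ is then the kernel of a rational matrix, hence is itself a rational subspace, and in particular $\Q^m \cap \vspan(L^*)$ is dense in $\vspan(L^*)$.

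Third, I would run a standard relative-interior argument. The relative interior $\textnormal{relint}(L^*)$ is nonempty and open in $\vspan(L^*)$, so by the previous paragraph it contains a rational point $y_0$. For any $y \in L^*$ and $\varepsilon > 0$, the convex combination $y_t := (1-t) y + t y_0$ lies in $\textnormal{relint}(L^*)$ for every $t \in (0,1]$ and satisfies $\|y_t - y\| < \varepsilon/2$ for $t$ sufficiently small. Since $y_t$ is in the relative interior, some neighborhood of $y_t$ inside $\vspan(L^*)$ lies in $L^*$, and density of $\Q^m$ in $\vspan(L^*)$ supplies a point $y' \in \Q^m \cap L^*$ within $\varepsilon/2$ of $y_t$, hence within $\varepsilon$ of $y$, which proves density.

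The only nonroutine step is the duality $\vspan(L^*) = pl(L)^\perp$, which is where closedness of $L$ (through $L^{**} = L$) is actually used; after that the proof is a routine rational-density and relative-interior argument, and I do not anticipate any real obstacle beyond setting up the duality cleanly.
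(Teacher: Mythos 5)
The paper does not actually prove this lemma --- it is imported verbatim from Studen\'y \cite[Prop.~3]{DBLP:journals/kybernetika/Studeny93} --- so there is no in-paper argument to compare against. Your proof is correct and is the natural route to the statement: the duality $pl(L) = \vspan(L^*)^\perp$ (using $L^{**}=L$), the observation that a subspace with dense rational points has a rational basis and hence a rational orthogonal complement, and the relative-interior perturbation $y_t=(1-t)y+ty_0$ are all sound, including in the degenerate case $L^*=\set{0}$.
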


These lemmas allows us to prove the separating-plane lemma:

\begin{proof} (of Lemma~\ref{lemma:separating:hyperplane}) We use
    Lemma~\ref{lemma:regular}.  Every pointed cone $L$ is regular,
    because $pl(L)=\set{0}$.  The cone of almost entropic functions
    $K = \aentropic$ is pointed, because for every $\bm h \in K$ and
    every $X \subseteq [n]$, $h(X) \geq 0$; thus, if
    $\bm h_1, \bm h_2 \in \aentropic$ and $\bm h_1 + \bm h_2 = 0$, then
    $\bm h_1=\bm h_2=0$.  Therefore $K$ is regular, hence $\Q^{2^n}$ is
    dense in $K^*$.  Consider a vector that is not almost-entropic,
    $h \not\in K$.  Then there exists an information inequality
    $c_0 \in K^*$ s.t. $c_0 \cdot \bm h < 0$.  Since $\Q$ is dense in
    $K^*$, there exists a sequence in $\Q^{2^n} \cap K^*$ that converges
    to $c$, and therefore there exists some $c_1 \in \Q^{2^n} \cap K^*$
    s.t. $c_1 \cdot \bm h < 0$.  Multiply $c_1$ with the product of all
    $2^n$ denominators of its components, and obtain a vector
    $c \in \Z^{2^n} \cap K^*$ s.t. $c \cdot \bm h < 0$.  This completes
    the proof.
\end{proof}

And, finally, we use it to place the almost-entropic-recognizability
problem in $\Pi^0_2$:

\begin{proof} (of Theorem~\ref{th:almost:recognizability:pi2})
    Given $\bm c \in \Z^{2^n}$, let $P(\bm c)$ be the following
    predicate:
    \begin{align*}
    P(\bm c): && \forall \bm h\in \R^{2^n} (\bm h \in   \aentropic \Rightarrow \bm c \cdot \bm h \geq 0)
    \end{align*}
    Thus, $P(\bm c)$ checks if $\bm c$ defines a valid information
    inequality, and this, by Theorem~\ref{th:co:re}, is in $\Pi^0_1$.
    By Lemma~\ref{lemma:separating:hyperplane}, the almost-entropic
    recognizability problem is as follows.  Given $\bm h \in \R^{2^n}$
    (represented as in Def.~\ref{def:recognizability}):
    \begin{align*}
    \bm h \in \aentropic  \Leftrightarrow \forall \bm c (P(\bm c) \rightarrow \bm c \cdot \bm h \geq 0)
    \Leftrightarrow \forall \bm c (\neg P(\bm c) \vee \bm c \cdot \bm h \geq 0)
    \end{align*}
    which places the problem in $\Pi^0_2$ because $\neg P(\bm c)$ is in
    $\Sigma^0_1$.
\end{proof}

}

\end{document}